\newtheorem{theorem}{Theorem}[section]\crefname{theorem}{theorem}{theorems}
\newtheorem{lemma}[theorem]{Lemma}\crefname{lemma}{lemma}{lemmas}
\newtheorem{proposition}[theorem]{Proposition}\crefname{proposition}{proposition}{propositions}
\newtheorem{corollary}[theorem]{Corollary}\crefname{corollary}{corollary}{corollaries}
\newtheorem{definition}[theorem]{Definition}\crefname{definition}{definition}{definitions}
\theoremstyle{definition}
\newtheorem{example}[theorem]{Example}\crefname{example}{example}{examples}
\numberwithin{equation}{section}
\newtheorem*{maintheorem}{\Cref{thm:main}}
\newtheorem*{hornproposition}{\Cref{prp:horn condition}}
\newcommand{\hornstatement}{For any Ressayre element $H$,
  \[ \kappa_H :=
  \tr \pi(-)\big|_{\calH(H<0)} - \tr \ad(-)\big|_{\mathfrak n_-(H<0)} =
  \!\!\!\!\sum_{\omega \in \Omega : (\omega,H) < 0} \!\!\!\!\omega - \!\!\!\!\sum_{\alpha \in R_{G,-} : (\alpha,H) < 0} \!\!\!\!\alpha \]
  is an element of the moment cone $C_{K(H=0)}(\calH(H=0))$.
  In fact, $\delta_H$ is a lowest weight vector of weight $-\kappa_H$.}
\def\id{\mathbbm{1}}
\newcommand{\CC}{{\mathbb C}}
\newcommand{\PP}{{\mathbb P}}
\newcommand{\RR}{{\mathbb R}}
\newcommand{\ZZ}{{\mathbb Z}}
\newcommand{\calO}{{\mathcal O}}
\newcommand{\calH}{{M}}   
\newcommand{\calK}{M_0}   
\newcommand{\gl}{\mathfrak{gl}}
\newcommand{\norm}[1]{\lVert#1\rVert}
\newcommand{\abs}[1]{\lvert#1\rvert}
\newcommand{\sympl}{\omega_\calH}
\renewcommand{\Re}{\operatorname{Re}}
\renewcommand{\Im}{\operatorname{Im}}
\DeclareMathOperator{\Gr}{Gr}
\DeclareMathOperator{\spec}{spec}
\DeclareMathOperator{\cone}{cone}
\DeclareMathOperator{\GL}{GL}
\DeclareMathOperator{\SL}{SL}
\DeclareMathOperator{\U}{U}
\DeclareMathOperator{\SU}{SU}
\DeclareMathOperator{\Sym}{Sym}
\DeclareMathOperator{\tr}{Tr}
\DeclareMathOperator{\Span}{span}
\DeclareMathOperator{\Ad}{Ad}
\DeclareMathOperator{\ad}{ad}
\newcommand{\refappendix}{\hyperref[appendix]{appendix}}
\begin{document}

\author{Mich\`ele Vergne and Michael Walter}
\title[Moment Cones of Finite-Dimensional Representations]{Inequalities for Moment Cones of Finite-Dimensional Representations}
\begin{abstract}
  We give a general description of the moment cone associated with an arbitrary finite-dimensional unitary representation of a compact, connected Lie group in terms of finitely many linear inequalities.
  Our method is based on combining differential-geometric arguments with a variant of Ressayre's notion of a dominant pair.
  As applications, we obtain
  generalizations of Horn's inequalities to arbitrary representations,
  new inequalities for the one-body quantum marginal problem in physics,
  which concerns the asymptotic support of the Kronecker coefficients of the symmetric group,
  and a geometric interpretation of the Howe-Lee-Tan-Willenbring invariants for the tensor product algebra.
\end{abstract}
\maketitle

\section{Introduction}

The study of the convexity properties of the moment map and of its image has a long history in mathematics, starting from Schur and Horn's observation that the diagonal entries of a $d \times d$ Hermitian matrix are always contained in the convex hull of the spectrum \cite{Schur23,Horn54}; cf.~\cite{Kostant73}.
More generally, Atiyah and Guillemin-Sternberg have shown that, for any torus action on a compact, connected Hamiltonian manifold, the image of the moment map is a convex polytope, called the \emph{moment polytope} \cite{Atiyah82, GuilleminSternberg82}. It can be explicitly computed as the convex hull of the images of torus fixed points.
For non-abelian groups, the image of the moment map is no longer convex.
Instead, Kirwan's celebrated convexity theorem asserts that, for the action of an arbitrary compact, connected Lie group on a compact, connected Hamiltonian manifold, the intersection of the image of the moment map with a positive Weyl chamber is a convex polytope \cite{Kirwan84a}.
This is the correct generalization of the moment polytope to non-abelian group actions.
Mumford has given a different proof in the case of projective subvarieties, which relies on a concrete description in terms of the decomposition of the homogeneous coordinate ring into irreducible representations \cite{NessMumford84}; cf.~\cite{GuilleminSternberg82a,Brion87}.
However, no effective general methods are known for the computation of these polytopes (in contrast to the case of torus actions).

In this article, we are concerned with the moment polytope of the complex projective space $\PP(\calH)$ associated with a unitary representation $\Pi$ of a compact, connected Lie group $K$ on a finite-dimensional Hilbert space $\calH$.
Equivalently, we shall study the moment cone of the latter, which is the convex cone spanned by the moment polytope (see \cref{sec:notation} below).
Even in this geometrically straightforward situation, computing the moment cone can be remarkably challenging.
For example, the classical problem of characterizing the eigenvalues $\lambda, \mu, \nu$ of triples of $d \times d$ Hermitian matrices that add up to zero, $A + B + C = 0$, is equivalent to determining the moment cone associated with the representation of $K = \SU(d) \times \SU(d) \times \SU(d)$ on $\calH = \mathfrak{gl}(d) \oplus \mathfrak{gl}(d)$ given by $(g,h,k) \cdot (a, b) = (g a k^{-1}, h b k^{-1})$, also known as the \emph{Horn cone}.
While this observation is mathematically straightforward, a concrete description of the Horn cone in terms of finitely many linear inequalities has only been achieved in \cite{Klyachko98} and was an important step towards the proof of Horn's conjecture \cite{Horn62,KnutsonTao99}.
Similarly, the \emph{one-body quantum marginal problem} in quantum physics amounts to the determination of the moment cone for the action of $K = \SU(a) \times \SU(b) \times \SU(c)$ on $\calH = \CC^a \otimes \CC^b \otimes \CC^c$ by tensor products \cite{Klyachko04,DaftuarHayden04}.
The underlying difficulty in both problems is in the representation theory rather than the geometry:
For example, it is well-known that the Horn cone describes the asymptotic support of the Littlewood-Richardson coefficients \cite{Lidskii82}, while the solution to the one-body quantum marginal problem is given by the asymptotic support of the Kronecker coefficients of the symmetric group \cite{ChristandlMitchison06,Klyachko04,ChristandlHarrowMitchison07,ChristandlDoranKousidisWalter2014}.
Just as for the representation-theoretic coefficients \cite{Murnaghan55,Littlewood58}, the former problem can be understood as special case of the latter \cite{Walter14,ChristandlSahinogluWalter12}.
Both problems can also be phrased in terms of projections of coadjoint orbits \cite{Heckman82,BerensteinSjamaar00,Ressayre10}.
We remark that, locally, moment cones of arbitrary Hamiltonian $K$-manifolds can be described in terms of unitary representations \cite{Marle85,GSnormalform84} and therefore fall into the scenario discussed in this paper.

\medskip

Our main contribution is a clean algebraic description of the moment cone in terms of finitely many linear inequalities.
To state the result, let $\mathfrak t$ denote the Lie algebra of a maximal torus of $T$, $i \mathfrak t^*_+$ a positive Weyl chamber, and $\pi$ the (complex) Lie algebra representation induced by $\Pi$ (see \cref{sec:notation} below for precise definitions). We shall say that $H$ is a \emph{Ressayre element} if
1) the hyperplane $(-,H) = 0$ is spanned by weights of $\calH$ and
2) there exists a vector $\psi \in \calH$ annihilated by $\pi(H)$ such that the ``tangent map'' at $\psi$,
\begin{equation}
\label{eq:tangent map intro}
  \mathfrak n_-(H < 0) \rightarrow \calH(H < 0), \quad X \mapsto \pi(X) \psi,
\end{equation}
is an isomorphism; here, $\calH(H < 0)$ denotes the direct sum of all negative eigenspaces of $\pi(H)$ and $\mathfrak n_-(H < 0)$ the sum of all root spaces for negative roots $\alpha$ such that $(\alpha, H) < 0$ (\cref{dfn:ressayre}).
Note that there are only finitely many Ressayre elements for any given representation $\pi$.
In \cref{sec:facets}, we will prove the following result:

\begin{theorem}
\label{thm:main}
  The moment cone for the $K$-action on $\calH$ is given by
  \begin{equation*}
    C_K = \{ \lambda \in i \mathfrak t^*_+ : (H, \lambda) \geq 0 \text{ for all Ressayre elements $H$} \}.
  \end{equation*}
\end{theorem}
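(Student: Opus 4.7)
The plan is to prove both inclusions separately after first reducing the question to a finite combinatorial problem. By Kirwan's convexity theorem, combined with Mumford's description via the homogeneous coordinate ring, $C_K$ is a rational convex polyhedral cone in $i\mathfrak t^*_+$, hence is cut out by finitely many facet inequalities of the form $(H_j, \lambda) \geq 0$. The theorem therefore splits into two assertions: \emph{validity}, that every Ressayre inequality $(H,\lambda) \geq 0$ holds on $C_K$, and \emph{completeness}, that the inner normal of every facet of $C_K$ is a Ressayre element.

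For the validity direction, fix a Ressayre element $H$ with witness vector $\psi$. Because $\pi(H)\psi = 0$, decomposing $\psi$ into weight vectors forces the participating weights $\omega$ to satisfy $(\omega, H) = 0$, so the $T$-moment value lies on the hyperplane $(H, -) = 0$ and in particular $(H, \mu_K(\psi)) = 0$. To promote equality at this single point to a one-sided inequality on the entire cone, I would use the tangent-map isomorphism to show that $\psi$ is semistable for the parabolic $P(H)$ acting on $\PP(\calH)$ through the natural linearization twisted by the character defined by $H$: the isomorphism means the unipotent radical of $P(H)$ acts freely and fills out exactly $\calH(H<0)$, so $\psi$ cannot be destabilized by the one-parameter subgroup generated by $H$. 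Converting this Geometric Invariant Theory semistability back to moment-map language through Kempf-Ness then yields $(H, \lambda) \geq 0$ for every $\lambda \in C_K$.

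For the completeness direction, let $F$ be a facet of $C_K$ with inner normal $H$. The first Ressayre condition, that the hyperplane $(-,H) = 0$ be spanned by weights of $\calH$, is immediate because $C_K$ is contained in the convex hull of the weights and $F$ is a genuine facet. For the second condition, pick $\lambda$ in the relative interior of $F$ and a vector $\psi$ with $\mu_K(\psi) = \lambda$; replacing $\psi$ by a limit along the Kempf-Ness flow associated with $H$, or equivalently along the negative gradient flow of a suitable shifted norm-square of the moment map, produces a representative annihilated by $\pi(H)$ whose moment image stays on the facet. The Marle-Guillemin-Sternberg local normal form of the moment map at this $\psi$, combined with the fact that $H$ is extremal and that $\mu_K$ fails to surject exactly in the $H$-direction there, then forces the map $X \mapsto \pi(X)\psi$ from $\mathfrak n_-(H<0)$ to $\calH(H<0)$ to be both injective and surjective, giving the required isomorphism.

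The principal obstacle is in the completeness direction: extracting from the purely local differential-geometric behavior of the moment map near a facet point the precise algebraic statement that a specific map between two finite-dimensional subspaces is an \emph{isomorphism}, including the nontrivial dimension equality $\dim \mathfrak n_-(H<0) = \dim \calH(H<0)$. Arranging simultaneously that the chosen $\psi$ lies exactly in $\ker \pi(H)$ and that the dimensions match is the delicate point where Ressayre's notion of dominant pair, originally developed for projections of coadjoint orbits, has to be imported and adapted to the present general representation-theoretic setting.
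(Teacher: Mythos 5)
Your high-level split into ``validity'' and ``completeness'' matches the paper, but both halves have genuine gaps, and the admissibility step contains an outright error.

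For validity, your GIT argument is not correct as stated. Showing that the witness $\psi$ is semistable for the one-parameter subgroup generated by $H$ is a statement about that single orbit; it does not by itself force $(H,\lambda) \geq 0$ for \emph{every} $\lambda$ in the moment cone, which is what is needed. The paper's Proposition~\ref{prp:partial converse} takes a different and essentially algebraic route: surjectivity of the tangent map implies that $N_- \times \calH(H \geq 0) \to \calH$ has surjective differential at $(1,\psi)$, hence is dominant, so every $N_-$-invariant polynomial vanishing on $\calH(H \geq 0)$ vanishes identically. Combined with Mumford's description~\eqref{eq:mumford description}, any lowest weight vector $P$ of weight $-\lambda$ restricts nontrivially to $\calH(H\geq 0)$, and since every $H$-weight of $R(\calH(H\geq 0))$ is $\leq 0$, one reads off $-(\lambda,H) \leq 0$. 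If you want to pursue the GIT/Kempf-Ness reformulation, you would have to argue about the full set of semistable points and the image of the moment map, not about a single $\psi$; the dominance step is precisely what is doing that work in the paper.

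For completeness, your claim that admissibility is ``immediate because $C_K$ is contained in the convex hull of the weights and $F$ is a genuine facet'' is false: a non-trivial facet of $C_K$ typically lies in the \emph{interior} of $\cone\Omega$, so the containment gives nothing. The paper's Lemma~\ref{lem:admissible} instead observes that each interior point of a non-trivial facet, being a boundary value of the moment map inside $i\mathfrak t^*_{>0}$, is a critical value of $(\mu_T,H)$; Lemma~\ref{lem:abelian critical} then says such points lie on a codimension-one hyperplane spanned by weights, and a whole relative-interior neighborhood of the facet being so constrained pins down a single admissible hyperplane. Finally, you correctly identify that the hard step is proving $\dim \mathfrak n_-(H<0) = \dim \calH(H<0)$ together with the isomorphism, but you do not supply an argument. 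The paper's Proposition~\ref{prp:non-trivial walls give isomorphism} obtains injectivity by a Cartan-decomposition estimate (Lemma~\ref{lem:negative unipotent injective}, using $\mu_K(\psi)\in i\mathfrak t^*_{>0}$), and the dimension equality via the index of the Hessian of $(\mu_K,H)$ at $\psi$: the index equals $2\dim_\CC \calH(H<0)$ on one hand, and after splitting off the symplectic cross-section direction equals twice the number of positive roots $\alpha$ with $(\alpha,H)>0$ on the other (Lemmas~\ref{lem:cross section hessian positive semidefinite}--\ref{lem:index hessian roots}, Corollary~\ref{cor:what will be called trace condition}). Neither the local normal form nor ``fails to surject exactly in the $H$-direction'' produces this count without that Hessian analysis.
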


To prove \cref{thm:main}, we show that any facet corresponds to a Ressayre element by studying the moment map, which is quadratic, locally up to second order.
To show that, conversely, any Ressayre element determines a valid inequality, we use Mumford's description of the moment cone as in \cite{Ressayre10}.
Indeed, our notion of a Ressayre element is closely related to Ressayre's notion of a dominant pair.
We note that the description in \cref{thm:main} will typically contain redundancies.
Thus our result differs from \cite{Brion99}, where the non-trivial or ``general'' faces of the moment polytope are characterized precisely at the cost of requiring a recursive strategy for their computation.

If $H$ is a Ressayre element then the domain and codomain of the tangent map \eqref{eq:tangent map intro} necessarily have the same dimension, i.e., $\dim \mathfrak n_-(H < 0) = \dim \calH(H < 0)$. We call this the \emph{trace condition}.
Moreover, note that the determinant $\delta_H$ of \eqref{eq:tangent map intro} (with respect to any fixed pair of bases) is a non-zero polynomial in $\psi \in \calH(H=0)$, the zero eigenspace of $\pi(H)$.
In fact, $\delta_H$ is a canonical (up to scalar multiplication) lowest weight vector for the action of $K(H=0)$, the centralizer of the torus generated by $H$, on the space of polynomials on $\calH(H=0)$.
This implies the following result in \cref{sec:trace and horn condition}, which we call the \emph{Horn condition}:

\begin{proposition}[Horn condition]
\label{prp:horn condition}
\hornstatement
\end{proposition}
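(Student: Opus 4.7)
The plan is to recognize $\delta_H$ as a nonzero $K(H=0)$-semi-invariant polynomial on $\calH(H=0)$ whose character on the maximal torus is $t \mapsto t^{\kappa_H}$, and then invoke the standard Mumford--Brion dictionary for the moment cone: a weight $\lambda$ lies in $C_{K(H=0)}(\calH(H=0))$ as soon as $\Sym^N(\calH(H=0)^*)$ contains a lowest weight vector of weight $-\lambda$ for some $N \geq 1$. First I would check that $\delta_H$ is a well-defined nonzero polynomial on $\calH(H=0)$: the map $(X, \psi) \mapsto \pi(X)\psi$ is bilinear; it lands in $\calH(H<0)$ because $\pi(H)\pi(X)\psi = \alpha(H)\pi(X)\psi$ when $X \in \mathfrak g_\alpha$ and $\psi$ is annihilated by $\pi(H)$; and the Ressayre hypothesis provides a $\psi$ at which it is invertible. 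Hence the determinant $\delta_H$ in fixed bases is a polynomial of degree $r := \dim \calH(H<0) = \dim \mathfrak n_-(H<0)$ in $\psi$, not identically zero.

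The key step is to compute how $\delta_H$ transforms under $K(H=0)$. For $g \in K(H=0)$, $\pi(g)$ preserves $\calH(H<0)$ and $\Ad(g)$ preserves $\mathfrak n_-(H<0)$ because $g$ commutes with $H$. Combined with the intertwining relation $\pi(g)\pi(X) = \pi(\Ad(g)X)\pi(g)$, this gives $T_{\pi(g)\psi} = \pi(g) \circ T_\psi \circ \Ad(g^{-1})$ (restricted to the appropriate summands), where $T_\psi(X) := \pi(X)\psi$. Taking determinants,
\begin{equation*}
  \delta_H(\pi(g)\psi) = \chi(g)\, \delta_H(\psi), \qquad \chi(g) := \frac{\det \pi(g)\big|_{\calH(H<0)}}{\det \Ad(g)\big|_{\mathfrak n_-(H<0)}},
\end{equation*}
so $\delta_H$ is a $K(H=0)$-semi-invariant with character $\chi$. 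Evaluating on $t \in T$ and using that the determinant of a torus element on a representation is the product of its weights with multiplicity yields $\chi(t) = t^{\kappa_H}$, with $\kappa_H$ exactly as in the statement.

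Under the standard action $(g \cdot P)(\psi) := P(\pi(g)^{-1}\psi)$ on polynomials, $\delta_H$ therefore spans a one-dimensional $K(H=0)$-subrepresentation of $\Sym^r(\calH(H=0)^*)$ of weight $-\kappa_H$. Since any one-dimensional representation of the reductive group $K(H=0)$ is trivial on the commutator subalgebra and in particular on every root vector, $\delta_H$ is simultaneously a highest and a lowest weight vector; this proves the ``in fact'' clause and incidentally forces $\kappa_H$ to be orthogonal to every root of $K(H=0)$. Applying the Mumford--Brion description of $C_{K(H=0)}(\calH(H=0))$ with $\lambda = \kappa_H$ and $N = r$ then gives $\kappa_H \in C_{K(H=0)}(\calH(H=0))$.

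The main obstacle is the equivariance computation at the start of the second paragraph: one must track the restrictions to $\calH(H<0)$ and $\mathfrak n_-(H<0)$ carefully and correctly mix $\pi(g)$ on the target with $\Ad(g^{-1})$ on the source, and only then can one conclude $\delta_H$ is a global semi-invariant rather than merely a $T$-weight vector. Once semi-invariance is in place, the remaining steps---the identification $\chi|_T = t^{\kappa_H}$, the passage from a one-dimensional subrepresentation to a lowest weight vector, and the appeal to Mumford--Brion---are essentially bookkeeping.
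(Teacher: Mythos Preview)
Your approach is essentially the paper's: show that $\delta_H$ is a lowest weight vector of weight $-\kappa_H$ and then invoke the Mumford description \eqref{eq:mumford description}. The exterior-power packaging in the paper and your determinant formula are the same computation.

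There is, however, a genuine gap in your equivariance step. You assert that for $g\in K(H{=}0)$ the adjoint action preserves $\mathfrak n_-(H{<}0)$ ``because $g$ commutes with $H$''. Commuting with $H$ only guarantees that $\Ad(g)$ preserves the $\ad(H)$-eigenspaces, hence $\mathfrak g(H{<}0)$; it does \emph{not} guarantee that the sub\-space $\mathfrak n_-(H{<}0)\subset\mathfrak g(H{<}0)$ is preserved. For instance, take $G=\GL(4)$ and $H=\mathrm{diag}(1,-1,1,-1)$: the Levi $K(H{=}0)$ contains the Weyl element swapping indices $1$ and $3$, and this sends $E_{\alpha_{21}}\in\mathfrak n_-(H{<}0)$ to $E_{\alpha_{23}}\in\mathfrak n_+(H{<}0)$. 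Consequently the relation $T_{\pi(g)\psi}=\pi(g)\,T_\psi\,\Ad(g^{-1})$ fails as a map $\mathfrak n_-(H{<}0)\to\calH(H{<}0)$ for such $g$, your formula for $\chi(g)$ is undefined, and the conclusions that $\delta_H$ spans a one-dimensional $K(H{=}0)$-subrepresentation and that $\kappa_H$ is orthogonal to all roots of $K(H{=}0)$ are unjustified (the latter is in fact false for the sum $\sum_{\alpha\in R_{G,-}:(\alpha,H)<0}\alpha$ in this example).

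The fix is to run your computation only for $g$ in the negative Borel $B_-(H{=}0)=T_\CC\cdot N_-(H{=}0)$ of $G(H{=}0)$. For $X\in\mathfrak g_\beta$ with $\beta\in R_{G,-}$, $(\beta,H)=0$, and $Y\in\mathfrak g_\alpha$ with $\alpha\in R_{G,-}$, $(\alpha,H)<0$, the bracket $[X,Y]$ lies in $\mathfrak g_{\alpha+\beta}$ with $\alpha+\beta$ again a negative root (or not a root) and $(\alpha+\beta,H)<0$; together with the obvious torus stability this shows $\Ad\big(B_-(H{=}0)\big)$ preserves $\mathfrak n_-(H{<}0)$. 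Your determinant identity then gives $N_-(H{=}0)$-invariance (unipotent, so the determinant factors are $1$) and $T$-weight $-\kappa_H$, which is exactly the definition of a lowest weight vector. The appeal to \eqref{eq:mumford description} finishes the proof as you intended.
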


Here, $\Omega$ denotes the set of weights of $\calH$ and $R_{G,-}$ denotes the set of negative roots of $G$.
By applying \cref{thm:main} to the lower-dimensional scenario, the Horn condition can be explicitly stated as a set of linear inequalities that have to be satisfied by $\kappa_H$.

Tangent maps and their determinants have been studied in great generality by Ressayre and Belkale from an algebro-geometric point of view \cite{Ressayre10,Ressayre10a,Belkale10}, and our \cref{thm:main,prp:horn condition} can also be deduced from their results.
In these works, the non-vanishing of the determinant has been in turn been translated into a cohomological condition.
In contrast, we propose that, for the purposes of computing moment cones explicitly, it can be useful to instead test the non-vanishing of the determinant directly---either symbolically, which is easily possible in small dimensions, or numerically by using fast algorithms for polynomial identity testing, as we discuss in \cref{sec:computation} below.
The challenge imposed by higher dimensions is rather in finding additional \emph{a priori} constraints on the facets of the moment cone.

We apply our approach to the two paradigmatic examples mentioned above.
In \cref{sec:qmp}, we use it to give a new solution to the one-body quantum marginal problem (equivalently, a description of the asymptotic support of the Kronecker coefficients of the symmetric group for triples of Young diagrams with bounded numbers of rows).
In particular, our method allows us to compute the moment polytope for $\CC^4 \otimes \CC^4 \otimes \CC^4$, which was out of reach with previous methods.
In \cref{sec:horn}, we revisit the classical Horn inequalities from the perspective of our work.
We find that they are instances of the trace and Horn conditions derived above, which justifies our terminology.
In other words, our trace condition and Horn condition can be understood as generalizations of Horn's inequalities to arbitrary representations.
We also give a geometric explanation of the invariants constructed in \cite{HoweTanWillenbring2005,HoweLee2007}: they can be obtained directly from the determinant polynomial $\delta_H$.
In both cases, we find that the description of \cref{thm:main} can be readily refined to the mathematical scenario at hand.
We hope that our method similarly provides a useful tool for the study of moment cones in other mathematical applications.
A preliminary version of this work has appeared in the thesis of MW \cite{Walter14}.


\section{Moment Cones of Finite-Dimensional Representations}
\label{sec:notation}
Let $K$ be a compact, connected Lie group with Lie algebra $\mathfrak k$.
Let $G$ be its complexification, which is a connected reductive algebraic group $G$ with Lie algebra $\mathfrak g = \mathfrak k \oplus i \mathfrak k$ and denote its exponential map by $\exp \colon \mathfrak g \rightarrow G$.
Let $T \subseteq K$ be a maximal torus with Lie algebra $\mathfrak t \subseteq \mathfrak k$ and $W_K = N_K(T)/T$ the Weyl group; we write $r_K := \dim T$ for the rank of $K$.
The complexification of $T$ is a maximal abelian subgroup $T_\CC \subseteq G$ with Lie algebra $\mathfrak h = \mathfrak t \oplus i \mathfrak t$.
In view of $\mathfrak u(1) = i \RR$ we consider the weight lattice $P_G$ as a subset of $i \mathfrak t^* \cong \{ \omega \in \mathfrak h^* : \omega(i \mathfrak t) \in \RR \}$.
We denote the set of roots by $R_G \subseteq P_G$ and write the root space decomposition as $\mathfrak g = \mathfrak h \oplus \bigoplus_{\alpha \in R_G} \mathfrak g_{\alpha}$.
For each root $\alpha$, we can find basis vectors $E_\alpha \in \mathfrak g_\alpha$ and ``co-roots'' $H_\alpha \in i \mathfrak t$ that satisfy the commutation relations of $\mathfrak{sl}(2)$,
\begin{equation}
\label{eq:onebody/sl2}
  [E_\alpha, E_{-\alpha}] = H_\alpha
  \quad\text{and}\quad
  [H_\alpha, E_{\pm\alpha}] = \pm 2 E_{\pm\alpha}.
\end{equation}
Let $R_{G,+}$ denote a choice of positive roots. Correspondingly, we get the negative roots $R_{G,-} = -R_{G,+}$, nilpotent Lie algebras $\mathfrak n_\pm = \bigoplus_{\alpha \in R_{G,\pm}} \mathfrak g_\alpha$, maximal unipotent subgroups $N_\pm \subseteq G$ and a positive Weyl chamber, which we take to be the convex polyhedral cone $i \mathfrak t^*_+ = \{ \lambda \in i \mathfrak t^* : (H_\alpha, \lambda) \geq 0 ~ \forall \alpha \in R_{G,+} \} \subseteq i \mathfrak t^*$ with relative interior $i \mathfrak t^*_{>0} = \{ \lambda \in i \mathfrak t^* : (H_\alpha, \lambda) > 0 ~ \forall \alpha \in R_{G,+} \}$.
We may consider $\mathfrak h^* \subseteq \mathfrak g^*$ and $i \mathfrak t^* \subseteq i \mathfrak k^*$ by extending each functional by zero on the root spaces $\mathfrak g_\alpha$ and $\mathfrak k_\alpha = (\mathfrak g_\alpha \oplus \mathfrak g_{-\alpha}) \cap \mathfrak k$, respectively.
The finite-dimensional irreducible representations of $G$ are parametrized by their highest weight $\lambda$ in $P_{G,+} = P_G \cap i \mathfrak t^*_+$; the corresponding representation will be denoted by $V_{G,\lambda}$.
Then $V_{G,\lambda}^* \cong V_{G,\lambda^*}$ for $\lambda^* := -w_0 \lambda$; here and throughout this paper, $w_0$ denotes the longest Weyl group element, which exchanges the positive and negative roots (the group will always be clear from the context).

For $\SU(d)$, whose complexification is $\SL(d)$, we will always use the maximal torus $T(d)$ that consists of the diagonal unitary matrices of unit determinant. Its Lie algebra will be denoted by $\mathfrak t(d)$. We will use as positive roots the $\alpha_{i,j}(H) := H_{ii} - H_{jj}$ with $i < j$, and abbreviate the (positive) roots by $R_d$ and $R_{d,+}$, respectively. Finally, we write $\calO^d_\lambda$ for the coadjoint $\SU(d)$ orbit through a highest weight $\lambda$.

\medskip

Now let $\Pi \colon G \rightarrow \GL(\calH)$ a representation of $G$ on a finite-dimensional Hilbert space $\calH$ that is equipped with a $K$-invariant Hermitian inner product $\braket{-|-}$, which we take to be antilinear in the \emph{first} argument.
We will oftentimes use Dirac's notation $\braket{\phi | A | \psi} := \braket{\phi | A \psi}$ for $\phi, \psi \in \calH$ and $A \in \mathfrak{gl}(\calH)$.
We denote by $\pi \colon \mathfrak g \rightarrow \gl(\calH)$ the induced Lie algebra representation, by $\Omega \subseteq P_G$ the set of weights and write the weight space decomposition as $\calH = \bigoplus_{\omega \in \Omega} \calH_\omega$.
The $K$-action on $\calH$ admits a canonical (up to conventions) \emph{moment map}, defined by
\begin{equation}
\label{eq:moment map}
  \mu_K \colon \calH \rightarrow i \mathfrak k^*, \quad (\mu_K(\psi), X) = \braket{\psi | \pi(X) | \psi}
\end{equation}
for all $\psi \in \calH$ and $X \in i \mathfrak k$. 
Here and in the following, we write $(\varphi, X) = \varphi(X)$ for the duality pairing. 
The map $\mu_K$ is indeed a moment map in the sense of symplectic geometry: it is $K$-invariant and satisfies the basic identity
\begin{equation}
\label{eq:moment map identity}
  d(\mu_K, iX)\big|_\psi = \sympl(\pi(X)\psi, -)
\end{equation}
for all $X \in \mathfrak k$, where $\sympl(\phi, \psi) = 2 \Im \braket{\phi | \psi}$ denotes the symplectic form that we will use for $\calH$.
The \emph{moment cone} then is defined as intersection of the moment map image with the positive Weyl chamber,
\[ C_K = \mu_K(\calH) \cap i \mathfrak t^*_+ = \{ \lambda \in i \mathfrak t^*_+ : \lambda \in \mu_K(\calH) \}. \]

The representation of $G$ also induces an action on the complex projective space $\PP(\calH)$, $g \cdot [\psi] = [g \psi]$, with corresponding moment map
$\widetilde\mu_K \colon \PP(\calH) \rightarrow i \mathfrak k^*, (\widetilde\mu_K([\psi]), X) = {\braket{\psi | \pi(X) | \psi}} / {\braket{\psi | \psi}}$
for all $[\psi] \in \PP(\calH)$ and $X \in i\mathfrak k$ \cite{Kirwan84,NessMumford84}.
Kirwan's convexity theorem \cite{Kirwan84a} (or Mumford's version \cite{NessMumford84}) asserts that the \emph{moment polytope}
$\Delta_K = \widetilde\mu_K(\PP(\calH)) \cap i \mathfrak t^*_+$
is indeed a convex polytope.
It is plain from the definitions that $C_K = \RR_+ \Delta_K$.
Therefore, the moment cone $C_K$ is a polyhedral cone, and we have the following representation-theoretic description \cite{NessMumford84,Brion87},
\begin{equation}
\label{eq:mumford description}
  C_K = \cone \{ \lambda \in P_{G,+} : V_{G,\lambda}^* \subseteq R(\calH) \},
\end{equation}
where $R(\calH) = \Sym(\calH)^*$ denotes the space of polynomials on $\calH$.
A representation $V_{G,\lambda}^*$ occurs in $R(\calH)$ if and only if there exists a polynomial $P \in R(\calH)$ of weight $-\lambda$ that is invariant under the action of the lower unipotent subgroup $N_-$.
We shall call such a polynomial a \emph{lowest weight vector} in $R(\calH)$.

Conversely, suppose that the action of $G$ contains the multiplication by scalars, generated by some $J \in i \mathfrak t$ such that $\pi(J) = \id$ (this can always be arranged for by adding a $\CC^*$-factor to $G$).
Then the moment polytope can be recovered from the moment cone by
\[ \Delta_K = \{ \lambda \in C_K : (\lambda, J) = 1 \} \]
(and $C_K$ is a pointed cone with base $\Delta_K$).
Thus we may equivalently study moment cones of representations or moment polytopes of the corresponding projective spaces.

Throughout this paper, we shall always work with moment cones (but see \cite{Walter14} for an exposition from the projective point of view).
We shall moreover assume that the moment cone is of maximal dimension, i.e., $\dim C_K = \dim i \mathfrak t^*$.
This is the case if and only if there exists a vector with finite stabilizer.

\section{Facets of the Moment Cone}
\label{sec:facets}

Like any polyhedral cone, the moment cone can be described by finitely many linear inequalities $(-,H) \geq 0$.
Since we have assumed that $C_K$ is of maximal dimension, its facets are of codimension one in $i \mathfrak t^*$ and their inward-pointing normal vectors may be identified with the defining linear inequalities $(-,H) \geq 0$ of the moment cone.
Since the moment cone is obtained by intersecting $\mu_K(\calH)$ with the positive Weyl chamber, which itself is a maximal-dimensional polyhedral cone, some of the facets of $C_K$ can be subsets of facets of $i \mathfrak t^*_+$, and we shall call those the trivial facets of the moment cone:

\begin{definition}
  A facet of the moment cone is \emph{trivial} if it corresponds to an inequality of the form $(-, H_\alpha) \geq 0$ for some positive root $\alpha \in R_{G,+}$.
  Otherwise, the facet is called \emph{non-trivial}.
\end{definition}

Non-trivial facets have also been called ``general'' in the literature \cite{Brion99}.
We record the following straightforward observation:

\begin{lemma}
\label{lem:non-trivial facets}
  Any non-trivial facet of $C_K$ meets the relative interior $i \mathfrak t^*_{>0}$ of the positive Weyl chamber.
\end{lemma}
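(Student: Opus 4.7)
The plan is to argue by contradiction. Suppose $F$ is a non-trivial facet of $C_K$ with defining inequality $(-,H) \geq 0$, and assume $F \cap i\mathfrak t^*_{>0} = \emptyset$. Since $C_K \subseteq i\mathfrak t^*_+$, this means $F$ is entirely contained in the boundary
\[
  \partial(i\mathfrak t^*_+) \;=\; \bigcup_{\alpha \in R_{G,+}} W_\alpha, \qquad W_\alpha := \{ \lambda \in i\mathfrak t^* : (\lambda, H_\alpha) = 0 \},
\]
which is a finite union of codimension-one hyperplanes.

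The first step is to show that $F$ must in fact lie in a single wall $W_\alpha$. This is a standard convexity fact: since $F$ is convex with nonempty relative interior in its affine hull $\mathrm{aff}(F)$, and is covered by the finitely many relatively closed subsets $F \cap W_\alpha$, one of them must contain a relative interior point of $F$; the corresponding $W_\alpha$ then contains an open subset of $\mathrm{aff}(F)$, hence contains $\mathrm{aff}(F)$ itself. So $F \subseteq W_\alpha$ for some positive root $\alpha$.

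Next, because $F$ is a facet of the full-dimensional cone $C_K$, its affine hull $\mathrm{aff}(F)$ is the supporting hyperplane $\{(-,H) = 0\}$, which has codimension one. Since this codimension-one hyperplane is contained in the codimension-one hyperplane $W_\alpha$, the two coincide, and therefore $H$ is a nonzero scalar multiple of $H_\alpha$. As both inequalities $(-, H) \geq 0$ and $(-, H_\alpha) \geq 0$ are satisfied on $C_K$ (the latter on all of $i\mathfrak t^*_+$), the scalar is positive, so $F$ is a trivial facet, contradicting the assumption. The only step that takes any real thought is the convexity argument ensuring that $F$ lies in a single wall rather than being spread across several; everything else is a dimension count.
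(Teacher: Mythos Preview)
Your proof is correct and is exactly the natural argument. The paper itself treats this lemma as a ``straightforward observation'' and does not supply a proof, so there is nothing to compare against beyond noting that your contrapositive argument---a convex codimension-one facet contained in the finite union of Weyl walls must lie in a single wall, hence its defining functional is a positive multiple of some $H_\alpha$---is precisely the standard way to make the observation rigorous.
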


\subsection{Admissibility}
\label{subsec:torus action}

We first consider the moment map $\mu_T$, defined as in \eqref{eq:moment map} for the action of the maximal torus $T \subseteq K$.
Let $\calH = \bigoplus_{\omega \in \Omega} \calH_\omega$ be the decomposition of $\calH$ into weight spaces and let $\psi \in \calH$ be a vector decomposed accordingly as $\psi = \sum_\omega \psi_\omega v_\omega$.
Then $\mu_T$ has the following concrete description:
\begin{equation}
  \label{eq:abelian conic combination}
  \mu_T(\psi) = \sum_\omega \abs{\psi_\omega}^2 \omega
\end{equation}
Observe that $\mu_T(\psi)$ is a conic combination of weights.
It follows that the ``abelian'' moment cone $C_T$ of $\calH$ is precisely equal to the conical hull of the set of weights; it is maximal-dimensional since it contains $C_K$.
More generally, if $\Omega' \subseteq \Omega$ is a subset of weights and $\calH_{\Omega'} := \bigoplus_{\omega \in \Omega'} \calH_\omega$ then $C_T(\calH_{\Omega'}) = \cone \Omega'$.
For the next lemma recall that a critical point of a smooth map $f \colon M \rightarrow M'$ is a point $m \in M$ where the differential $df\big|_m$ is not surjective; a critical value is the image $f(m)$ of a critical point. 
Then the following is well-known (e.g., \cite[Remark 4.14]{ChristandlDoranKousidisWalter2014}):

\begin{lemma}
  \label{lem:abelian critical}
  The set of critical values of $\mu_T$ is equal to the union of the codimension-one conic hulls of subsets of weights.
\end{lemma}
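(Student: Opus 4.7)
The plan is to compute $d\mu_T|_\psi$ directly from the formula \eqref{eq:abelian conic combination}, identify its image with the real span of the weights supporting $\psi$, and then recast non-surjectivity of $d\mu_T|_\psi$ as containment of $\mu_T(\psi)$ in a codimension-one conic hull of weights.

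Concretely, I would write $\psi = \sum_\omega \psi_\omega$ with $\psi_\omega \in \calH_\omega$ and expand $\norm{\psi_\omega + \delta\psi_\omega}^2$ to first order, which yields
\[ d\mu_T|_\psi(\delta\psi) = \sum_\omega 2 \Re \braket{\psi_\omega | \delta\psi_\omega}\,\omega. \]
For each $\omega$, the real coefficient $2\Re\braket{\psi_\omega | \delta\psi_\omega}$ sweeps out all of $\RR$ as $\delta\psi_\omega$ varies over $\calH_\omega$ precisely when $\psi_\omega \neq 0$, and is identically zero otherwise. Hence the image of $d\mu_T|_\psi$ equals $\Span_\RR \Omega_\psi$, with $\Omega_\psi := \{\omega \in \Omega : \psi_\omega \neq 0\}$, and $\psi$ is a critical point if and only if $\Span_\RR \Omega_\psi$ is a proper subspace of $i \mathfrak t^*$.

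For the inclusion $(\subseteq)$, given a critical $\psi$ I would greedily enlarge the proper subspace $V := \Span_\RR \Omega_\psi$ by adjoining weights $\omega \in \Omega$ lying outside the current subspace. Such weights exist as long as the current subspace has codimension at least two, because the standing assumption that $C_K$ has maximal dimension forces $\Omega$ to span $i \mathfrak t^*$. This process terminates at a codimension-one subspace $L$, which by construction satisfies $L = \Span_\RR(\Omega \cap L)$, and the chain $\mu_T(\psi) \in \cone(\Omega_\psi) \subseteq \cone(\Omega \cap L)$ exhibits $\mu_T(\psi)$ inside a codimension-one conic hull of weights. For the reverse inclusion, given $\lambda = \sum_{\omega \in \Omega'} c_\omega \omega$ with $c_\omega \geq 0$ and $\Span_\RR \Omega'$ of codimension one, the vector $\psi := \sum_{\omega \in \Omega'} \sqrt{c_\omega}\,v_\omega$ for any choice of unit vectors $v_\omega \in \calH_\omega$ satisfies $\mu_T(\psi) = \lambda$ with support inside $\Omega'$, hence is critical.

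The whole argument is routine linear algebra over $\RR$ once the differential is computed; the only step requiring any care is the greedy extension in the forward direction, which is where the maximal-dimensionality hypothesis on $C_K$ is needed.
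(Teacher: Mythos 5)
Your proof is correct and takes essentially the same route as the paper's: both reduce criticality of $\psi$ to the condition that $\Span_\RR \Omega_\psi$ be a proper subspace, then pass to a codimension-one hyperplane spanned by weights in each direction. The only cosmetic difference is that you compute $d\mu_T|_\psi$ and its image directly from the weight-space formula, whereas the paper dualizes via the moment map identity \eqref{eq:moment map identity} and nondegeneracy of $\sympl$ (citing Guillemin--Sternberg) to characterize the kernel $\{X \in \mathfrak t : \pi(X)\psi = 0\}$ instead; these are equivalent.
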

\begin{proof}
  Let $\psi \in \calH$ with weight decomposition $\psi = \sum_\omega \psi_\omega v_\omega$.
  By \eqref{eq:abelian conic combination}, $\mu_T(\psi)$ is a conic combination of weights in $\Omega_\psi := \{ \omega \in \Omega : \psi_\omega \neq 0 \}$.
  By the moment map property \eqref{eq:moment map identity}
  and non-degeneracy of the symplectic form, $\psi$ is a critical point if and only if there exists $0 \neq X \in \mathfrak t$ such that $\pi(X) \psi = 0$ \cite[Lemma 2.1]{GuilleminSternberg82}, i.e., if and only if $\omega(X) = 0$ for all $\omega \in \Omega_\psi$.
  It follows that $\psi$ is a critical point if and only if the conic hull of $\Omega_\psi$ is of positive codimension.

  In particular, any critical value is contained in a codimension-one conic hull of weights, since we may always add additional weights.
  Conversely, if $\Omega' \subseteq \Omega$ is a subset of weights that spans a conic hull of codimension one then $C_T(\calH_{\Omega'}) = \cone \Omega'$ consists of critical values.
\end{proof}

We now derive a basic necessary condition that cuts down the defining inequalities of the moment cone to a finite set of candidates (cf.~\cite[Remark 3.6]{ChristandlDoranKousidisWalter2014}).

\begin{definition}
  An element $H \in i \mathfrak t$ is called \emph{admissible} if the linear hyperplane $(-, H) = 0$ is spanned by a subset of weights in $\Omega$.
\end{definition}

The notion of admissibility is invariant under the action of the Weyl group $W_K$.

\begin{lemma}[Admissibility condition]
\label{lem:admissible}
  Let $(-,H) \geq 0$ be an inequality corresponding to a non-trivial facet of the moment cone.
  Then $H$ is admissible.
\end{lemma}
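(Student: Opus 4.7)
The plan is to reduce to the abelian situation already handled by \cref{lem:abelian critical}. By \cref{lem:non-trivial facets}, the facet $F$ corresponding to the inequality $(-,H)\geq 0$ meets $i\mathfrak t^*_{>0}$, and the same is then true of its relative interior $\mathrm{relint}(F)$ since both are open in the hyperplane $\{(\cdot,H)=0\}$. I will pick an open neighbourhood $U\subseteq \mathrm{relint}(F)\cap i\mathfrak t^*_{>0}$ and argue that every $\lambda\in U$ is a critical value of $\mu_T$; combining this with \cref{lem:abelian critical} will identify the affine hull of $F$ as one of finitely many codimension-one conic hulls of weights.

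For a fixed $\lambda\in U$, choose $\psi\in\mu_K^{-1}(\lambda)$. Since $\lambda$ is on the facet $F$ of $C_K$ but also in the relative interior of the Weyl chamber $i\mathfrak t^*_+$, the point $\lambda$ must in fact lie on the boundary of the full image $\mu_K(\calH)\subseteq i\mathfrak k^*$: any small perturbation of $\lambda$ inside $i\mathfrak k^*$ remains in the open Weyl chamber region, so if $\lambda$ were an interior point of $\mu_K(\calH)$ it would also be an interior point of $C_K$, contradicting $\lambda\in F$. Consequently $\psi$ is a critical point of $\mu_K$, and the moment map identity \eqref{eq:moment map identity} together with non-degeneracy of $\sympl$ yields a nonzero $X\in\mathfrak k$ with $\pi(X)\psi=0$.

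The crux is to promote this $X$ from $\mathfrak k$ to $\mathfrak t$. Since $\pi(X)\psi=0$, the element $X$ lies in the Lie algebra of $\mathrm{Stab}_K(\psi)$, hence (by $K$-equivariance of $\mu_K$) in the Lie algebra of the coadjoint stabilizer $\mathrm{Stab}_K(\lambda)$. But $\lambda\in i\mathfrak t^*_{>0}$ lies in the open Weyl chamber, whose coadjoint stabilizer is exactly $T$. So $X\in\mathfrak t$, and the criterion used in the proof of \cref{lem:abelian critical} shows that $\psi$ is a critical point of $\mu_T$. Therefore $\lambda=\mu_T(\psi)$ is a critical value of $\mu_T$, and by \cref{lem:abelian critical} it belongs to $\cone\Omega'$ for some subset $\Omega'\subseteq\Omega$ spanning a codimension-one hyperplane.

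To finish, I apply this to every $\lambda\in U$. There are only finitely many subsets of $\Omega$, hence only finitely many hyperplanes $H_i^\perp$ arising as spans of codimension-one subsets of weights, and we have shown $U\subseteq\bigcup_i H_i^\perp$. Since $U$ is a non-empty open subset of the affine hyperplane containing $F$, a standard dimension/Baire argument forces this hyperplane to coincide with some $H_i^\perp$; equivalently, $H^\perp$ is spanned by weights, which is exactly the admissibility of $H$. The only mildly delicate point I expect is the stabilizer step, where one must be careful that $X\in\mathfrak k$ (rather than in its complexification) and that the coadjoint stabilizer of a regular $\lambda$ really is $T$; both are standard but worth stating explicitly.
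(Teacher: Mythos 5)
Your proof is correct, and it takes a genuinely different route through the middle step than the paper does. The paper's proof asserts tersely that each point of the facet lying in $i\mathfrak t^*_{>0}$ is a critical value of the \emph{scalar} function $(\mu_K,H)=(\mu_T,H)$; this is most naturally justified by noting that $(\mu_K,H)$ attains a local minimum (value $0$) on the symplectic cross-section at any $\psi$ in the fiber, which forces $\pi(H)\psi=0$. You instead observe that such a $\lambda$ is a boundary point of the \emph{full} image $\mu_K(\calH)\subseteq i\mathfrak k^*$, hence a critical value of $\mu_K$ itself; this gives $\pi(X)\psi=0$ for some nonzero $X\in\mathfrak k$, and you then use equivariance together with the fact that the coadjoint stabilizer of the regular element $\lambda$ is exactly $T$ to force $X\in\mathfrak t$. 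Both routes land on $\lambda$ being a critical value of $\mu_T$, after which the Baire-type argument over the finitely many weight hyperplanes is the same. What your approach buys is that it avoids any implicit appeal to the symplectic cross-section theorem: the step ``boundary point of the image $\Rightarrow$ critical value'' is elementary, and the promotion from $\mathfrak k$ to $\mathfrak t$ via regularity of $\lambda$ is a clean, self-contained substitute. What you lose is the specific conclusion $\pi(H)\psi=0$ (you only get $\pi(X)\psi=0$ for \emph{some} $X\in\mathfrak t$ depending on $\psi$), but since the final target is merely criticality of $\mu_T$, this weaker statement suffices. One small caveat on the phrasing ``any small perturbation of $\lambda$ inside $i\mathfrak k^*$ remains in the open Weyl chamber region'': this reads as if perturbations stay inside $i\mathfrak t^*_{>0}$, which is false since $i\mathfrak t^*$ has positive codimension in $i\mathfrak k^*$; what you actually need (and what you do use in the next clause) is only that an open ball in $i\mathfrak k^*$ around $\lambda$ intersects $i\mathfrak t^*$ in a set relatively open in $i\mathfrak t^*$ and contained in $i\mathfrak t^*_{>0}$. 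Stating it that way would remove the ambiguity.
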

\begin{proof}
  By \cref{lem:non-trivial facets}, the intersection of $(-,H) = 0$ with the interior of the positive Weyl chamber $i \mathfrak t^*_{>0}$ is non-empty.
  Each point in this intersection is a critical value for $(\mu_K,H) = (\mu_T,H)$, hence of $\mu_T$, and therefore according to \cref{lem:abelian critical} contained in a linear hyperplane spanned by a subset of weights.
  Since this is true for all points in the intersection, which contains the relative interior of the facet, it follows that the facet is in fact contained in a single such hyperplane.
\end{proof}

\subsection{Description of the Moment Cone by Ressayre Elements}

Let us now fix an inequality $(-,H) \geq 0$ corresponding to a non-trivial facet $(-,H) = 0$ of the moment cone.
Let $\psi \in \calH$ be a preimage of a point $\mu_K(\psi) \in i \mathfrak t^*_{>0}$ on the facet $(-,H) = 0$.
In the proof of \cref{lem:admissible}, we have used that $\psi$ is a critical point of $(\mu_K, H)$ (equivalently, that $\pi(H) \psi = 0$) to gain information on the set of possible facets.
To study the function $(\mu_K, H)$ in the vicinity of such a critical point $\psi$ it is natural to consider the Hessian, which is the quadratic form
\begin{equation}
\label{eq:hessian}
  Q(V, V) = 2 \braket{V | \pi(H) | V}.
\end{equation}
For tangent vectors generated by the infinitesimal action of $X, Y \in \mathfrak k$, we have the formula
\begin{equation}
\label{eq:hessian explicit}
\begin{aligned}
  Q(\pi(X)\psi, \pi(Y)\psi)
  &= -2 \Re \braket{\psi | \pi(X) \pi(H) \pi(Y) | \psi} \\
  &= \braket{\psi | \pi([[H, X], Y]) | \psi}
  = (\mu_K(\psi), [[H, X], Y]),
\end{aligned}
\end{equation}
where we have used that $\pi(H) \psi = 0$.
%
%
We now decompose
\begin{equation}
\label{eq:weight space decomposition}
  \calH = \calH(H < 0) \oplus \calH(H = 0) \oplus \calH(H > 0),
\end{equation}
where $\calH(H < 0) = \bigoplus_{\omega : (\omega,H) < 0} \calH_\omega$ is the sum of the eigenspaces of the Hermitian operator $\pi(H)$ with eigenvalue less than $0$, etc.
Then it is plain from \eqref{eq:hessian} that the index of the Hessian $Q$, i.e., the dimension of a maximal subspace on which the quadratic form is negative definite, is equal to twice the complex dimension of $\calH(H<0)$.

\medskip

We will now deduce a second formula for the index by observing that the Hessian is necessarily positive semidefinite on the subspace of those tangent vectors $V$ that are mapped to $i \mathfrak t^*$ by the differential of the moment map.
To see this, 
consider a curve $\psi_t$ with $\psi_0 = \psi$, $\dot\psi_0 = V$ and $\mu_K(\psi_t) \in i \mathfrak t^*_{>0}$ for all $t \in (-\varepsilon,\varepsilon)$ (such a curve can always be constructed by using the symplectic cross section \cite[Theorem 26.7]{GuilleminSternberg84}).
Then, since $(\mu_K(\psi), H) = 0$ and $d(\mu_K,H)\big|_\psi \equiv 0$,
\[ (\mu_K(\psi_t), H) = \frac {t^2} 2 Q(V, V) + O(t^3). \]
But $(\mu_K(\psi_t), H) \geq 0$, since $\mu_K(\psi_t) \in C_K$ and $(-,H) \geq 0$ is a valid inequality for the moment cone.
Together, this shows that, indeed, $Q(V, V) \geq 0$.

The subspace of all such $V$ can be computed in a different way.
For this, let $\mathfrak r = \bigoplus_{\alpha \in R_{G,+}} \mathfrak k_\alpha$ denote the sum of the root spaces of the compact Lie algebra. 
It follows from the moment map property \eqref{eq:moment map identity} that
\[
  d\mu_K\big|_\psi(V) \in i \mathfrak t^*
  \;\Leftrightarrow\;
  0 = d(\mu_K, iR)\big|_\psi(V) = -\sympl(V, \pi(R)\psi) \quad (\forall R \in \mathfrak r).
\]
We thus obtain the following lemma.

\begin{lemma}
\label{lem:cross section hessian positive semidefinite}
  The Hessian is positive semidefinite on the symplectic complement
  \begin{align*}
    \calK^{\sympl}
    &= \{ V \in \calH : \sympl(V, W) = 0 ~ (\forall W \in \calK) \} \\
    &= \{ V \in \calH : d\mu_K\big|_\psi(V) \in i \mathfrak t^* \}
  \end{align*}
  of $\calK := \pi(\mathfrak r) \psi \subseteq \calH$.
\end{lemma}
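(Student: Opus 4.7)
The plan is to prove the two claims of the lemma separately: first the identification of the symplectic complement $\calK^\sympl$ with $\{V \in \calH : d\mu_K\big|_\psi(V) \in i\mathfrak t^*\}$, and second the positive semidefiniteness of the Hessian $Q$ on this common subspace.

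For the identification, I would use that $\mathfrak k = \mathfrak t \oplus \mathfrak r$ as real vector spaces, so an element of $i\mathfrak k^*$ lies in $i\mathfrak t^*$ precisely when it vanishes on $i\mathfrak r$. Applying the moment map identity \eqref{eq:moment map identity} to $X = R \in \mathfrak r$ gives $(d\mu_K\big|_\psi(V), iR) = \sympl(\pi(R)\psi, V)$, so the condition $d\mu_K\big|_\psi(V) \in i\mathfrak t^*$ is equivalent to $\sympl(W, V) = 0$ for every $W \in \pi(\mathfrak r)\psi = \calK$. By antisymmetry of $\sympl$ this is precisely the definition of $\calK^\sympl$, giving the desired equality.

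For the semidefiniteness, I would make rigorous the curve argument sketched in the paragraph preceding the lemma. Given $V \in \calK^\sympl$, the symplectic cross section theorem (\cite[Theorem 26.7]{GuilleminSternberg84}) produces a locally defined smooth symplectic submanifold $Y \subseteq \calH$ through $\psi$ with $\mu_K(Y) \subseteq i\mathfrak t^*_{>0}$ and $T_\psi Y = \calK^\sympl$. Choosing a smooth curve $\psi_t$ in $Y$ with $\psi_0 = \psi$ and $\dot\psi_0 = V$, the function $f(t) := (\mu_K(\psi_t), H) = \braket{\psi_t | \pi(H) | \psi_t}$ is nonnegative for all sufficiently small $t$, since $\mu_K(\psi_t) \in i\mathfrak t^*_{>0} \subseteq C_K$ and $(-,H) \geq 0$ is a defining inequality of $C_K$. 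Using $\pi(H)\psi = 0$, a direct Taylor expansion yields $f(0) = 0$, $f'(0) = 0$ and $f''(0) = 2\braket{V | \pi(H) | V} = Q(V,V)$, so $f(t) = \tfrac{t^2}{2} Q(V,V) + O(t^3)$, which forces $Q(V,V) \geq 0$.

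The main obstacle is the appeal to the symplectic cross section theorem: it is what turns the infinitesimal condition $d\mu_K\big|_\psi(V) \in i\mathfrak t^*$ into the existence of an actual curve whose moment map image stays in the Weyl chamber interior, and hence in $C_K$ where the inequality $(-,H) \geq 0$ may be exploited. Once such a curve is available, both the identification of the two descriptions of $\calK^\sympl$ and the Taylor expansion are essentially mechanical consequences of the moment map identity and of $\pi(H)\psi = 0$.
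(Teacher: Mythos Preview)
Your proposal is correct and follows essentially the same approach as the paper: the paper also derives the identification of $\calK^{\sympl}$ with $\{V : d\mu_K\big|_\psi(V)\in i\mathfrak t^*\}$ directly from the moment map identity \eqref{eq:moment map identity}, and proves positive semidefiniteness by the very same curve-and-Taylor-expansion argument, invoking the symplectic cross section theorem \cite[Theorem~26.7]{GuilleminSternberg84} to produce the curve. The only cosmetic difference is that the paper presents the semidefiniteness argument first and the identification second, whereas you reverse the order.
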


In fact, it is well-known that $\calK$ is a symplectic subspace of $\calH$ (see, e.g., \cite[Lemma~6.7]{GuilleminSternberg82}).
To see this, note that the restriction of the symplectic form to $\calK$ is given by
\[ \sympl(\pi(X) \psi, \pi(Y) \psi) = (\mu_K(\psi), [X, Y]) \]
and can therefore be identified with the Kirillov-Kostant-Souriau symplectic form on the coadjoint orbit through $\mu_K(\psi) \in i \mathfrak t^*_{>0}$.
As a consequence, we have the decomposition $\calH = \calK \oplus \calK^{\sympl}$.

\begin{lemma}
\label{lem:tangent space decomposition}
The Hessian is block-diagonal with respect to the decomposition $\calH = \calK \oplus \calK^{\sympl}$.
\end{lemma}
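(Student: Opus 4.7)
The plan is to verify block-diagonality directly by showing that the polarized Hessian
\[ Q(V, W) = 2\Re \braket{V | \pi(H) | W} \]
vanishes for all $V \in \calK = \pi(\mathfrak r)\psi$ and $W \in \calK^{\sympl}$. The inputs I would use are the critical-point identity $\pi(H)\psi = 0$, the bracket relation $[H,\mathfrak r] \subseteq i\mathfrak r$, and the relation $\sympl(\cdot,\cdot) = 2\Im \braket{\cdot|\cdot}$ between the symplectic form and the Hermitian inner product.

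First I would take $V = \pi(X)\psi$ with $X \in \mathfrak r$ and move $\pi(X)$ past $\pi(H)$. Since $\pi(X)$ is skew-Hermitian ($X \in \mathfrak k$) and $\pi(H)$ is Hermitian ($H \in i\mathfrak k$), the commutator identity $\pi(X)\pi(H) = -\pi([H,X]) + \pi(H)\pi(X)$ together with $\pi(H)\psi = 0$ yields
\[ Q(\pi(X)\psi, W) = 2\Re \braket{\psi | \pi([H,X]) | W}. \]

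Second I would verify $[H,\mathfrak r] \subseteq i\mathfrak r$. Taking Hilbert-space adjoints, $[H,X]^{*} = [X^{*}, H^{*}] = [-X, H] = [H, X]$, so $[H,X]$ is Hermitian and thus lies in $i\mathfrak k$; since $\ad(H)$ preserves each $\mathfrak g_\alpha \oplus \mathfrak g_{-\alpha}$, in fact $[H,X] \in i\mathfrak k_\alpha \subseteq i\mathfrak r$ whenever $X \in \mathfrak k_\alpha$. Hence I may write $[H,X] = iY$ with $Y \in \mathfrak r$, so that $\pi([H,X]) = i\pi(Y)$.

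Finally, using $\pi(Y)^{*} = -\pi(Y)$,
\[ \braket{\psi | i\pi(Y) | W} = -i \braket{\pi(Y)\psi | W}, \]
whose real part equals $\Im \braket{\pi(Y)\psi | W} = \tfrac{1}{2}\sympl(\pi(Y)\psi, W)$. Combining,
\[ Q(\pi(X)\psi, W) = \sympl(\pi(Y)\psi, W). \]
Since $Y \in \mathfrak r$, the vector $\pi(Y)\psi$ lies in $\calK$, and by definition of the symplectic complement, $\sympl(\pi(Y)\psi, W) = 0$ for $W \in \calK^{\sympl}$. This proves block-diagonality. The only slightly delicate step is the inclusion $[H,\mathfrak r] \subseteq i\mathfrak r$; the remainder is routine manipulation of Hermitian and symplectic pairings.
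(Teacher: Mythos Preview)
Your proof is correct and follows essentially the same approach as the paper's: both compute $Q(\pi(X)\psi, W) = 2\Re\braket{\psi \mid \pi([H,X]) \mid W}$ using $\pi(H)\psi = 0$, then rewrite $[H,X] = i[-iH,X]$ with $[-iH,X] \in \mathfrak r$ to convert the real part into $\sympl(\pi([-iH,X])\psi, W) = 0$. The only difference is notational---you name $Y := [-iH,X]$ explicitly---and your justification of $[H,\mathfrak r] \subseteq i\mathfrak r$ is slightly more detailed than the paper's one-line remark.
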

\begin{proof}


  For all $\pi(R) \psi \in \calK$ and $V \in \calK^{\sympl}$ we have that
  \begin{align*}
    Q(\pi(R) \psi, V)
    &= -2 \Re \braket{\psi | \pi(R) \pi(H) | V}
    = 2 \Re \braket{\psi | \pi([H,R]) | V} \\
    &= \sympl(\pi([-iH,R]) \psi, V)
    = 0,
  \end{align*}
  since $[-iH,R] \in \mathfrak r$ and therefore $\pi([-iH,R]) \psi \in \calK$.
\end{proof}

\begin{lemma}
\label{lem:torus complement injective}
  The tangent map $\mathfrak r \rightarrow \calH$, $R \mapsto \pi(R) \psi$ is injective.
\end{lemma}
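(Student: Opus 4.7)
The plan is to force any $R \in \mathfrak r$ with $\pi(R)\psi = 0$ into the infinitesimal stabilizer of $\mu_K(\psi)$ under the coadjoint action; since $\mu_K(\psi)$ was chosen in the open interior of the positive Weyl chamber, this stabilizer is just $\mathfrak t$, which meets $\mathfrak r$ trivially.

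Concretely, I would feed the hypothesis $\pi(R)\psi = 0$ into the Kirillov-Kostant-Souriau-type identity $\sympl(\pi(X)\psi, \pi(Y)\psi) = (\mu_K(\psi), [X, Y])$ already invoked in the paragraph preceding \cref{lem:tangent space decomposition}. For every $X \in \mathfrak k$ this gives
\[ 0 = \sympl(\pi(X)\psi, \pi(R)\psi) = (\mu_K(\psi), [X, R]) = (\ad^*(R)(\mu_K(\psi)), X), \]
so $\ad^*(R)(\mu_K(\psi)) = 0$, i.e., $R$ lies in the Lie algebra of the coadjoint stabilizer of $\mu_K(\psi)$.

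Because $\psi$ was selected as a preimage of a point in the relative interior $i \mathfrak t^*_{>0}$, the element $\mu_K(\psi)$ is regular: its $K$-stabilizer is the maximal torus $T$, and hence its infinitesimal stabilizer in $\mathfrak k$ is $\mathfrak t$. Thus $R \in \mathfrak t$, and the direct sum decomposition $\mathfrak k = \mathfrak t \oplus \mathfrak r$ forces $R = 0$. The only point that requires any care is pinning down sign and linearity conventions for the pairing $(\mu_K(\psi), [X, Y])$ between $i \mathfrak k^*$ and $\mathfrak k$, but this is purely bookkeeping already implicit in the formulas used earlier in the section, and I do not anticipate a genuine obstacle.
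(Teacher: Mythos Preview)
Your proof is correct and follows essentially the same approach as the paper: both arguments show that $\pi(R)\psi=0$ forces $R$ into the infinitesimal coadjoint stabilizer of the regular element $\mu_K(\psi)\in i\mathfrak t^*_{>0}$, which is $\mathfrak t$, and then use $\mathfrak r\cap\mathfrak t=0$. The only cosmetic difference is that you reach $\ad^*(R)\mu_K(\psi)=0$ via the KKS-type identity $\sympl(\pi(X)\psi,\pi(Y)\psi)=(\mu_K(\psi),[X,Y])$, whereas the paper invokes $K$-equivariance of $\mu_K$ directly; these are two phrasings of the same fact.
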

\begin{proof}
  The stabilizer of the coadjoint action of $K$ at any $\lambda \in i\mathfrak t^*_{>0}$ is $T$, while $\mathfrak k = \mathfrak t \oplus \mathfrak r$.
  As $\mu_K(\psi) \in i \mathfrak t^*_{>0}$, the claim follows from this and $K$-equivariance of the moment map.
\end{proof}

\begin{lemma}
  \label{lem:index hessian roots}
  The index of the Hessian is equal to twice the number of positive roots $\alpha \in R_{G,+}$ such that $(\alpha, H) > 0$.
\end{lemma}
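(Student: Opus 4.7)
The plan is to derive a second expression for the index of the Hessian $Q$ and compare it to the one already noted in the paragraph preceding the lemma, namely $2 \dim_\CC \calH(H < 0)$. By \cref{lem:tangent space decomposition,lem:cross section hessian positive semidefinite}, $Q$ is block-diagonal with respect to $\calH = \calK \oplus \calK^{\sympl}$ and positive semidefinite on $\calK^{\sympl}$, so the entire negative signature of $Q$ is carried by $\calK = \pi(\mathfrak r)\psi$. It therefore suffices to diagonalise $Q\big|_{\calK}$.

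For this I would fix, for each $\alpha \in R_{G,+}$, the real basis
\[
  X_\alpha := \tfrac{1}{\sqrt{2}}(E_\alpha - E_{-\alpha}), \qquad Y_\alpha := \tfrac{i}{\sqrt{2}}(E_\alpha + E_{-\alpha})
\]
of $\mathfrak k_\alpha$, so that by \cref{lem:torus complement injective} the family $\{\pi(X_\alpha)\psi, \pi(Y_\alpha)\psi\}_{\alpha \in R_{G,+}}$ is a real basis of $\calK$. Writing $\lambda := \mu_K(\psi) \in i\mathfrak t^*_{>0}$ and combining \eqref{eq:hessian explicit} with \eqref{eq:onebody/sl2} and the root relation $[H, E_{\pm\alpha}] = \pm(\alpha, H) E_{\pm\alpha}$, a short bracket computation gives
\[
  Q(\pi(X_\alpha)\psi, \pi(X_\alpha)\psi) = Q(\pi(Y_\alpha)\psi, \pi(Y_\alpha)\psi) = -(\alpha, H)(\lambda, H_\alpha)
\]
and $Q(\pi(X_\alpha)\psi, \pi(Y_\alpha)\psi) = 0$. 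For distinct $\alpha, \beta \in R_{G,+}$, each of the brackets $[[H, X_\alpha], X_\beta]$, $[[H, X_\alpha], Y_\beta]$, $[[H, Y_\alpha], Y_\beta]$ lies in the sum of the root spaces $\mathfrak g_{\pm\alpha \pm \beta}$; since none of $\pm\alpha \pm \beta$ can vanish when $\alpha \neq \beta$ are both positive, each such bracket lies in $i\mathfrak r$ and hence pairs to zero with $\lambda \in i\mathfrak t^*$ under the extension-by-zero convention recorded in \cref{sec:notation}.

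Consequently the Gram matrix of $Q\big|_{\calK}$ in this basis is block-diagonal over $R_{G,+}$, with the $\alpha$-block equal to $-(\alpha, H)(\lambda, H_\alpha)\, I_2$. Since $\lambda$ lies in the open positive Weyl chamber, $(\lambda, H_\alpha) > 0$ for every positive root, and hence this block is negative definite precisely when $(\alpha, H) > 0$ (contributing $2$ to the index) and positive semidefinite otherwise. Summing over $\alpha \in R_{G,+}$, and recalling that $Q\big|_{\calK^{\sympl}}$ contributes nothing to the negative signature, yields the claimed value $2 \cdot |\{\alpha \in R_{G,+} : (\alpha, H) > 0\}|$. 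The main delicate point is the cross-root vanishing above: it uses crucially that \emph{both} $\alpha$ and $\beta$ are positive, so that no combination $\pm\alpha \pm \beta$ can land in $\mathfrak h$; the remaining steps are mechanical $\mathfrak{sl}(2)$-bracket manipulations.
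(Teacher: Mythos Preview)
Your proof is correct and follows essentially the same approach as the paper: reduce to $\calK$ via \cref{lem:cross section hessian positive semidefinite,lem:tangent space decomposition}, pull the form back to $\mathfrak r$ using the injectivity of \cref{lem:torus complement injective}, and diagonalise over the root spaces using an $\mathfrak{sl}(2)$-basis. The only cosmetic differences are your normalisation of $X_\alpha,Y_\alpha$ and that the paper phrases the computation in terms of the pulled-back form $\widetilde Q(R,S)=(\mu_K(\psi),[[H,R],S])$ on $\mathfrak r$ rather than directly on $\calK$.
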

\begin{proof}
  Since $Q$ is positive semidefinite on $\calK^{\sympl}$ (\cref{lem:cross section hessian positive semidefinite}) and block-diagonal with respect to the decomposition $\calH = \calK \oplus \calK^{\sympl}$ (\cref{lem:tangent space decomposition}), it suffices to compute the index of $Q$ on $\calK = \{ \pi(R) \psi : R \in \mathfrak r \}$.
  For this, recall from \eqref{eq:hessian explicit} that $Q(\pi(R)\psi, \pi(S)\psi) = (\mu_K(\psi), [[H, R], S])$ for all $R, S \in \mathfrak r$.
  Since the tangent map $R \mapsto \pi(R) \psi$ is injective (\cref{lem:torus complement injective}), we may instead consider the form
  \begin{equation*}
    \widetilde Q(R, S) := (\mu_K(\psi), [[H, R], S])
  \end{equation*}
  on $\mathfrak r$.
  Now observe that $\widetilde Q$ is block-diagonal with respect to $\mathfrak r = \bigoplus_{\alpha \in R_{G,+}} \mathfrak k_\alpha$, since for all $R \in \mathfrak k_\alpha$ and $S \in \mathfrak k_\beta$, $[[H, R], S] \in i\mathfrak k_{\alpha\pm\beta}$, while $\mu_K(\psi) \in i \mathfrak t^*$.
  Therefore, it suffices to compute the index on a single root space $\mathfrak k_\alpha$.
  For this, define the ``Pauli matrices'' $X_\alpha := E_\alpha + E_{-\alpha}$ and $Y_\alpha := i (E_{-\alpha} - E_\alpha)$, which satisfy the commutation relations $[X_\alpha, Y_\alpha] = 2 i H_\alpha$ etc.
  Then $iX_\alpha$ and $iY_\alpha$ form a basis of $\mathfrak k_\alpha$ and
  \begin{align*}
    \widetilde Q(iX_\alpha, iX_\alpha)
    &= -(\mu_K(\psi), [[H, X_\alpha], X_\alpha])
    = -i (\alpha, H) \, (\mu_K(\psi), [Y_\alpha, X_\alpha]) \\
    &= -2 (\alpha, H) \, (\mu_K(\psi), H_\alpha).
  \end{align*}
  Likewise, $\widetilde Q(iY_\alpha, iY_\alpha) = -2 (\alpha, H) (\mu_K(\psi), H_\alpha)$, while $\widetilde Q(iX_\alpha, iY_\alpha) = 0$.
  Since $(\mu_K(\psi), H_\alpha) > 0$, we conclude that the index of $Q$ is equal to twice the number of positive roots $\alpha$ with $(\alpha, H) > 0$.
\end{proof}


Since we have already seen above that the index of the Hessian is also equal to twice the complex dimension of $\calH(H < 0)$, we obtain the following result, which can also be extracted from \cite[Theorem 2]{Brion99}:

\begin{corollary}
\label{cor:what will be called trace condition}
  If $(-,H) \geq 0$ defines a non-trivial facet of the moment cone then
  \begin{equation*}
    \dim_{\CC} \calH(H < 0)
    = \# \{ \alpha \in R_{G,+} : (\alpha, H) > 0 \}
    = \# \{ \alpha \in R_{G,-} : (\alpha, H) < 0 \}.
  \end{equation*}
\end{corollary}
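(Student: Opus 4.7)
The corollary is a short deduction that combines two independent computations of the Morse index of the Hessian $Q$ at the critical point $\psi$, so my plan is essentially to stitch together what has already been established and then handle the elementary sign bookkeeping between positive and negative roots.

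First I would recall the first index computation, sketched just after equation \eqref{eq:hessian}: since $\pi(H)$ is Hermitian with respect to the chosen inner product, the formula $Q(V,V)=2\braket{V|\pi(H)|V}$ diagonalizes along the orthogonal decomposition \eqref{eq:weight space decomposition}, being negative definite on $\calH(H<0)$, zero on $\calH(H=0)$, and positive definite on $\calH(H>0)$. Regarded as a real quadratic form, its index therefore equals the real dimension of $\calH(H<0)$, that is $2\dim_\CC \calH(H<0)$.

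Next I would invoke \cref{lem:index hessian roots}, which gives a completely different expression for the same index, namely twice the number of positive roots $\alpha\in R_{G,+}$ with $(\alpha,H)>0$. Equating the two expressions and dividing by $2$ yields
\[
\dim_\CC \calH(H<0) \;=\; \#\{\alpha\in R_{G,+} : (\alpha,H)>0\}.
\]

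Finally, the second equality in the statement is purely formal: the map $\alpha\mapsto -\alpha$ is a bijection between $R_{G,+}$ and $R_{G,-}$, and it sends the condition $(\alpha,H)>0$ to $(-\alpha,H)<0$. Composing gives the claimed equality of cardinalities. There is no substantial obstacle here; the entire content of the corollary is the matching of the two independent index formulas, both of which are already in hand, and the only thing to verify carefully is that \cref{lem:index hessian roots} applies, which it does because $H$ defines a non-trivial facet and hence (via \cref{lem:non-trivial facets} together with the choice of $\psi$ with $\mu_K(\psi)\in i\mathfrak t^*_{>0}$) all the hypotheses used in the proof of that lemma, most importantly $(\mu_K(\psi),H_\alpha)>0$ for every positive root $\alpha$, are satisfied.
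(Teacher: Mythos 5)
Your proposal is correct and matches the paper's own argument exactly: the paper likewise equates the two index computations for the Hessian (the one from the eigenspace decomposition giving $2\dim_\CC\calH(H<0)$, and the one from \cref{lem:index hessian roots} giving twice the count of positive roots $\alpha$ with $(\alpha,H)>0$), with the second equality following from the bijection $\alpha\mapsto-\alpha$. Your closing remark that \cref{lem:index hessian roots} applies because one can choose $\psi$ with $\mu_K(\psi)\in i\mathfrak t^*_{>0}$ on the facet is precisely the hypothesis running through the whole subsection, so the proof is complete.
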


We now study the complexified group action.
To this end, we consider the Lie algebra $\mathfrak n_- = \bigoplus_{\alpha \in R_{G,-}} \mathfrak g_\alpha$ of the negative unipotent subgroup, which plays a role analogous to $\mathfrak r$ for vectors $\psi$ that are mapped into the positive Weyl chamber (compare the following with \cref{lem:torus complement injective}).

\begin{lemma}
  \label{lem:negative unipotent injective}
  The tangent map $\mathfrak n_- \rightarrow \calH, X \to \pi(X) \psi$ is injective.
\end{lemma}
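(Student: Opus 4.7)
The plan is to prove injectivity by a direct computation of $\|\pi(X)\psi\|^2$ for $X \in \mathfrak n_-$ and exhibiting it as a sum of non-negative terms, one of which strictly separates the coefficients of $X$ because $\mu_K(\psi)$ lies in the open positive Weyl chamber.

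Concretely, I would normalize the root vectors so that $\pi(E_{-\alpha})^\dagger = \pi(E_\alpha)$ for the $K$-invariant Hermitian inner product on $\calH$. Writing $X = \sum_{\alpha \in R_{G,+}} c_\alpha E_{-\alpha}$, this gives $\pi(X)^\dagger = \sum_\alpha \bar c_\alpha \pi(E_\alpha)$ and
\[ \|\pi(X)\psi\|^2 = \sum_{\alpha,\beta} \bar c_\alpha c_\beta \braket{\psi | \pi(E_\alpha) \pi(E_{-\beta}) | \psi}. \]
In each summand I would substitute $\pi(E_\alpha)\pi(E_{-\beta}) = \pi(E_{-\beta})\pi(E_\alpha) + \pi([E_\alpha,E_{-\beta}])$. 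The commutator contribution behaves as follows: for $\alpha \neq \beta$ the commutator lies in a root space $\mathfrak g_{\alpha-\beta}$ with $\alpha-\beta \neq 0$, and because $\mu_K(\psi) \in i\mathfrak t^*$ extends by zero on the root subspaces, the corresponding expectation vanishes; for $\alpha = \beta$ it produces the quantity $(\mu_K(\psi), H_\alpha)$, which is strictly positive since $\mu_K(\psi) \in i\mathfrak t^*_{>0}$. For the remaining ``off-diagonal'' sum, I would move $\pi(E_{-\beta})$ to the left via its adjoint $\pi(E_\beta)$ and carefully relabel indices; recognizing that $\pi(X)^\dagger\psi = \sum_\alpha \bar c_\alpha \pi(E_\alpha)\psi$, this sum is exactly $\|\pi(X)^\dagger\psi\|^2$.

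Combining these pieces yields the identity
\[ \|\pi(X)\psi\|^2 = \|\pi(X)^\dagger\psi\|^2 + \sum_{\alpha \in R_{G,+}} |c_\alpha|^2 (\mu_K(\psi), H_\alpha). \]
If $\pi(X)\psi = 0$, then both non-negative summands on the right must vanish; the strict positivity of each $(\mu_K(\psi), H_\alpha)$ then forces $c_\alpha = 0$ for every positive root, so $X = 0$. The only delicate step is the index manipulation in the off-diagonal part: because the inner product is anti-linear in its first entry, one has to verify explicitly that $\sum_{\alpha,\beta} \bar c_\alpha c_\beta \braket{\pi(E_\beta)\psi | \pi(E_\alpha)\psi}$ coincides with $\|\sum_\alpha \bar c_\alpha \pi(E_\alpha)\psi\|^2$, which follows from swapping $\alpha \leftrightarrow \beta$ and applying $\braket{v|w} = \overline{\braket{w|v}}$; every other ingredient reduces to the $\mathfrak{sl}(2)$-commutation relations and the fact that $\mu_K(\psi)$ lies in the open Weyl chamber.
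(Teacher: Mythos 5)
Your proposal is correct and follows essentially the same route as the paper: expand $\|\pi(X)\psi\|^2$ using $\pi(E_{-\alpha})^\dagger = \pi(E_\alpha)$, isolate the commutator $[X^\dagger,X]$, and conclude from $\mu_K(\psi)\in i\mathfrak t^*_{>0}$ that its expectation value is a strictly positive combination of the $(\mu_K(\psi),H_\alpha)$. The only cosmetic difference is that you treat the $\alpha\neq\beta$ commutator terms weight by weight, whereas the paper bundles them into the statement that $[E_+,E_-]-\sum_\alpha|z_\alpha|^2 H_\alpha$ lies in $i\mathfrak r$, on which $\mu_K(\psi)$ vanishes.
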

\begin{proof}
  Let $E_- := \sum_{\alpha \in R_{G,+}} z_\alpha E_{-\alpha}$ be an arbitrary element in $\mathfrak n_-$.
  Since $\pi(E_{\pm\alpha})^\dagger = \pi(E_{\mp\alpha})$, we find that $\pi(E_-)^\dagger = \pi(E_+)$ with
  $E_+ := \sum_{\alpha \in R_{G,+}} \bar{z}_\alpha E_\alpha$ (the Cartan involution of $E_-$).
  Therefore,
  \begin{align*}
    \norm{\pi(E_-) \psi}^2
    &= \braket{\psi | \pi(E_+) \pi(E_-) | \psi}
    = \braket{\psi | \pi([E_+, E_-]) | \psi} + \norm{\pi(E_+) \psi}^2 \\
    &\geq \braket{\psi | \pi([E_+, E_-]) | \psi}.
  \end{align*}
  But \eqref{eq:onebody/sl2} implies that
  \begin{equation*}
    [E_+, E_-] - \sum_{\alpha \in R_{G,+}} \abs{z_\alpha}^2 H_\alpha \in i \mathfrak r,
  \end{equation*}
  so that by using $\mu_K(\psi) \in i \mathfrak t^*_{>0}$ we find that
  \begin{equation*}
    \norm{\pi(E_-) \psi}^2 \geq \braket{\psi | \pi([E_+, E_-]) | \psi} = \sum_{\alpha \in R_{G,+}} \abs{z_\alpha}^2 (\mu_K(\psi), H_\alpha) > 0.
    \qedhere
  \end{equation*}
\end{proof}

In contrast to \cref{lem:torus complement injective}, which continues to hold true if $\mu_K(\psi)$ is mapped to the relative interior of a different Weyl chamber, 
it is important in \cref{lem:negative unipotent injective} to choose the negative unipotent subgroup (relative to the choice of positive Weyl chamber).
For example, consider an irreducible $G$-representation $\calH = V_{G,\lambda}$ with highest weight $\lambda \in i\mathfrak t^*_{>0}$ and highest weight vector $v_\lambda$.
Then $\mu_K(v_\lambda) = \lambda \in i \mathfrak t^*_{>0}$ and the ``lowering operators'' in $\mathfrak n_-$ indeed act injectively.
On the other hand, the ``raising operators'' in the positive nilpotent Lie algebra $\mathfrak n_+$ annihilate the highest weight vector (by definition).

\medskip

We now decompose the Lie algebra $\mathfrak n_-$ similarly to \eqref{eq:weight space decomposition},
\begin{equation*}
  \mathfrak n_- = \mathfrak n_-(H < 0) \oplus \mathfrak n_-(H = 0) \oplus \mathfrak n_-(H > 0),
\end{equation*}
where $\mathfrak n_-(H < 0) = \bigoplus_{\alpha \in R_{G,-} : (\alpha,H) < 0} \mathfrak g_\alpha$ is the sum of the complex root spaces with negative $H$-weight $(\alpha, H) < 0$, etc.
We observe that \cref{cor:what will be called trace condition} can be equivalently stated as
\begin{equation}
\label{eq:dimensions match}
  \dim_\CC \calH(H < 0) = \dim_\CC \mathfrak n_-(H < 0).
\end{equation}
Note that $\pi(\mathfrak n_-(H < 0)) \calH(H = 0) \subseteq \calH(H < 0)$.
Thus we obtain the following important result:

\begin{proposition}
  \label{prp:non-trivial walls give isomorphism}
  Let $\psi \in \calH$ such that $\mu_K(\psi) \in i \mathfrak t^*_{>0}$ is a point on a non-trivial facet of the moment cone corresponding to the inequality $(-,H) \geq 0$.
  Then $\psi \in \calH(H = 0)$ and the tangent map restricts to an isomorphism
  \[ \mathfrak n_-(H < 0) \rightarrow \calH(H < 0), \quad X \mapsto \pi(X) \psi. \]
\end{proposition}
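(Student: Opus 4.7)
The plan is to bootstrap the proposition from three pieces already in hand: a critical-point identity that will force $\pi(H)\psi = 0$, the weight-theoretic compatibility of the tangent map with the $H$-grading, and the dimension equality \eqref{eq:dimensions match} paired with \cref{lem:negative unipotent injective}.

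First I would establish $\psi \in \calH(H=0)$, i.e., show that $\psi$ is a critical point of $f(\psi') := (\mu_K(\psi'), H)$. This was already used implicitly in the Hessian discussion above, and to justify it rigorously I would invoke the symplectic cross section: near $\psi$ one can choose a slice $S$ with $\mu_K(S) \subseteq i\mathfrak t^*_{>0}$ and $T_\psi S = \calK^{\sympl}$, and then $T_\psi \calH = \calK \oplus \calK^{\sympl}$ by \cref{lem:tangent space decomposition}. On $T_\psi S$, the differential $df|_\psi$ vanishes because $f|_S \geq 0$ attains its minimum at $\psi$ (the moment cone lies in the half-space $(-,H)\geq 0$ while $f(\psi)=0$). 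On the complementary directions $\pi(R)\psi \in \calK$ with $R\in\mathfrak r$, the direct computation
\[ \tfrac{d}{dt}\Big|_{t=0} f(\Pi(\exp(tR))\psi) = (\mu_K(\psi), [H,R]) = 0 \]
closes the remaining directions, since $[H,R]$ lies in $i\mathfrak k_\alpha$ for $R \in \mathfrak k_\alpha$, while $\mu_K(\psi) \in i\mathfrak t^*$ annihilates root spaces. Together these give $df|_\psi = 0$, and the moment map identity \eqref{eq:moment map identity} then forces $\pi(H)\psi = 0$.

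Second, the tangent map automatically respects the $H$-grading: for $X \in \mathfrak g_\alpha \subseteq \mathfrak n_-(H<0)$,
\[ \pi(H)\,\pi(X)\psi = \pi([H,X])\psi + \pi(X)\,\pi(H)\psi = (\alpha,H)\,\pi(X)\psi, \]
and since $(\alpha,H)<0$ we conclude $\pi(X)\psi\in\calH(H<0)$. Hence the tangent map restricts to a well-defined linear map $\mathfrak n_-(H<0)\to\calH(H<0)$.

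Third, \cref{lem:negative unipotent injective} (whose hypothesis $\mu_K(\psi) \in i\mathfrak t^*_{>0}$ is exactly ours) yields injectivity on all of $\mathfrak n_-$, and in particular on the subspace $\mathfrak n_-(H<0)$. Combined with the dimension identity \eqref{eq:dimensions match} extracted from \cref{cor:what will be called trace condition}, an injection between equidimensional finite-dimensional spaces must be an isomorphism, which completes the proof. I expect the only non-routine step to be the first one---setting up the symplectic cross section and carefully extracting the critical-point condition; the remaining steps reduce to weight-theoretic bookkeeping combined with citation of the already-established lemmas.
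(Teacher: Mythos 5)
Your proof is correct and follows essentially the same route as the paper: establish $\pi(H)\psi=0$, observe that the tangent map respects the $H$-grading so it lands in $\calH(H<0)$, invoke \cref{lem:negative unipotent injective} for injectivity, and conclude via the dimension identity \eqref{eq:dimensions match}. The only difference is that you carefully spell out the critical-point argument for $\pi(H)\psi=0$ using the symplectic cross section and the decomposition $\calH = \calK\oplus\calK^\sympl$, whereas the paper simply imports this fact from the preceding discussion in the section; your version is more self-contained but conceptually identical.
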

\begin{proof}
  The fact that $\psi \in \calH(H = 0)$ is just a reformulation of $\pi(H) \psi = 0$.
  By the preceding discussion, the tangent map is well-defined as a map from $\mathfrak n_-(H < 0)$ to $\calH(H < 0)$;
  it is injective by \cref{lem:negative unipotent injective} and surjective since the dimensions agree according to \eqref{eq:dimensions match}.
\end{proof}

We now prove a partial converse to \cref{prp:non-trivial walls give isomorphism}, inspired by the argument of Ressayre \cite{Ressayre10}.

\begin{proposition}
  \label{prp:partial converse}
  Suppose there exists $\psi \in \calH(H=0)$ such that the tangent map
  \begin{equation*}
    \mathfrak n_-(H < 0) \rightarrow \calH(H < 0), \quad
    X \mapsto \pi(X) \psi
  \end{equation*}
  is surjective. Then $(-, H) \geq 0$ is a valid inequality for the moment cone.
\end{proposition}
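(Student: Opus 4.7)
The plan is to invoke Mumford's representation-theoretic description \eqref{eq:mumford description}, which reduces the claim to showing that every $\lambda \in P_{G,+}$ with $V_{G,\lambda}^* \subseteq R(\calH)$ satisfies $(\lambda, H) \geq 0$. By \eqref{eq:mumford description} there is a non-zero $N_-$-invariant polynomial $P \in R(\calH)$ of weight $-\lambda$. Since every monomial in the coordinate expansion of $P$ is a product of weight-coordinate functions whose indices sum to $\lambda$, one obtains the scaling law
\[ P(\pi(\exp(sH))\, v) = e^{s(\lambda, H)} P(v) \qquad (v \in \calH,\; s \in \RR). \]
I would argue by contradiction: assume $(\lambda, H) < 0$ and derive $P \equiv 0$.

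The key geometric input of the hypothesis is that the action map
\[ \Phi \colon N_-(H<0) \times \calH(H \geq 0) \to \calH, \qquad (n, w) \mapsto \pi(n)\, w, \]
is a submersion at $(e, \psi)$. Here $N_-(H<0) := \exp(\mathfrak n_-(H<0))$ (a closed nilpotent subgroup, since $\mathfrak n_-(H<0)$ is a Lie subalgebra of $\mathfrak n_-$) and $\calH(H \geq 0) := \calH(H=0) \oplus \calH(H>0)$. The differential at $(e, \psi)$ sends $(X, w) \mapsto \pi(X)\psi + w$, so its image is $\pi(\mathfrak n_-(H<0))\psi + \calH(H \geq 0) = \calH(H<0) + \calH(H \geq 0) = \calH$ by the surjectivity assumption. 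The holomorphic inverse function theorem then provides a Euclidean open neighborhood $U$ of $\psi$ in $\calH$ contained in the image of $\Phi$.

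To finish, I would push the scaling law to $s \to -\infty$. For any $w = w_0 + w_+ \in \calH(H \geq 0)$, each summand of $w_+$ has strictly positive $H$-weight, so $\pi(\exp(sH)) w \to w_0$ as $s \to -\infty$, and continuity of $P$ gives $P(\pi(\exp(sH)) w) \to P(w_0)$, a finite limit. On the other hand, $P(\pi(\exp(sH)) w) = e^{s(\lambda, H)} P(w)$, and the assumption $(\lambda, H) < 0$ forces $e^{s(\lambda, H)} \to +\infty$ as $s \to -\infty$. The only way to reconcile the two is $P(w) = 0$ for every $w \in \calH(H \geq 0)$. Combined with the $N_-$-invariance $P(\Phi(n,w)) = P(w)$, this means $P$ vanishes on the image of $\Phi$, in particular on $U$, hence identically on $\calH$, contradicting $V_{G,\lambda}^* \subseteq R(\calH)$.

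The most delicate point is bookkeeping the sign conventions so that the limit $s \to -\infty$ is genuinely incompatible with $(\lambda, H) < 0$ (and is not spuriously consistent with some other sign choice); once this is fixed, the argument is a direct transposition of Ressayre's dominant-pair idea \cite{Ressayre10} to the moment-cone/representation-theoretic setting, with the submersion step playing the role of the non-vanishing of the relevant determinant $\delta_H$.
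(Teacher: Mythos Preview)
Your proof is correct and follows essentially the same approach as the paper's: use Mumford's description \eqref{eq:mumford description}, show that the action map is a submersion at $(e,\psi)$ so that its image contains an open set, and deduce from $N_-$-invariance that a lowest weight vector $P$ of weight $-\lambda$ cannot vanish on $\calH(H\geq 0)$, which forces $(\lambda,H)\geq 0$. The only cosmetic differences are that the paper uses the full group $N_-$ rather than your $N_-(H<0)$ in the action map (both work, since $\pi(\mathfrak n_-(H\geq 0))\psi\subseteq\calH(H\geq 0)$), and that the paper reads off the sign directly from the $H$-weight of $P|_{\calH(H\geq 0)}$ rather than via your scaling limit $s\to-\infty$; these are two phrasings of the same observation.
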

\begin{proof}
  Consider the smooth map
  \[N_- \times \calH(H \geq 0) \rightarrow \calH, \quad (g, \phi) \mapsto \Pi(g)\phi.\]
  Its differential at $(1,\psi)$ is the linear map
  \[\mathfrak n_- \oplus \calH(H \geq 0) \rightarrow \calH, \quad (X,V) \mapsto \pi(X) \psi + V.\]
  The assumption implies that this map is surjective.
  It follows that $\Pi(N_-) \calH(H \geq 0) \subseteq \calH$ contains a small Euclidean ball around $\psi$.
  In particular, any $N_-$-invariant polynomial that is zero on $\calH(H \geq 0)$ is automatically zero everywhere on $\calH$.

  We now prove the inequality.
  By the description of the moment cone in \eqref{eq:mumford description}, it suffices to show that
  $(\lambda, H) \geq 0$ for all highest weights $\lambda$ such that $V^*_{G,\lambda} \subseteq R(\calH)$.
  Recall that the highest weight of $V^*_{G,\lambda}$ is $\lambda^* = -w_0 \lambda$, where $w_0$ is the longest Weyl group element that flips the positive and negative roots.
  Consider a lowest weight vector, i.e., a polynomial $P \in R(\calH)$ that is a weight vector of weight $-\lambda$ and invariant under the action of $N_-$.
  Then $\pi(H) P = -(\lambda, H) P$ and the restriction of $P$ to $\calH(H \geq 0)$ is non-zero by our discussion above.
  But this restriction is an element of $R(\calH(H \geq 0)) = \Sym(\calH(H \geq 0))^*$, the space of polynomials on $\calH(H \geq 0)$.
  Since all $H$-weights in $R(\calH(H \geq 0))$ are non-positive, it follows that $-(\lambda, H) \leq 0$, as we set out to prove.
\end{proof}

We remark that \cref{prp:partial converse} holds unconditionally without any assumption on the dimension of the moment cone $C_K$.
We summarize our findings in the following definition and theorem that we had already advertised in the introduction.

\begin{definition}
\label{dfn:ressayre}
  An element $H \in i \mathfrak t$ is called a \emph{Ressayre element} if
  \begin{enumerate}
  \item $H$ is admissible, i.e., the linear hyperplane $(-, H) = 0$ is spanned by a subset of weights in $\Omega$, and
  \item there exists $\psi \in \calH(H = 0)$ such that the map
    \begin{equation}
    \label{eq:tangent map}
      \mathfrak n_-(H < 0) \rightarrow \calH(H < 0), \quad X \mapsto \pi(X) \psi
    \end{equation}
    is an isomorphism.
\end{enumerate}
\end{definition}

\begin{maintheorem}
  The moment cone is given by
  \begin{equation*}
    C_K = \{ \lambda \in i \mathfrak t^*_+ : (H, \lambda) \geq 0 \text{ for all Ressayre elements $H$} \}.
  \end{equation*}
\end{maintheorem}
\begin{proof}
  This follows directly from \cref{lem:admissible,prp:non-trivial walls give isomorphism,prp:partial converse}.
\end{proof}

\Cref{thm:main} gives a complete description of the moment cone of an arbitrary finite-dimensional $K$-representation $\calH$ (under the assumption that $C_K$ is of maximal dimension).
The set of inequalities thus obtained may still be redundant (i.e., not all inequalities necessarily correspond to facets of the moment cone).
In contrast, Ressayre's \emph{well-covering pairs} \cite{Ressayre10} characterize the facets of the moment polytope precisely.
In our language, his condition amounts to requiring that that the generic fiber of the map $N_- \times_{N_-(H \geq 0)} \calH(H \geq 0) \rightarrow \calH$ is a point
(as opposed to only requiring that the map be dominant).
Our characterization is also related to \cite[Theorem 2]{Brion99}, which uses algebraic geometry to characterize non-trivial faces of arbitrary codimension.
Unlike \cref{prp:partial converse}, it relies on an assumption about lower-dimensional moment polytopes, which can in principle be obtained recursively.

\section{Generalized Trace and Horn Condition}
\label{sec:trace and horn condition}

We now extract two useful necessary conditions that have to hold for any Ressayre element $H$, and therefore for any non-trivial facet of the moment polytope. We will see in \cref{subsec:trace horn horn} below that they are a generalization of the classical Horn inequalities, which justifies our terminology.

The first condition, which we call the \emph{trace condition} is the observation that the domain and range of the tangent map \eqref{eq:tangent map} necessarily have to agree, i.e.,
\begin{equation}
\label{eq:trace condition}
  \dim \mathfrak n_-(H < 0) = \dim \calH(H < 0).
\end{equation}
We note that the right-hand side of \eqref{eq:trace condition} is invariant under the action of the Weyl group.
This suggests that we first compute the \emph{dominant} admissible $H_0$ and then determine those $H \in W_K \cdot H_0$ which satisfy the trace condition \eqref{eq:trace condition}.
For this, we will need the following well-known lemma.

\begin{lemma}
  Let $H_0 \in i \mathfrak t_+ := \{ H_0 \in i \mathfrak t : (\alpha, H_0) \geq 0 ~ \forall \alpha \in R_{G,+} \}$ and $H \in W_K \cdot H_0$.
  Let $w \in W_K$ be the unique Weyl group element such that $H = w \cdot H_0$ and $w_0 w \cdot \alpha \in R_{G,+}$ for all $\alpha \in R_{G,+}$ with $(\alpha, H_0) = 0$. Then,
  \[ \ell(w_0 w) = \#\{ \alpha \in R_{G,+} : (\alpha, H) > 0 \}. \]
\end{lemma}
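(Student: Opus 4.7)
The plan is to translate the pairing with $H$ to one with the dominant representative $H_0$ and then to recognize the resulting count as an inversion count for $w_0 w$.

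First I would use $W_K$-invariance of the form to write $(\alpha, H) = (\alpha, w H_0) = (w^{-1}\alpha, H_0)$. Dominance of $H_0$ means that $(-, H_0)$ is nonnegative on $R_{G,+}$ and vanishes precisely on the subsystem $R_+^0 := \{\beta \in R_{G,+} : (\beta, H_0) = 0\}$ of roots of the stabilizer $W_{H_0}$, while it is nonpositive on $R_{G,-}$. Consequently $(\alpha, H) > 0$ if and only if $w^{-1}\alpha \in R_{G,+} \setminus R_+^0$, so substituting $\beta = w^{-1}\alpha$ transforms the quantity of interest into
\[ \#\{\alpha \in R_{G,+} : (\alpha,H) > 0\} = \#\{\beta \in R_{G,+}\setminus R_+^0 : w\beta \in R_{G,+}\}. \]

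Next I would bring in the standard length formula $\ell(u) = \#\{\gamma \in R_{G,+} : u\gamma \in R_{G,-}\}$ with $u = w_0 w$; because $w_0$ exchanges positive and negative roots, this simplifies to
\[ \ell(w_0 w) = \#\{\gamma \in R_{G,+} : w\gamma \in R_{G,+}\}. \]
Splitting the count along the disjoint union $R_{G,+} = R_+^0 \sqcup (R_{G,+}\setminus R_+^0)$ reduces the lemma to showing that no $\gamma \in R_+^0$ contributes, i.e., that $w\gamma \in R_{G,-}$ for every $\gamma \in R_+^0$.

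This last step is immediate from the defining hypothesis on $w$: for $\gamma \in R_+^0$ we are given $w_0 w \gamma \in R_{G,+}$, hence $w\gamma \in w_0 R_{G,+} = R_{G,-}$. I do not foresee any real obstacle here; the content of the lemma is essentially the observation that the selection rule on $w$ amounts to choosing the minimum-length representative of the coset $w_0 w \, W_{H_0}$, and the length of that representative is exactly the inversion count recorded on the right-hand side.
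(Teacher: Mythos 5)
Your proof is correct. It takes a genuinely different (and somewhat more direct) route than the paper: rather than manipulating the length formula in terms of $H_0$ and $R_+^0 := \{\beta\in R_{G,+}:(\beta,H_0)=0\}$ as you do, the paper perturbs $H$ to a \emph{regular} element $H^\varepsilon := H - \varepsilon\rho_K$ with $\rho_K$ strictly dominant, observes that the selection rule on $w$ is equivalent to $H_0^\varepsilon := w^{-1}\cdot H^\varepsilon$ landing in $i\mathfrak t_+$, and then reads off the identity from the fact that positive roots are exactly those pairing positively against a regular dominant element. That perturbation step carries a bonus: it simultaneously proves existence and uniqueness of $w$ with the stated property, which the lemma statement merely asserts and which your argument takes for granted. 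Conversely, your approach makes the role of the hypothesis transparent --- it is precisely what kills the $R_+^0$-contribution to the inversion count of $w_0w$ --- and avoids the epsilon bookkeeping. Your closing remark about $w$ being tied to minimal-length coset representatives of $w_0w\,W_{H_0}$ is accurate and is essentially the structural content hidden in the perturbation argument.
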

\begin{proof}
  Fix any $\rho_K \in i \mathfrak t_+$ such that $(\alpha, \rho_K) > 0$ for all positive roots $\alpha \in R_{G,+}$, 
  and set $H^\varepsilon := H - \varepsilon \rho_K$.
  Let $w \in W_K$ be a Weyl group element such that $H = w \cdot H_0$. 
  For all positive roots $\alpha \in R_{G,+}$ and $\varepsilon > 0$ small enough,
  \[ (\alpha, w^{-1} \cdot H^\varepsilon) = (\alpha, H_0) - \varepsilon (w \cdot \alpha, \rho_K) > 0 \]
  if and only if $(\alpha, H_0) = 0$ implies that $w \cdot \alpha \in R_{G,-}$.
  In other words, $H^\varepsilon_0 := w^{-1} \cdot H^\varepsilon \in i \mathfrak t_+$ if and only if $w_0 w \cdot \alpha \in R_{G,+}$ for all $\alpha \in R_{G,+}$ with $(\alpha, H_0) = 0$.
  Since $H^\varepsilon$ is regular, this immediately shows that such Weyl group elements $w$ exist and are unique.
  What is more, regularity also implies that
  \begin{align*}
      \ell(w_0 w)
    &= \ell(w_0 w^{-1})
    = \#\{ \alpha \in R_{G,+} : w^{-1} \cdot \alpha \in R_{G,+} \} \\
    &= \#\{ \alpha \in R_{G,+} : (w^{-1} \cdot \alpha, H^\varepsilon_0) > 0 \}
    = \#\{ \alpha \in R_{G,+} : (\alpha, w \cdot H^\varepsilon_0) > 0 \} \\
    &= \#\{ \alpha \in R_{G,+} : (\alpha, H^\varepsilon) > 0 \}
    = \#\{ \alpha \in R_{G,+} : (\alpha, H) > 0 \}. \qedhere
  \end{align*}
\end{proof}

We obtain the following useful corollary:

\begin{corollary}
\label{cor:trace criterion}
  Let $H_0 \in i \mathfrak t_+$ and $H \in W_K \cdot H_0$.
  Let $w \in W_K$ be the unique Weyl group element such that $H = w \cdot H_0$ and $w_0 w \cdot \alpha \in R_{G,+}$ for all $\alpha \in R_{G,+}$ with $(\alpha, H_0) = 0$.
  Then $H$ satisfies the trace condition \eqref{eq:trace condition} if and only if $\ell(w_0 w) = \dim_{\CC} \calH(H_0 < 0)$.
\end{corollary}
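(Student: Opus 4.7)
The plan is to reduce both sides of the trace condition \eqref{eq:trace condition} to quantities attached to $H_0$, using the preceding lemma on the root side and Weyl-invariance of weight multiplicities on the weight side; the corollary will then fall out by direct comparison.

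First, I would handle the left-hand side of \eqref{eq:trace condition}. By the definition of $\mathfrak n_-(H<0)$, its complex dimension equals $\#\{\alpha \in R_{G,-} : (\alpha,H)<0\}$, and negating $\alpha$ shows this equals $\#\{\alpha \in R_{G,+} : (\alpha,H)>0\}$. The preceding lemma identifies this cardinality with $\ell(w_0 w)$ for the particular $w \in W_K$ singled out in the statement, so the left-hand side of \eqref{eq:trace condition} is exactly $\ell(w_0 w)$.

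Second, I would verify $\dim_\CC \calH(H<0) = \dim_\CC \calH(H_0<0)$. Since $\calH$ is a representation of the compact connected group $K$, its weight multiplicities are $W_K$-invariant: $\dim \calH_{w\cdot\omega} = \dim \calH_\omega$ for every $w \in W_K$ and every weight $\omega$. Writing $H = w \cdot H_0$ and reindexing $\omega = w \cdot \omega'$ in the sum defining $\dim \calH(H<0)$, and noting that $(\omega, H) = (w \cdot \omega', w \cdot H_0) = (\omega', H_0)$, immediately gives the claimed equality of dimensions.

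Combining the two identities, the trace condition $\dim \mathfrak n_-(H<0) = \dim \calH(H<0)$ is equivalent to $\ell(w_0 w) = \dim_\CC \calH(H_0<0)$, which is the statement of the corollary. There is no substantive obstacle here: the corollary is essentially a repackaging of the preceding lemma, together with the elementary observation that $H \mapsto \dim \calH(H<0)$ is $W_K$-invariant.
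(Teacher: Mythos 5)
Your proof is correct and follows exactly the route the paper intends: the left-hand side of \eqref{eq:trace condition} is identified with $\ell(w_0 w)$ by the preceding lemma, and the right-hand side is transported to $H_0$ via the $W_K$-invariance of $H \mapsto \dim_\CC \calH(H<0)$, which the paper itself flags in the sentence just before the lemma. Nothing is missing.
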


\medskip

The second condition, called the \emph{Horn condition}, is based on the observation that for any Ressayre element $H$ the \emph{determinant polynomial}
\begin{equation}
\label{eq:determinant}
  \delta_H \colon \begin{cases} \calH(H = 0) \rightarrow \CC \\
  \psi \mapsto \det \Big( \mathfrak n_-(H < 0) \ni X \mapsto \pi(X) \psi \in \calH(H < 0) \Big) \end{cases}
\end{equation}
is non-zero (we take the determinant with respect to any fixed pair of bases).
This can be understood to imply a statement about a smaller moment cone.
To see this, let $G(H=0)$ denote the identity component of the centralizer of the torus generated by $H$, i.e., the connected subgroup of $G$ with Lie algebra $\mathfrak g(H=0) = \mathfrak h \oplus \bigoplus_{\alpha:(\alpha,H)=0} \mathfrak g_\alpha$; denote by $N_-(H=0)$ the corresponding negative unipotent subgroup and by $K(H=0) = G(H=0) \cap K$ the maximal compact subgroup.
Then $G(H=0)$ acts on $\calH(H=0)$ and thus on the space $R(\calH(H=0)) = \Sym(\calH(H=0))^*$ of polynomial functions.

\begin{hornproposition}[Horn condition]
\hornstatement
\end{hornproposition}
\begin{proof}
  In view of \eqref{eq:mumford description}, it suffices to argue that the determinant polynomial $\delta_H$ is a lowest weight vector in $R(\calH(H=0))$ of weight $-\kappa_H$.


  To see this, fix a basis $\psi_1, \dots, \psi_k$ of $\calH(H < 0)$ such that each $\psi_j$ is a weight vector of weight $\omega_j$, and denote by $\alpha_1, \dots, \alpha_k$ the negative roots with $(\alpha, H) < 0$, so that $E_k := E_{\alpha_k}$ is a basis of $\mathfrak n_-(H < 0)$.
  Then the determinant polynomial $\delta_H$ with respect to this basis can be written as
  \[ \delta_H(\psi) = \braket{\psi_1 \wedge \dots \wedge \psi_k | \Lambda^k(\pi(-)\psi) | E_1 \wedge \dots \wedge E_k} \]
  where we write $\Lambda^k A$ for the canonical homomorphism $\Lambda^k V \rightarrow \Lambda^k W$ induced by a linear map $A \colon V \rightarrow W$.
  It follows that for any $g \in G$ and $\psi \in \calH(H=0)$,
  \begin{align*}
    &\quad (g \cdot \delta_H)(\psi)
    = \delta_H(\Pi(g^{-1}) \psi) \\
    &= \braket{\psi_1 \wedge \dots \wedge \psi_k | \Lambda^k(\Pi(g^{-1})) \, \Lambda^k(\pi(-)\psi) \, \Lambda^k(\Ad(g)) | E_1 \wedge \dots \wedge E_k}.
  \end{align*}
  Both $\Lambda^k(\mathfrak n_-(H < 0)) = \CC E_1 \wedge \dots \wedge E_k$ and $\Lambda^k(\calH(H < 0)) = \CC \psi_1 \wedge \dots \wedge \psi_k$ are one-dimensional representations of $K(H=0)$.
  In particular, they are invariant under the action of $N_-(H=0)$, and it is immediate that their weight is given by $\sum_k \alpha_k$ and $\sum_k \omega_k$, respectively.
  It follows at once that $\delta_H$ is a lowest weight vector of weight $-\kappa_H$.
\end{proof}

\section{Computation of Moment Cones}
\label{sec:computation}

\Cref{thm:main} reduces the computation of the moment cone of an arbitrary finite-dimensional representation to an enumeration of all Ressayre elements, which in principle is straighforward:
Since there are only finitely many weights, the admissibility condition cuts down the number of possible inequalities down to a finite list of candidates, and for each such candidate $(-,H) \geq 0$, the isomorphism condition can be easily checked.
Indeed, we only need to verify the trace condition \eqref{eq:trace condition} and that the determinant polynomial \eqref{eq:determinant} is non-zero.
In this way, we obtain a deterministic algorithm to compute the moment cone for an arbitrary representation that can easily be implemented in a computer program \cite{onlinecode}.
We remark that checking whether the determinant polynomial is non-zero can be sped up by using a fast \emph{probabilistic} algorithm for polynomial identity testing, e.g., based on the Schwartz-Zippel lemma.

In practice, naively enumerating all admissible hyperplanes by considering all $(r_K-1)$-element subsets of $\Omega$ quickly becomes infeasible as one considers representations of larger dimensions.
In this case, it is useful to first determine the admissible elements \emph{up to the Weyl group} and to impose further necessary conditions by a more refined analysis of the representation at hand.
Then \cref{cor:trace criterion} can be used to obtain directly only those candidates that satisfy the trace condition.
In this way, we may often cut down the number of candidates substantially for which we need to check that $\delta_H$ is non-zero.
In \cref{sec:qmp} below, we illustrate this analysis in the case of the one-body quantum marginal problem.

\section{Quantum Marginal Problem and Kronecker Cone}
\label{sec:qmp}

The state of a quantum system is specified by a unit vector $\psi$ in a Hilbert space $\calH$
, which we will always assume to be finite-dimensional.
Quantum systems composed of several distinguishable particles are described by the tensor product of the Hilbert spaces of their constituents, $\calH = \bigotimes_{k=1}^n \calH_k$.
The state of the $k$-th subsystem can be described by the \emph{reduced density matrix} $\rho_k$, which is the unique positive semi-definite operator on $\calH_k$ such that
\begin{equation}
\label{eq:rdm}
  \tr \rho_k X_k = \braket{\psi | \id^{\otimes (k-1)} \otimes X_k \otimes \id^{\otimes (n-k)} | \psi}
\end{equation}
for all Hermitian operators $X_k$ on $\calH_k$ (it is also called the \emph{partial trace} of $\psi$).
The fundamental \emph{one-body quantum marginal problem} asks which $\rho_1, \dots, \rho_n$ can arise as the reduced density matrices of a quantum state $\psi \in \calH$ \cite{Klyachko04,ChristandlMitchison06,DaftuarHayden04,ChristandlHarrowMitchison07,Walter14}.
Equivalently, it asks for the compatibility conditions that the $\rho_k$ have to satisfy in order for there to exist a global state $\psi$.
For two subsystems, it is a straightforward consequence of the singular value decomposition that $\rho_1$ and $\rho_2$ are compatible if and only if they have the same non-zero eigenvalues (including multiplicities).
In general, however, the problem is much more involved; it can be shown that it is a strict generalization of the problem of computing the Horn cone of \cref{sec:horn} \cite{Klyachko04,ChristandlSahinogluWalter12,Walter14} and it has been solved in \cite{Klyachko04,DaftuarHayden04} by using similar methods \cite{Klyachko98,BerensteinSjamaar00}. However, a concrete description akin to the Horn inequalities is still elusive.

In this section, we shall consider the case of three subsystems, to which the general case can always be reduced \cite{Walter14}.
By comparing \eqref{eq:rdm} and \eqref{eq:moment map}, it is not hard to see that solving the one-body quantum marginal problem amounts to computing the moment cone for the action of $K = \SU(a) \times \SU(b) \times \SU(c) \times \U(1)$ on $\calH = \CC^a \otimes \CC^b \otimes \CC^c$.
The rank of $K$ is $r = (a-1) + (b-1) + (c-1) + 1 = a+b+c-2$.
The elements of $i\mathfrak t$ can be identified with quadruples $H = (H_A, H_B, H_C, z) \in \RR^{a+b+c+1}$ with $\sum_i H_{A,i} = \sum_j H_{B,j} = \sum_k H_{C,k} = 0$.
It will be convenient to identify the elements of the dual space $i \mathfrak t^*$ with triples $\lambda = (\lambda_A, \lambda_B, \lambda_C) \in \RR^{a+b+c}$ with $\abs\lambda := \sum_i \lambda_{A,i} = \sum_j \lambda_{B,j} = \sum_k \lambda_{C,k}$. The natural pairing of elements $H \in i\mathfrak t$ and $\lambda \in i\mathfrak t^*$ is then given by
\[
  (H,\lambda) = (H_A, \lambda_A) + (H_B, \lambda_B) + (H_C, \lambda_C) + z \abs\lambda.
\]
Thus the set of weights of $\calH$ is 
$
  \Omega = \{ (e_{A,i}, e_{B,j}, e_{C,k}) : i = 1,\dots,a, \, j=1,\dots,b, \, k=1,\dots,c \},
$
where we write $e_{A,i}$ for the vector in $\RR^a$ with $i$-th entry equal to $1$ and all other entries equal to zero, etc.
Throughout this section, we will denote the moment cone by $C(a,b,c)$.

By Schur-Weyl duality and Mumford's description \eqref{eq:mumford description}, the moment cone $C(a,b,c)$ can equivalently be defined in terms of the representation theory of the symmetric group \cite{ChristandlMitchison06,Klyachko04,ChristandlHarrowMitchison07,ChristandlDoranKousidisWalter2014}: We have
\[ C(a,b,c) = \cone \{ (\alpha,\beta,\gamma) : g_{\alpha,\beta,\gamma} \neq 0 \}, \] 
where $\alpha$, $\beta$, and $\gamma$ vary over the set of Young diagrams with the same number $k$ of boxes and no more than $a$, $b$, and $c$ rows, respectively, and where $g_{\alpha,\beta,\gamma}$ denotes the \emph{Kronecker coefficient of the symmetric group}, i.e., the multiplicity of the invariant subspace in the corresponding triple tensor product of irreducible $S_k-$representations.
We will henceforth refer to $C(a,b,c)$ as the \emph{Kronecker cone}.

It will be convenient to assume without loss of generality that $1 < a \leq b \leq c \leq ab$.
In this case, $C(a,b,c)$ is maximal-dimensional (see \cref{cor:maxdim} in the \refappendix) and our method is directly applicable.
In fact, we shall see below that all other cases can be reduced to the case $c=ab$.
According to \cref{thm:main}, the moment cone $C(a,b,c)$ is thus cut out by those $H$ which are Ressayre elements, i.e., which are admissible and satisfy the isomorphism condition.
Naively determining the admissible $H$ by enumerating all subsets of $\Omega$ with cardinality $r-1$, determining whether they span a hyperplane and computing the normal vector amounts to considering $\binom {\abs \Omega} {r-1} = \binom {abc} {a+b+c-3}$ subsets, which rapidly becomes infeasible (e.g., for $a=b=c=4$, there are over 27 billion such subsets). 
We therefore need to derive additional constraints to make this approach computationally feasible.

\subsection{Candidates}
\label{subsec:qmp candidates}

As a first step, we recall that the bipartite variant of the problem has a straight-forward solution: The moment cone of $\SU(a) \times \SU(b) \times \U(1)$ acting on $\CC^a \otimes \CC^b$ is given by
\begin{equation}
\label{eq:bipartite cone}
\begin{aligned}
  C(a, b) = \{ (\lambda_A, \lambda_B) \in \RR^{a+b} : &\lambda_{A,1} \geq \dots \geq \lambda_{A,a} \geq 0, \\
   &\lambda_B = (\lambda_A, 0, \dots, 0) \},
\end{aligned}
\end{equation}
where we have used the same conventions as above and assumed that $a \leq b$.
This can be easily proved directly; it also follows from Mumford's description of the moment polytope, since $\Sym^k(\CC^a \otimes \CC^b) = \bigoplus_\lambda V^a_\lambda \otimes V^b_\lambda$ by Schur-Weyl duality, where the direct sum is over all Young diagrams $\lambda$ with $k$ boxes and no more than $a = \min(a,b)$ rows.
In quantum-mechanical terms, \eqref{eq:bipartite cone} amounts to the well-known assertion that the reduced density matrices of a bipartite pure state are isospectral.

By considering $\SU(a) \times \SU(b) \times \SU(c) \subseteq \SU(ab) \times \SU(c)$, the moment cone for the tripartite problem can now be written in terms of the bipartite cone $C(ab, c)$ and the moment polytopes $\Delta(a,b|\lambda_{AB})$ of the action of $\SU(a) \times \SU(b)$ on the coadjoint $\SU(ab)$-orbits $\calO^{ab}_{\lambda_{AB}}$:
\begin{equation}
\label{eq:tripartite via bipartite conditioning}
\begin{aligned}
  C(a, b, c) = \{ (\lambda_A, \lambda_B, \lambda_C) ~:~~ &\exists \lambda_{AB} \text{ s.th.} (\lambda_{AB}, \lambda_C) \in C(ab, c), \\
  &(\lambda_A, \lambda_B) \in \Delta(a,b|\lambda_{AB}) \}
\end{aligned}
\end{equation}
A first consequence is that the case $c \neq ab$ can always be reduced to $c = ab$. Indeed, \eqref{eq:bipartite cone} and \eqref{eq:tripartite via bipartite conditioning} imply that
\begin{align}
\nonumber
  C(a, b, c) &= \{ (\lambda_A, \lambda_B, (\lambda_{AB}, 0, \dots, 0)) : (\lambda_A, \lambda_B, \lambda_{AB}) \in C(a, b, ab) \} \\
\intertext{if $c > ab$, and}
\label{eq:tripartite cone}
  C(a, b, c) &= \{ (\lambda_A, \lambda_B, \lambda_C) : (\lambda_A, \lambda_B, (\lambda_C,0, \dots, 0)) \in C(a, b, ab) \}
\end{align}
if $c < ab$.
Thus the moment cone for $c > ab$ is isometric to $C(a,b,ab)$, while for $c < ab$ it is obtained as a projection of the latter.
In the case where $c = ab$, it was observed in \cite{Manivel97,Klyachko04} that any normal vector of the moment cone $C(a, b, ab)$ necessarily has a rather special form.
We state this result and give a succint alternative proof that does not rely on the results of \cite{Klyachko04,BerensteinSjamaar00}:

\begin{lemma}
\label{lem:bipartite klyachko}
  Let $H = (H_A, H_B, H_{AB}, z)$ be the normal vector of a non-trivial facet of the moment cone $C(a, b, ab)$ such that $(H_A, H_B) \neq 0$.
  Then:
  \begin{enumerate}
    \item \label{item:extremal edge} $(H_A, H_B)$ is determined by a maximal number of equations of the form $H_{A,i} + H_{B,j} = H_{A,k} + H_{B,l}$,
    \item the components of $H_{AB}$ are precisely all possible partial sums $-H_{A,i}-H_{B,j}$, and
    \item $z = 0$.
  \end{enumerate}
\end{lemma}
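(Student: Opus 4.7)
The plan is to apply \cref{thm:main}, which identifies the normal $H = (H_A, H_B, H_{AB}, z)$ of any non-trivial facet as a Ressayre element: it is admissible (\cref{lem:admissible}) and satisfies the trace identity of \cref{cor:what will be called trace condition}. Introducing the partial sums $s_{i,j} := H_{A,i} + H_{B,j}$ and writing $t_k := H_{AB,k}$, the weights $\omega_{i,j,k}$ lying in the hyperplane $(-,H) = 0$ form the set $S = \{(i,j,k) : s_{i,j} + t_k + z = 0\}$, and admissibility is equivalent to $H$ being the unique (up to scalar) element of $i\mathfrak t$ orthogonal to every weight in $S$.

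I would first argue that the projection $\pi_C \colon S \to [ab]$ is surjective. Indeed, if two values $k_0, k_1 \in [ab]$ were both absent from $\pi_C(S)$, then the traceless vector $H' = (0, 0, e_{C,k_0} - e_{C,k_1}, 0) \in i\mathfrak t$ is orthogonal to every weight in $S$ and linearly independent from $H$ (as $(H_A, H_B) \neq 0$), contradicting uniqueness of $H$. The case of exactly one missing index is similarly excluded, since uniqueness would then force $H_A = H_B = 0$. Consequently, for each $k$ there exists $\alpha(k) \in [a] \times [b]$ with $t_k = -s_{\alpha(k)} - z$. An analogous perturbation argument on $(H_A, H_B)$-type directions---combined with the Ressayre trace condition, which rules out the degenerate configuration $H_{AB} = 0$ and $z = 0$ under $(H_A, H_B) \neq 0$---yields $\pi_{AB}(S) = [a] \times [b]$ and, using the trace identity once more, the multiset equality $\{t_k\}_k = \{-s_{i,j} - z\}_{(i,j)}$.

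Summing the relation $t_k = -s_{\alpha(k)} - z$ over $k \in [ab]$ and invoking tracelessness of $H_A$, $H_B$, $H_{AB}$:
\[
  0 = \sum_k t_k = -\sum_{(i,j)} s_{i,j} - ab\cdot z = -b \sum_i H_{A,i} - a \sum_j H_{B,j} - ab\cdot z = -ab\cdot z,
\]
forcing $z = 0$ (part (3)) and, together with the multiset equality, giving part (2). For part (1), the ``level-set'' structure on $(i,j)$-pairs sharing a common value of $s_{i,j}$ translates into linear equations of the form $H_{A,i} + H_{B,j} = H_{A,k} + H_{B,l}$; rigidity of $(H_A, H_B)$ up to scale (taking into account the two tracelessness constraints on $H_A$ and $H_B$) then requires these to be maximal in number.

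The main obstacle I anticipate is the second step, specifically extracting the multiplicity-matching $\{t_k\} = \{-s_{i,j}-z\}$ as \emph{multisets} (rather than only as sets) from the trace condition. This requires careful bookkeeping of how descents of $H_A, H_B, H_{AB}$ contribute to both sides of the trace identity, and may be most transparent via the bipartite identification $\CC^a \otimes \CC^b \otimes \CC^{ab} \cong \CC^{ab} \otimes \CC^{ab}$ together with Ky Fan's inequality applied on the coadjoint orbit $\calO^{ab}_{\lambda_{AB}}$, which would let us realize the facet inequality as the extremal case of a Schur-Horn-type bound on $\tr((H_A \otimes \id_b + \id_a \otimes H_B)\, \rho_{AB})$.
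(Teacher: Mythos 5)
Your approach is genuinely different from the paper's. The paper works geometrically: it localizes to a facet of the fibre polytope $\Delta(a,b|\lambda_{AB})$, invokes the Heckman formula to deduce that $\Delta(a,b|\lambda_{AB})$ is a union of chambers cut out by affine hyperplanes through points $\pi(w\lambda_{AB})$ spanned by restricted roots, reads off part (1) from that chamber structure, and then obtains parts (2) and (3) by varying $\lambda_{AB}$ in a neighborhood while noting that $w$ can take only finitely many values, forcing $w^{-1}(H_A\otimes\id_B+\id_A\otimes H_B)=-H_{AB}$ and $z=0$. You instead try to extract the conclusion combinatorially from admissibility and the trace condition alone. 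This is an appealing idea, and your surjectivity argument for $\pi_C$ (perturb $H_{AB}$ by a traceless vector supported on missing indices to contradict that $S$ spans a hyperplane) is correct and elegant.

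However, there is a genuine gap that you partly acknowledge but do not close: the passage from surjectivity of the projections to the \emph{multiset} identity $\{t_k\}_k = \{-s_{ij}-z\}_{(i,j)}$. Surjectivity of both $\pi_C$ and $\pi_{AB}$ gives only set-equality $\{t_k\}=\{-s_{ij}-z\}$, and since the two index sets $[ab]$ and $[a]\times[b]$ have equal cardinality there is no automatic reason for the multiplicities to match (e.g.\ nothing in that data distinguishes $\{1,1,2,3\}$ from $\{1,2,2,3\}$). The trace condition is a single scalar equation and cannot by itself enforce a multiset identity, so "using the trace identity once more" does not complete this step. Your Ky Fan/$\CC^{ab}\otimes\CC^{ab}$ suggestion is where the real content must go, but as written it is speculative. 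Relatedly, your argument for part (1) --- that the level-set equations on $s_{ij}$ must be "maximal in number" by "rigidity of $(H_A,H_B)$ up to scale" --- assumes what it is supposed to prove: admissibility makes $H$ unique as a normal to $\Span S$ inside the \emph{full} $i\mathfrak t$, which does not directly imply that $(H_A,H_B)$ is pinned down by the restricted-root equations $s_{ij}=s_{kl}$ within the smaller space $i\mathfrak t(a)\oplus i\mathfrak t(b)$. The paper's chamber-structure argument supplies exactly this missing input, which is why it is used. To repair your route you would need an independent argument (the Duistermaat--Heckman/Heckman route, or a Schubert-calculus one) that the relevant facet of $\Delta(a,b|\lambda_{AB})$ lies in a hyperplane spanned by restricted roots; once that is in hand, parts (1)--(3) follow as in the paper.
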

\begin{proof}
  By \cref{lem:non-trivial facets}, there exists a regular dominant point $\lambda = (\lambda_A, \lambda_B, \lambda_{AB}) \in i \mathfrak t^*_{>0}$ in the interior of this facet.
  In a neighborhood of $\lambda$, the moment cone locally looks like a half-space, so that
  \begin{equation}
    \label{eq:bipartite facet}
    (H_A, \mu_A) + (H_B, \mu_B) \geq z \abs\lambda - (H_{AB}, \lambda_{AB})
  \end{equation}
  is not only a valid inequality that holds for all $(\mu_A, \mu_B) \in \Delta(a,b|\lambda_{AB})$, as follows from \eqref{eq:tripartite via bipartite conditioning}, but in fact a facet of the moment polytope $\Delta(a,b|\lambda_{AB})$, since we have assumed that $(H_A, H_B) \neq 0$.

  While determining the moment polytopes $\Delta(a,b|\lambda_{AB})$ is just as hard as determining the cone $C(a, b, ab)$, this reformulation gives us an additional insight:
  Recall that the Duistermaat-Heckman measure for the action of the maximal torus $T(a) \times T(b)$ of $\SU(a) \times \SU(b)$ on the coadjoint $\SU(ab)$-orbit $\calO^{ab}_{\lambda_{AB}}$ is a measure on the abelian moment polytope.
  Now let $\pi$ denote the projection $i \mathfrak t(ab)^* \rightarrow i \mathfrak t(a)^* \oplus i \mathfrak t(b)^*$.
  The affine hyperplanes through some $\pi(w \lambda_{AB})$ spanned by subsets of the \emph{restricted roots} $\pi(\alpha)$, $\alpha \in R_{ab,+}$, partition the abelian moment polytope into a finite number of polyhedral chambers,
  and it as an immediate consequence of the Heckman formula \cite{Harish-Chandr57, Heckman82} that the measure has a polynomial density function on each chamber (cf.~\cite{BoysalVergne09}).
  The Duistermaat-Heckman measure for the action of $\SU(a) \times \SU(b)$ can be recovered by applying a number of partial derivatives to the measure for $T(a) \times T(b)$ (e.g., \cite{ChristandlDoranKousidisWalter2014}).
  It follows that the moment polytope $\Delta(a,b|\lambda_{AB})$, which is the support of the latter measure, is equal to a finite union of chambers; in particular, its non-trivial facets are contained in hyperplanes of the form just described.

  Applied to the facet \eqref{eq:bipartite facet}, it follows that its normal vector $(H_A, H_B)$ is defined by a maximal number of equations of the form
  \begin{align*}
    ((H_A, H_B), \pi(\alpha)) &= (H_A \otimes \id_B + \id_A \otimes H_B, \alpha) \\
    &= H_{A,i} + H_{B,j} - H_{A,k} - H_{B,l} = 0
  \end{align*}
  for some indices $i, j, k, l$. This shows the first assertion.
  Moreover, since the facet contains $\pi(w \lambda_{AB})$ for some $w \in S_{ab}$ we obtain that
  \begin{align*}
    ((H_A, H_B), \pi(w \lambda_{AB})) &= (w^{-1}(H_A \otimes \id_B + \id_A \otimes H_B), \lambda_{AB}) \\
    &= z \abs \lambda - (H_{AB}, \lambda_{AB}).
  \end{align*}
  A priori, the permutation $w$ will depend on the choice of $\lambda_{AB}$.
  However, there are only finitely many permutations $w$, while we may vary $\lambda_{AB}$ arbitrarily in a small neighborhood.
  It follows that $w^{-1}(H_A \otimes \id_B + \id_A \otimes H_B) = -H_{AB}$ and $z = 0$ (since $H_A \otimes \id_B + \id_A \otimes H_B$ is traceless).
  This shows the second and third assertion.
\end{proof}

Following Klyachko, we shall call any $(H_A, H_B)$ that satisfies condition \ref{item:extremal edge} of \cref{lem:bipartite klyachko} an \emph{extremal edge} if it is in addition dominant and primitive (in the dual of the root lattice).
There are only finitely many extremal edges and we shall denote them by $\mathcal E_+(a, b)$.
The extremal edges span the extreme rays of the \emph{cubicles}, which are the full-dimensional convex cones of elements $(H_A,H_B)$ cut out by a maximal set of inequalities of the form $H_{A,i} + H_{B,j} \geq H_{A,k} + H_{B,l}$.
In other words, a cubicle is defined as a set of $(H_A,H_B)$ with fixed order of the $H_{A,i} + H_{B,j}$ and can therefore be encoded by a standard Young tableaux of rectangular shape $a \times b$ \cite{Klyachko04}.
This gives a straightforward way of computationally determining all extremal edges (see \cref{tab:bipartite admissibility}).

We remark that standard tableaux that correspond to cubicles have also been called \emph{additive} in the literature \cite{Vallejo14,Manivel14}.
Manivel has shown that they can be associated with minimal regular faces of the moment cone for which the corresponding Kronecker coefficients stabilize \cite{Manivel97,Manivel14}.
The corresponding extremal edges determine non-trivial facets of $C(a,b,ab)$ incident to this face \cite{Manivel97,Klyachko04}.
However, not all non-trivial facets can be obtained in this way.

\begin{table}
\begin{center}
  \begin{tabular}{lrrr}
    \toprule
    \multicolumn{1}{c}{$(a, b)$} & \multicolumn{1}{c}{$(2, 2)$ \cite{Klyachko04}} & \multicolumn{1}{c}{$(3, 3)$ \cite{Klyachko04}} & \multicolumn{1}{c}{$(4, 4)$} \\
\midrule
Tableaux & 2 & 42 & 24024 \\
Cubicles & 2 & 36 & 6660 \\
Extremal edges ($\abs{\mathcal E_+}$) & 3 (2) & 17 (10) & 457 (233) \\
\bottomrule
  \end{tabular}
\end{center}
  \caption{Bipartite candidates (counts in parentheses are with permutations removed)}
  \label{tab:bipartite admissibility}
\end{table}

\begin{corollary}
\label{cor:bipartite admissibility}
  Let $H = (H_A, H_B, H_C, z)$ be the normal vector of a non-trivial facet of the moment cone $C(a,b,c)$, where $c \leq ab$.
  Then $(H_A,H_B) = 0$ or $(H_A,H_B)$ is proportional to an element in the $S_a \times S_b$-orbit of $\mathcal E_+(a, b)$ (i.e., an extremal edge up to $S_a \times S_b$ and rescaling).
\end{corollary}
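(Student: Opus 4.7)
The plan is to mimic the proof of \cref{lem:bipartite klyachko} almost verbatim, using the description of $C(a,b,c)$ as a fibered union over the bipartite cone together with \eqref{eq:bipartite cone}. First I would dispose of the case $(H_A,H_B)=0$, which is part of the statement. Otherwise, by \cref{lem:non-trivial facets} I pick a point $\lambda=(\lambda_A,\lambda_B,\lambda_C)$ in the relative interior of the facet with $\lambda\in i\mathfrak t^*_{>0}$. Since $c\le ab$, equation \eqref{eq:bipartite cone} (applied to the pair $(ab,c)$) forces the auxiliary parameter $\lambda_{AB}$ in \eqref{eq:tripartite via bipartite conditioning} to equal $\lambda_{AB}=(\lambda_C,0,\dots,0)\in\RR^{ab}$, which is dominant in $i\mathfrak t(ab)^*_+$ (although in general no longer regular). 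Thus, locally near $\lambda$,
\[
  C(a,b,c)=\{(\mu_A,\mu_B,\lambda_C):(\mu_A,\mu_B)\in\Delta(a,b\mid(\lambda_C,0,\dots,0))\}.
\]

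Next I would translate the defining inequality of the facet into a statement about $\Delta(a,b\mid\lambda_{AB})$. Rewriting $(H,\nu)\ge 0$ for $\nu=(\mu_A,\mu_B,\lambda_C)$ gives the inequality
\[
  (H_A,\mu_A)+(H_B,\mu_B)\ge -(H_C,\lambda_C)-z\abs\lambda,
\]
valid for all $(\mu_A,\mu_B)\in\Delta(a,b\mid\lambda_{AB})$ near $(\lambda_A,\lambda_B)$. Because $(H_A,H_B)\ne 0$, this inequality is sharp at $(\lambda_A,\lambda_B)$ and therefore cuts out a non-trivial facet of the moment polytope $\Delta(a,b\mid\lambda_{AB})$.

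At this point I would invoke the Heckman/Duistermaat--Heckman argument from the proof of \cref{lem:bipartite klyachko} verbatim: the density of the Duistermaat--Heckman measure for the $T(a)\times T(b)$-action on the coadjoint orbit $\calO^{ab}_{\lambda_{AB}}$ is piecewise polynomial with chambers cut out by affine hyperplanes spanned by restricted roots $\pi(\alpha)$ through the points $\pi(w\lambda_{AB})$; taking partial derivatives to pass to the $\SU(a)\times\SU(b)$ measure, $\Delta(a,b\mid\lambda_{AB})$ is a finite union of such chambers, so its non-trivial facets lie in hyperplanes spanned by restricted roots. Consequently $(H_A,H_B)$ is determined by a maximal set of equations of the form $H_{A,i}+H_{B,j}=H_{A,k}+H_{B,l}$, which is exactly condition~(1) of \cref{lem:bipartite klyachko}. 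Hence $(H_A,H_B)$ spans an extreme ray of some cubicle; after rescaling it by a positive constant and dominantizing by an element of $S_a\times S_b$, one lands in $\mathcal E_+(a,b)$, which is the claim.

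The main subtlety---and really the only point where this proof is not a word-for-word copy of \cref{lem:bipartite klyachko}---is that $\lambda_{AB}=(\lambda_C,0,\dots,0)$ is dominant but lies on a wall of the Weyl chamber of $\SU(ab)$ whenever $c<ab$, so the coadjoint orbit $\calO^{ab}_{\lambda_{AB}}$ is not of maximal dimension. I expect this to be benign: the Heckman formula for the pushforward measure and the resulting chamber structure still make sense in the non-regular setting (the restricted-root hyperplane arrangement is defined purely from the Weyl orbit of $\lambda_{AB}$ and does not require regularity), so the conclusion that facets of $\Delta(a,b\mid\lambda_{AB})$ live in restricted-root hyperplanes goes through unchanged. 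Verifying this carefully is the only real work; the rest is a direct appeal to \cref{lem:bipartite klyachko}.
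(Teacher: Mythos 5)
Your proposal is a genuinely different route from the paper's, and it is essentially sound, but it does carry the real technical debt you flag at the end. The paper's own proof takes a cleaner path: it uses equation~\eqref{eq:tripartite cone} to regard $C(a,b,c)$ as the preimage of $C(a,b,ab)$ under the linear embedding $(\lambda_A,\lambda_B,\lambda_C)\mapsto(\lambda_A,\lambda_B,(\lambda_C,0,\dots,0))$. A complete (though possibly redundant) set of defining inequalities for a preimage polyhedron is obtained by pulling back the facet inequalities of the image, so every facet normal $H$ of $C(a,b,c)$ comes from a facet normal $H'=(H_A,H_B,H_{AB},0)$ of $C(a,b,ab)$ with the same $(H_A,H_B)$. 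If $H'$ was non-trivial, \cref{lem:bipartite klyachko} (which is proved at a \emph{regular} point $\lambda_{AB}$) gives the extremal-edge conclusion; if $H'$ was trivial, the pullback is either a trivial facet of $C(a,b,c)$ or has $(H_A,H_B)=0$. This is a pure linear-algebra reduction, so the whole Heckman machinery is only ever applied at a regular orbit, once, inside \cref{lem:bipartite klyachko}.

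By contrast, you rerun the Heckman/Duistermaat--Heckman argument directly at $\lambda_{AB}=(\lambda_C,0,\dots,0)$, which is on a Weyl-chamber wall of $\SU(ab)$ whenever $c<ab$. You correctly identify this as the only non-mechanical step. Two comments on that. First, the claim ``locally near $\lambda$, $C(a,b,c)=\{(\mu_A,\mu_B,\lambda_C):\dots\}$'' should really be a statement about the slice of $C(a,b,c)$ at fixed $\lambda_C$; as written it asserts an equality of sets of different dimensions, though your subsequent use of it (full-dimensionality of the slice $\Delta(a,b\mid\lambda_{AB})$ near $(\lambda_A,\lambda_B)$, and that $(H_A,-)+(H_B,-)\ge\text{const}$ is a facet of it because $(H_A,H_B)\ne0$) is correct. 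Second, and more importantly, extending the piecewise-polynomiality and chamber description of the pushed-forward Duistermaat--Heckman measure to singular coadjoint orbits is believable but is not in the paper (the paper invokes Heckman's formula only at a regular $\lambda_{AB}$), so your proof as written has a gap that would need to be filled with a reference or an argument (e.g.\ that the $T(ab)$-fixed points are still the $w\lambda_{AB}$ and that the walls of the DH measure for a compact Hamiltonian torus action are contained in the restricted-root hyperplanes through the images of fixed points, which does not require regularity). The paper's pullback argument buys exactly the avoidance of this verification, at the small cost of also having to track what happens to the trivial facets of $C(a,b,ab)$ under restriction (which is where the $(H_A,H_B)=0$ alternative arises, rather than being ``disposed of'' at the start as in your version).
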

\begin{proof}
  In view of \eqref{eq:tripartite cone}, we may obtain finite and complete set of inequalities for $C(a,b,c)$ by taking any facet $H = (H_A, H_B, H_{AB}, 0)$ for $C(a,b,ab)$ and restricting $H_{AB}$ to its first $c$ components (that is, set $H_C$ to be the traceless part of $(H_{AB,1},\dots,H_{AB,c})$ and define $z$ accordingly).
  Such an inequality will not necessarily define a facet of $C(a,b,c)$, but all facets arise in this way.
  If we started with a non-trivial inequality then the claim follows from \cref{lem:bipartite klyachko}.
  If we started with a trivial inequality then we either obtain a trivial inequality or $(H_A,H_B) = 0$.
\end{proof}

\Cref{lem:bipartite klyachko} is efficient in finding candidates for facets of $C(a, b, ab)$, but not necessarily so for $C(a, b, c)$with $c < ab$. Indeed, while we know from the proof of \cref{cor:bipartite admissibility} that any facet of the latter can be obtained by ``restriction'' of a facet of the former, for each given facet there are many possible restrictions, as we need to pick a subset of $c$ components out of the $ab$ components of the given $H_{AB}$.
If $c \ll ab$ then it can be more efficient to apply \cref{cor:bipartite admissibility} to all three of $(H_A,H_B)$, $(H_A,H_C)$ and $(H_B, H_C)$ (e.g., in the case of $a=b=c=4$).
For this we will use the following lemma:


\begin{lemma}
\label{lem:primitivity}
  Let $(H_A, H_B)$ be an extremal edge.
  Then $H_A$ and $H_B$ are each either zero or primitive in the dual of the root lattice.
\end{lemma}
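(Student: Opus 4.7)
I proceed by contradiction. Suppose $H_A\neq 0$ is not primitive, so the greatest common divisor $d_A$ of the integers $H_{A,i}-H_{A,k'}$ satisfies $d_A\geq 2$; I will derive that $d_A$ also divides every difference $H_{B,j}-H_{B,l}$, contradicting primitivity of the pair $(H_A,H_B)$ in the dual root lattice of $\SU(a)\times\SU(b)$. The analogous argument handles $H_B$.

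The main tool is the linear structure cut out by the set $E$ of equalities $H_{A,i}+H_{B,j}=H_{A,k'}+H_{B,l}$ satisfied by $(H_A,H_B)$. By the extremal-edge condition (item (i) of \cref{lem:bipartite klyachko}), the solution space of $E$ inside the traceless subspace $i\mathfrak{t}(a)^*\oplus i\mathfrak{t}(b)^*$ is exactly the ray $\RR\cdot(H_A,H_B)$. Every equation in $E$ has coefficients summing to zero separately on the $A$- and $B$-sides, so the two trace constraints $\sum_i H_{A,i}=\sum_j H_{B,j}=0$ are linearly independent from $E$. Consequently the solution space $L'$ of $E$ in the ambient $\RR^{a+b}$ is $3$-dimensional, spanned by $(H_A,H_B)$ together with the two constant shifts $(\mathbf{1}_a,0)$ and $(0,\mathbf{1}_b)$.

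Next I introduce the graph $G_B$ on $\{1,\dots,b\}$ whose edges are the pairs $\{j,l\}$ with $j\neq l$ appearing as $B$-indices of some equation in $E$, and claim that $G_B$ is connected. Otherwise, let $C\subsetneq\{1,\dots,b\}$ be a nonempty union of connected components and set $e_C=\sum_{j\in C}e_{B,j}$. The $B$-indices of every equation in $E$ lie in a single component of $G_B$, so the difference $e_{C,l}-e_{C,j}$ vanishes and $(0,e_C)$ satisfies every equation in $E$, i.e.\ $(0,e_C)\in L'$. A short check using the fact that $H_A$ is traceless and nonzero (hence non-constant) shows that $(0,e_C)$ is linearly independent of the three generators of $L'$, contradicting $\dim L'=3$.

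Connectivity of $G_B$ now yields the divisibility. For any indices $j,l$, telescoping $H_{B,j}-H_{B,l}$ along a $G_B$-path writes it as a sum of edge-differences, each of which, by its defining equation, equals an $H_A$-difference $H_{A,k'}-H_{A,i}$ and is therefore divisible by $d_A$. Hence $d_A$ divides $d_B:=\gcd(H_{B,j}-H_{B,l})$, and combined with the primitivity hypothesis $\gcd(d_A,d_B)=1$ this forces $d_A=1$, the desired contradiction. I do not foresee a serious obstacle; the only delicate point is the linear-algebra bookkeeping that produces $\dim L'=3$ and recognizes $(0,e_C)$ as an additional solution whenever $G_B$ is disconnected.
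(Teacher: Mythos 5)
Your proof is correct, and while it shares the same underlying strategy as the paper's (dimension count plus a divisibility argument), the mechanism is genuinely different. The paper works \emph{dually}, with the hyperplane $H^\perp$ spanned by the restricted roots $S$: its Case~1 ($\Span S_B \subsetneq i\mathfrak t(b)$, which is precisely your disconnected case) concludes $H_A = 0$ by comparing dimensions, and its Case~2 invokes total unimodularity of the $A_{b-1}$ root matrix to upgrade ``$S_B$ spans $i\mathfrak t(b)$'' to ``$S_B$ spans the root \emph{lattice},'' whence $H_B/n$ lies in the dual root lattice. You instead work with the solution space $L'$ of the defining equations and the graph $G_B$: your dimension count of $L'$ is the annihilator-side version of the paper's dimension count of $H^\perp$, and in place of total unimodularity you use connectedness of $G_B$ and telescoping along a path, which expresses each $H_{B,j}-H_{B,l}$ as a $\{0,\pm1\}$-combination of edge differences and hence directly yields the divisibility $d_A \mid (H_{B,j}-H_{B,l})$. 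Your route is somewhat more elementary and self-contained (no appeal to unimodularity as a black box), at the cost of needing the explicit bookkeeping about $L'$ and the auxiliary vectors $(0,e_C)$; both sides of your computation check out, including the linear independence of $(0,e_C)$ from $(H_A,H_B)$, $(\mathbf 1_a,0)$, $(0,\mathbf 1_b)$ and the final contradiction with primitivity of the pair.
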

\begin{proof}
  By symmetry, it suffices to show assertion for $H_A$.
  Let $\Omega(a, b) := \{(\alpha, \beta) \neq 0 : \alpha \in R_a \cup \{0\}, \beta \in R_b \cup \{0\} \}$ and denote by $S \subseteq \Omega(a, b)$ the subset of all elements that are orthogonal to $H := (H_A, H_B)$.
  Note that the linear hyperplane $H^\perp$ is always spanned by the set of $S$ (this is just a reformulation of the defining property of an extremal edge).
  Define $S_B := \{ \beta : (\alpha, \beta) \in S \}$ and consider the subspace $\mathfrak m := \Span_{\RR} S_B \subseteq i \mathfrak t(b)$.

  Case 1: $\mathfrak m \subsetneq i \mathfrak t(b)$.
  Then $H^\perp = \Span_{\RR} S \subseteq i \mathfrak t(a) \oplus \mathfrak m$.
  By comparing dimensions, it follows that we have in fact equality, $H^\perp = i \mathfrak t(a) \oplus \mathfrak m$.
  We conclude that $H_A = 0$.

  Case 2: $\mathfrak m = i \mathfrak t(b)$.
  Since the matrix with columns the roots of $A_{b-1}$ is totally unimodular (e.g., \cite[p.~274, (18)]{Schrijver86}), it follows that $S_B$ spans the root lattice.
  To find a contradiction, suppose that $H_A$ is neither zero nor primitive.
  Then we can write $H_A = n H'_A$ for some $n > 1$ and a non-zero element $H'_A$ in the dual of the root lattice.
  But then,
  \[ (H_B, \beta) = -(H_A, \alpha) = -n(H'_A, \alpha_i) \in n \ZZ \]
  for all $(\alpha, \beta) \in S$.
  Since the $S_B$ spans the root lattice, we conclude that also $H_B / n$ is an element of the dual of the root lattice.
  But then $H = (H_A, H_B)$ is not primitive, which is the desired contradiction.
\end{proof}

\begin{lemma}
\label{lem:tripartite admissibility}
  Let $H = (H_A, H_B, H_C, z)$ be the normal vector of a non-trivial facet of the moment cone $C(a,b,c)$, where $c \leq ab$.
  Then $(H_A, H_B, H_C)$ is proportional to an element in the $S_a \times S_b \times S_c$-orbit of
  \begin{align*}
    \mathcal E_+(a, b, c) := \{ &(H_A, H_B, H_C) \neq 0 :
    (H_A, H_B) \in \mathcal E_+(a, b) \cup \{0\} , \, \\
      &(H_A, H_C) \in \mathcal E_+(b, c) \cup \{0\}, \,
      (H_B, H_C) \in \mathcal E_+(a, c) \cup \{0\}
    \}.
  \end{align*}
\end{lemma}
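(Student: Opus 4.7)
The plan is to extract each of the three pairwise conditions on $(H_A,H_B)$, $(H_A,H_C)$, and $(H_B,H_C)$ independently by applying \cref{cor:bipartite admissibility} after permuting the three tensor factors of $\CC^a \otimes \CC^b \otimes \CC^c$. Permuting tensor factors yields canonical isomorphisms of $K$-representations and hence of moment cones, $C(a,b,c) \cong C(a,c,b) \cong C(b,c,a)$, which merely reorder the components of the normal vector.

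First, I would apply \cref{cor:bipartite admissibility} directly to the given non-trivial facet of $C(a,b,c)$ to show that $(H_A,H_B)$ is either zero or, up to rescaling, lies in the $S_a \times S_b$-orbit of $\mathcal E_+(a,b)$. Second, I view the same facet through the identification $C(a,b,c) \cong C(a,c,b)$: its normal becomes $(H_A,H_C,H_B,z)$, and it remains non-trivial. Under the standing assumption $1 < a \leq b \leq c \leq ab$, the dimension hypothesis $b \leq ac$ required by \cref{cor:bipartite admissibility} holds because $ac \geq 2c \geq 2b > b$, and the corollary yields the analogous structure for $(H_A,H_C)$. An entirely analogous argument using $C(a,b,c) \cong C(b,c,a)$ together with $a \leq bc$ (from $bc \geq a^2 \geq 2a > a$) gives the constraint on $(H_B,H_C)$.

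The triple $(H_A,H_B,H_C)$ must be non-zero: otherwise the defining functional would reduce to $z \abs\lambda$, which cannot cut out a non-trivial facet since $\abs\lambda \geq 0$ everywhere on the cone. To package the three bipartite conclusions into a single statement about $\mathcal E_+(a,b,c)$, I would invoke \cref{lem:primitivity}: it guarantees that the individual component $H_A$ is primitive (or zero) as seen from either of the extremal edges governing $(H_A,H_B)$ and $(H_A,H_C)$, and similarly for $H_B$ and $H_C$. Consequently the individual rescalings and permutations in $S_x \times S_y$ for the various pairs are forced to agree, and together they assemble into a single scalar and a single element of $S_a \times S_b \times S_c$ bringing $(H_A,H_B,H_C)$ into $\mathcal E_+(a,b,c)$.

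I do not anticipate a serious obstacle: the substantive work is packaged in \cref{cor:bipartite admissibility} and \cref{lem:primitivity}, and the rest reduces to the permutation symmetry of the triple tensor product together with the verification that the dimension hypothesis $c \leq ab$ survives each reordering of the factors, which is ensured by the ordering $a \leq b \leq c$. The mildly delicate step is the final consistency check, where primitivity of each individual component is used to align the three separately obtained rescalings into one overall rescaling.
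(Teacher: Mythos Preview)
Your proposal is correct and follows essentially the same route as the paper: apply \cref{cor:bipartite admissibility} to each of the three pairs via the permutation isomorphisms of the tensor factors, then invoke \cref{lem:primitivity} to force the three separate rescalings to coincide. The paper organizes the alignment as an explicit case split on how many of $H_A,H_B,H_C$ vanish (after first reducing to dominant $(H_A,H_B,H_C)$ via Weyl equivariance), which is exactly the content of your ``mildly delicate step''; your explicit check that the hypothesis ``third dimension $\leq$ product of the other two'' survives each permutation is a point the paper leaves implicit.
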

\begin{proof}
  Since the following argument is equivariant under the Weyl group $S_a \times S_b \times S_c$, we may without loss of generality assume that $(H_A, H_B, H_C)$ is dominant

  Case 1: Only on component is non-zero, say, $H_C \neq 0$ and therefore $(H_A, H_B) = 0$.
  We may rescale $H$ such that $H_C$ is primitive. Then $(H_A, H_C)$ and $(H_B, H_C)$ are also primitive and therefore extremal edges by \cref{cor:bipartite admissibility}.

  Case 2: At least two components are non-zero, say, $H_A$ and $H_B$.
  We may rescale $H$ such that $(H_A, H_B)$ is primitive.
  Since $H_A$ and $H_B$ are non-zero, \cref{lem:primitivity} shows that both $H_A$ and $H_B$ are individually primitive.
  It follows that three of $(H_A, H_B)$, $(H_A, H_C)$ and $(H_B, H_C)$ are primitive and therefore extremal edges by \cref{cor:bipartite admissibility}.
\end{proof}

By \cref{thm:main}, any non-trivial facet is necessarily admissible.
As admissibility is a Weyl group-invariant property, we immediately obtain the following corollary:

\begin{corollary}
\label{cor:tripartite admissibility}
  Let $H$ be the normal vector of a non-trivial facet of the moment cone $C(a,b,c)$, where $c \leq ab$.
  Then $H$ is proportional to an element in the $S_a \times S_b \times S_c$-orbit of
  \begin{align*}
    \mathcal E_{+,\text{adm}}(a, b, c) := \{ H = (H_A, H_B, H_C, z) \neq 0 :
      &(H_A, H_B, H_C) \in \mathcal E_+(a, b, c), \\
      &H \text{ admissible} \}.
  \end{align*}
\end{corollary}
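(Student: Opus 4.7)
The plan is to deduce the corollary directly from \Cref{lem:tripartite admissibility} and the admissibility half of \Cref{thm:main}, with only a small bookkeeping step concerning the $\U(1)$-component $z$.

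First I would invoke \Cref{lem:tripartite admissibility}: since $H$ is the normal to a non-trivial facet of $C(a,b,c)$ with $c \leq ab$, the $\SU$-part $(H_A,H_B,H_C)$ is proportional to some $w \cdot (H'_A,H'_B,H'_C)$ with $w \in S_a \times S_b \times S_c$ and $(H'_A,H'_B,H'_C) \in \mathcal E_+(a,b,c)$. Writing the proportionality constant as $t>0$, we have $(H_A,H_B,H_C) = t \, w \cdot (H'_A,H'_B,H'_C)$. Because the Weyl group $S_a \times S_b \times S_c$ acts trivially on the $\U(1)$-direction, we can absorb $z$ by setting $z' := z/t$ and defining $H'' := (H'_A,H'_B,H'_C,z')$, so that $H = t \, w \cdot H''$ as elements of $i\mathfrak t$.

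Next I would extract admissibility from \Cref{thm:main}: any normal vector of a non-trivial facet is a Ressayre element, which by \Cref{dfn:ressayre} is admissible, meaning the hyperplane $(-,H)=0$ is spanned by weights in $\Omega$. Admissibility is obviously invariant under positive rescalings (the hyperplane is unchanged) and, as noted right after \Cref{dfn:ressayre}, invariant under the action of the Weyl group (which permutes $\Omega$). Applying these two invariances to $H = t\, w \cdot H''$ gives that $H''$ is admissible as well. Together with $(H'_A,H'_B,H'_C) \in \mathcal E_+(a,b,c)$ and $H'' \neq 0$, this places $H''$ in $\mathcal E_{+,\mathrm{adm}}(a,b,c)$, and exhibits $H$ as proportional to an element in the $S_a \times S_b \times S_c$-orbit of $\mathcal E_{+,\mathrm{adm}}(a,b,c)$, as required.

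There is essentially no obstacle here beyond the observation above about the $z$ coordinate; the entire substance of the argument is packaged into \Cref{lem:tripartite admissibility} and \Cref{thm:main}. The only mild subtlety is to remember that the extremal-edge conditions only constrain the semisimple part $(H_A,H_B,H_C)$ of the normal vector, and that admissibility is what links the remaining scalar $z$ to the weight lattice so that one genuinely obtains a finite list $\mathcal E_{+,\mathrm{adm}}(a,b,c)$ of candidates modulo the Weyl group and rescaling.
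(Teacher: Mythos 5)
Your proposal is correct and follows essentially the same route as the paper: it combines \Cref{lem:tripartite admissibility} (which constrains the $\mathfrak{su}$-part up to Weyl group and rescaling) with the admissibility of non-trivial facet normals from \Cref{thm:main} and the Weyl-invariance of admissibility. The only addition over the paper's terse remark is your explicit bookkeeping of the $\U(1)$-coordinate $z$, which is a harmless clarification rather than a different argument.
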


For any given $(H_A, H_B, H_C) \in \mathcal E_+(a, b, c)$, there are in general several $z$ such that $H = (H_A, H_B, H_C, z)$ is admissible (or no such $z$ at all). The sets $\mathcal E_+(a, b, c)$ and $\mathcal E_{+,\text{adm}}$ consist of dominant, primitive elements, and they can be easily obtained algorithmically from the sets of extremal edges.

As a final step we implement the trace condition \eqref{eq:trace condition}.
For this, recall that the \emph{length} of a permutation $\pi$ is equal to the number of inversions, $\ell(\pi) = \#\{ (i,j) : i < j, \pi(i) > \pi(j) \}$.
A permutation $\pi$ is called a \emph{shuffle} with respect to a dominant vector $x \in \RR^d$ if $x_i = x_j$ for $i < j$ implies that $\pi(i) < \pi(j)$.
Then the following is an immediate consequence of \cref{cor:tripartite admissibility,cor:trace criterion}:

\begin{corollary}
\label{cor:tripartite candidates}
  Let $H$ be the normal vector of a non-trivial facet of the moment cone $C(a,b,c)$, where $c \leq ab$.
  Then $H$ is proportional to an element in
  \[
    \mathcal E(a, b, c) := \{ H = w \cdot H_0 : H_0 \in \mathcal E_{+,\text{adm}}, w \in W(H_0) \}
   \]
  where $W(H_0)$ denotes the set of triples $w = (w_A,w_B,w_C) \in S_a \times S_b \times S_c$ such that $(w_0 w_A, w_0 w_B, w_0 w_C)$ is a triple of shuffles with respect to the components of $H_0$ whose lengths sum up to $\dim \calH(H_0 < 0)$.
\end{corollary}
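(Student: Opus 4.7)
The plan is to chain together the two preceding corollaries. By Corollary 5.7, any non-trivial facet's normal vector $H$ is proportional to $w \cdot H_0$ for some $H_0 \in \mathcal E_{+,\text{adm}}(a,b,c)$ and some Weyl group element $w = (w_A, w_B, w_C) \in S_a \times S_b \times S_c$. The pair $(w, H_0)$ is not unique in general: if $H_0$ has repeated components then multiplying $w$ on the right by any element of the stabilizer of $H_0$ (which is a product of parabolic subgroups of $S_a \times S_b \times S_c$) yields another valid factorization. The natural strategy is to pin down a canonical representative within this coset.

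For this I invoke Corollary 4.2, which selects the unique $w$ in the coset such that $w_0 w \cdot \alpha \in R_{G,+}$ for every positive root $\alpha$ with $(\alpha, H_0) = 0$. I then translate this root-theoretic condition into coordinates: for $\SU(d)$ a positive root $\alpha_{i,j}$ ($i<j$) is orthogonal to $H_0$ exactly when $H_{0,i} = H_{0,j}$, and $w \cdot \alpha_{i,j}$ is positive exactly when $w(i) < w(j)$. Hence the Corollary 4.2 condition applied factorwise reads: each of $w_0 w_A$, $w_0 w_B$, $w_0 w_C$ is a shuffle with respect to the corresponding component of $H_0$. Here I use that the longest element of $S_a \times S_b \times S_c$ factors as $w_0 = (w_{0,A}, w_{0,B}, w_{0,C})$ and that the length function is additive across the product.

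To produce the length condition in the definition of $W(H_0)$, I use that any non-trivial facet corresponds to a Ressayre element (\cref{prp:non-trivial walls give isomorphism}), hence satisfies the trace condition $\dim \mathfrak n_-(H<0) = \dim_\CC \calH(H<0)$. Since the multiset of weights is $W_K$-invariant, $\dim_\CC \calH(H<0) = \dim_\CC \calH(H_0 < 0)$. Plugging into Corollary 4.2 with the canonical $w$ fixed above identifies the trace condition with $\ell(w_0 w) = \dim_\CC \calH(H_0 < 0)$, which is precisely the length clause in the definition of $W(H_0)$. Together this places $H$ in the $\mathcal E(a,b,c)$ orbit up to proportionality.

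The proof is essentially an assembly: the only real work is bookkeeping, namely checking that the abstract Corollary 4.2 translates cleanly into the product-group setting (both for the shuffle reformulation and for the additivity of lengths) and verifying that passing from $H$ to a Weyl-group preimage $H_0$ does not disturb the dimensions entering the trace condition. No new geometric or representation-theoretic input is required beyond \cref{cor:tripartite admissibility} and \cref{cor:trace criterion}.
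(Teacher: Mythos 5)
Your proof is correct and matches the paper's route exactly: the paper states this corollary as an immediate consequence of \cref{cor:tripartite admissibility} and \cref{cor:trace criterion}, and your argument supplies precisely the bookkeeping left implicit there (translating the root-theoretic condition into factorwise shuffles, invoking additivity of length across $S_a\times S_b\times S_c$, and using Weyl-invariance of weight multiplicities to pass from $\dim\calH(H<0)$ to $\dim\calH(H_0<0)$).
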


We remark that it is straightforward to computationally generate all shuffles of a given length by adapting the algorithm of Effler and Ruskey \cite{EfflerRuskey03}.
The upshot of \cref{cor:tripartite candidates} then is that the moment cone $C(a, b, c)$ is cut out by those candidates in $\mathcal E(a, b, c)$ that are also Ressayre elements (together with the trivial inequalities):
\[ C(a, b, c) = \{ \lambda \in i \mathfrak t^*_+ : (H, \lambda) \geq 0 \text{ for all Ressayre elements $H \in \mathcal E(a, b, c)$} \}. \]
We conclude this section with an illustrative example of the method.

\begin{example}
\label{ex:polygonal}
  Consider the moment cone $C(d,d,d)$ for any $d > 1$.
  It is not hard to verify that all pairs formed from $(1,\dots,1,1-d)$ and $(d-1,-1,\dots,-1)$ are extremal edges 
  (formally, we should divide by $d$ to work with primitive elements, but we refrain from doing so in the interest of readability).
  Therefore,
  \[ \big((1,\dots,1,1-d), (1,\dots,1,1-d), (d-1,-1,\dots,-1)\big) \in \mathcal E_+(d,d,d). \]
  There are two possible values of $z$ which extend the above to an admissible element $H_0 \in \mathcal E_{+,\text{adm}}(d,d,d)$.
  The first option is $z=-1$. However, the dimension of $\calH(H_0 < 0)$ is $2(d-1)^2 + d$, and therefore strictly larger than the dimension of $\mathfrak n_-$. Thus the trace condition \eqref{eq:trace condition} can never be satisfied!

  The second option is $z=d-1$. Here we find that $\dim \calH(H_0 < 0) = d-1$.
  Consider the Weyl group element $w$ such that $w_0 w = (\id,\id,\sigma)$ where $\sigma$ is the permutation that sends $1 \mapsto d$ and all other $k \mapsto k-1$.
  Observe that $w$ is an element in the set $W(H_0)$ defined in \cref{cor:tripartite candidates}, so that we obtain the following candidate:
  \begin{align*}
    H &= w \cdot H_0 = \big( (1-d,1,\dots,1), (1-d,1,\dots,1), (d-1,-1,\dots,-1), d-1 \big) \\
    &\in \mathcal E(d,d,d).
  \end{align*}
  We now verify that $H$ is a Ressayre element. For this, observe that we have
  \begin{align*}
    \mathfrak n_-(H < 0) &= \Span \{ (0,0,E_{\alpha_{21}}), \dots, (0,0,E_{\alpha_{d1}}) \} \\
    \calH(H = 0) &= \Span \{ e_{111}, e_{i1k}, e_{1jk} : i,j,k=2,\dots,d \} \\
    \calH(H < 0) &= \Span \{ e_{112}, \dots, e_{11d} \},
  \end{align*}
  where $e_{ijk} := e_i \otimes e_j \otimes e_k$, with $e_1, \dots, e_d$ the standard basis in $\CC^d$.
  We note that all basis vectors in $\calH(H=0)$ are annihilated by $\mathfrak n_-(H<0)$ except for $e_{111}$, which is sent by the lowering operator $(0,0,E_{\alpha_{k1}})$ to $e_{11k}$ ($k=2,\dots,d$).
  Thus the tangent map \eqref{eq:tangent map} is diagonal with respect to the basis given above and the determinant polynomial \eqref{eq:determinant} is given by
  \[ \delta_H(\psi) = \det \left(\begin{smallmatrix} \psi_{111} & & \\ & \ddots & \\ & & \psi_{111} \end{smallmatrix}\right) = \psi_{111}^{d-1} \neq 0, \]
  for any $\psi = \sum_{ijk} \psi_{ijk} e_{ijk} \in \calH(H=0)$.
  It follows at once from \cref{thm:main} that $(H,\lambda) \geq 0$ is a valid inequality for the moment polytope.
  We have thus obtained the well-known \emph{polygonal inequality} \cite{HiguchiSudberySzulc03}
  \[ \lambda_{A,1} + \lambda_{B,1} \leq \abs\lambda + \lambda_{C,1}, \]
  in a completely mechanical fashion.
  By symmetry, the two other inequalities obtained by permuting the subsystems $A$, $B$, and $C$ are also valid.
  We remark that the moment cone $C_{K(H=0)}(\calH(H=0))$ is closely related to the Horn cone for $\U(d-1)$ discussed in \cref{sec:horn} below \cite{ChristandlSahinogluWalter12,Walter14,Littlewood58,Murnaghan55}.
\end{example}

\subsection{Computational Results}

To verify that a given $H$ is a Ressayre element, we have implemented a computer program that works for arbitrary representations.
To compute the set of candidates $\mathcal E(a, b, c)$ in the case of the one-body quantum marginal problem, we have used the strategy explained above.
In \cref{tab:tripartite results} we list some results obtained by our program in the symmetric scenario $a=b=c$, which corresponds to three quantum particles with the same number of degrees of freedom.
While the moment cones $C(2,2,2)$ and $C(3,3,3)$ had already been computed in \cite{HiguchiSudberySzulc03,Bravyi04,Franz02,Higuchi03} using different methods, the cone $C(4,4,4)$ had been out of reach using current methods.
In contrast, our method allows us the computation of $C(4,4,4)$ in a few minutes, since it does not rely on an intermediate computation of the higher-dimensional cone $C(4,4,16)$.

\begin{table}
\begin{center}
  \begin{tabular}{lrrr}
    \toprule
    \multicolumn{1}{c}{$(a, b, c)$} & \multicolumn{1}{c}{$(2, 2, 2)$ \cite{HiguchiSudberySzulc03}} & \multicolumn{1}{c}{$(3, 3, 3)$ \cite{Franz02}} & \multicolumn{1}{c}{$(4, 4, 4)$} \\
\midrule
$\abs{\mathcal E_+}$ & 7 (3) & 51 (17) & 3027 (600) \\
$\abs{\mathcal E_{+,\text{adm}}}$ & 11 (5) & 67 (25) & 2231 (484) \\
$\abs{\mathcal E}$ & 9 (3) & 192 (41) & 32406 (5633) \\
\midrule
Inequalities & 9 (3) & 114 (25) & 1749 (323) \\
Facets & 6 (2) & 45 (10) & 270 (50) \\
Extreme Rays & 5 (3) & 33 (11) & 328 (65) \\
\bottomrule
  \end{tabular}
\end{center}
  \caption{Tripartite candidates (counts in parentheses are with permutations removed)}
  \label{tab:tripartite results}
\end{table}

In \cref{tab:fourfourfour rays,tab:fourfourfour facets}, we list the extreme rays and facets of $C(4, 4, 4)$ up to permutations of the subsystems.
Facet No.~21 is an instance of the polygonal inequality discussed in \cref{ex:polygonal}.
Illustrating a pattern observed in \cite{Franz02} for $C(3,3,3)$, there are several facets that contain neither the highest weight $((1,0,0,0),(1,0,0,0),(1,0,0,0))$ nor the ``origin'' $(\tau_4,\tau_4,\tau_4)$, where $\tau_d = \id/d \in \RR^d$.
Moreover, the only facets that contain the origin are the Weyl chamber inequalities.
This is an instance of a general fact that might be of independent interest (\cref{lem:origin ball} in the \refappendix).


\begin{table}
  \tiny
  \begin{center}
  \begin{tabular}{rlll}
    \toprule
    \multicolumn{1}{c}{$\#$} & \multicolumn{1}{c}{$V_A$} & \multicolumn{1}{c}{$V_B$} & \multicolumn{1}{c}{$V_C$} \\
\midrule
1 & $(1/4,1/4,1/4,1/4)$ & $(1/4,1/4,1/4,1/4)$ & $(1/4,1/4,1/4,1/4)$ \\
2 & $(1/4,1/4,1/4,1/4)$ & $(1/4,1/4,1/4,1/4)$ & $(1/3,1/3,1/3,0)$ \\
3 & $(1/4,1/4,1/4,1/4)$ & $(1/4,1/4,1/4,1/4)$ & $(1/2,1/2,0,0)$ \\
4 & $(1/4,1/4,1/4,1/4)$ & $(1/4,1/4,1/4,1/4)$ & $(1,0,0,0)$ \\
5 & $(1/4,1/4,1/4,1/4)$ & $(1/3,1/3,1/3,0)$ & $(1/3,1/3,1/3,0)$ \\
6 & $(1/4,1/4,1/4,1/4)$ & $(1/3,1/3,1/3,0)$ & $(1/2,1/2,0,0)$ \\
7 & $(1/4,1/4,1/4,1/4)$ & $(1/3,1/3,1/3,0)$ & $(2/3,1/6,1/6,0)$ \\
8 & $(1/4,1/4,1/4,1/4)$ & $(1/3,1/3,1/3,0)$ & $(2/3,1/4,1/12,0)$ \\
9 & $(1/4,1/4,1/4,1/4)$ & $(1/3,1/3,1/3,0)$ & $(3/4,1/12,1/12,1/12)$ \\
10 & $(1/4,1/4,1/4,1/4)$ & $(3/8,3/8,1/4,0)$ & $(5/8,3/8,0,0)$ \\
11 & $(1/4,1/4,1/4,1/4)$ & $(3/8,3/8,1/4,0)$ & $(3/4,1/8,1/8,0)$ \\
12 & $(1/4,1/4,1/4,1/4)$ & $(2/5,3/10,3/10,0)$ & $(7/10,3/20,3/20,0)$ \\
13 & $(1/4,1/4,1/4,1/4)$ & $(5/12,5/12,1/6,0)$ & $(2/3,1/6,1/12,1/12)$ \\
14 & $(1/4,1/4,1/4,1/4)$ & $(1/2,1/6,1/6,1/6)$ & $(1/2,1/2,0,0)$ \\
15 & $(1/4,1/4,1/4,1/4)$ & $(1/2,1/4,1/8,1/8)$ & $(5/8,3/8,0,0)$ \\
16 & $(1/4,1/4,1/4,1/4)$ & $(1/2,1/4,1/4,0)$ & $(1/2,1/2,0,0)$ \\
17 & $(1/4,1/4,1/4,1/4)$ & $(1/2,1/4,1/4,0)$ & $(2/3,1/6,1/6,0)$ \\
18 & $(1/4,1/4,1/4,1/4)$ & $(1/2,1/4,1/4,0)$ & $(3/4,1/4,0,0)$ \\
19 & $(1/4,1/4,1/4,1/4)$ & $(1/2,3/8,1/8,0)$ & $(5/8,1/8,1/8,1/8)$ \\
20 & $(1/4,1/4,1/4,1/4)$ & $(1/2,1/2,0,0)$ & $(1/2,1/2,0,0)$ \\
21 & $(2/7,2/7,2/7,1/7)$ & $(4/7,1/7,1/7,1/7)$ & $(4/7,3/7,0,0)$ \\
22 & $(7/24,7/24,5/24,5/24)$ & $(1/3,1/3,1/3,0)$ & $(3/4,1/8,1/8,0)$ \\
23 & $(3/10,3/10,1/5,1/5)$ & $(2/5,3/10,3/10,0)$ & $(4/5,1/10,1/10,0)$ \\
24 & $(3/10,3/10,3/10,1/10)$ & $(1/2,1/2,0,0)$ & $(11/20,3/20,3/20,3/20)$ \\
25 & $(3/10,3/10,3/10,1/10)$ & $(1/2,1/2,0,0)$ & $(3/5,1/5,1/10,1/10)$ \\
26 & $(1/3,2/9,2/9,2/9)$ & $(1/3,1/3,1/3,0)$ & $(2/3,1/3,0,0)$ \\
27 & $(1/3,2/9,2/9,2/9)$ & $(1/3,1/3,1/3,0)$ & $(7/9,1/9,1/9,0)$ \\
28 & $(1/3,2/9,2/9,2/9)$ & $(4/9,4/9,1/9,0)$ & $(2/3,1/9,1/9,1/9)$ \\
29 & $(1/3,1/3,1/6,1/6)$ & $(1/3,1/3,1/3,0)$ & $(7/9,1/9,1/9,0)$ \\
30 & $(1/3,1/3,1/6,1/6)$ & $(1/3,1/3,1/3,0)$ & $(5/6,1/6,0,0)$ \\
31 & $(1/3,1/3,1/6,1/6)$ & $(1/2,1/4,1/4,0)$ & $(3/4,1/12,1/12,1/12)$ \\
32 & $(1/3,1/3,1/6,1/6)$ & $(2/3,1/6,1/6,0)$ & $(2/3,1/6,1/6,0)$ \\
33 & $(1/3,1/3,1/3,0)$ & $(1/3,1/3,1/3,0)$ & $(1/3,1/3,1/3,0)$ \\
34 & $(1/3,1/3,1/3,0)$ & $(1/3,1/3,1/3,0)$ & $(1/2,1/2,0,0)$ \\
35 & $(1/3,1/3,1/3,0)$ & $(1/3,1/3,1/3,0)$ & $(1,0,0,0)$ \\
36 & $(1/3,1/3,1/3,0)$ & $(2/5,1/5,1/5,1/5)$ & $(11/15,2/15,1/15,1/15)$ \\
37 & $(1/3,1/3,1/3,0)$ & $(5/12,1/4,1/6,1/6)$ & $(3/4,1/12,1/12,1/12)$ \\
38 & $(1/3,1/3,1/3,0)$ & $(5/12,5/12,1/12,1/12)$ & $(3/4,1/12,1/12,1/12)$ \\
39 & $(1/3,1/3,1/3,0)$ & $(4/9,1/3,1/9,1/9)$ & $(7/9,1/9,1/9,0)$ \\
40 & $(1/3,1/3,1/3,0)$ & $(1/2,1/6,1/6,1/6)$ & $(1/2,1/2,0,0)$ \\
41 & $(1/3,1/3,1/3,0)$ & $(1/2,1/6,1/6,1/6)$ & $(2/3,1/9,1/9,1/9)$ \\
42 & $(1/3,1/3,1/3,0)$ & $(1/2,1/6,1/6,1/6)$ & $(2/3,1/3,0,0)$ \\
43 & $(1/3,1/3,1/3,0)$ & $(1/2,1/2,0,0)$ & $(1/2,1/2,0,0)$ \\
44 & $(1/3,1/3,1/3,0)$ & $(1/2,1/2,0,0)$ & $(7/12,1/4,1/12,1/12)$ \\
45 & $(1/3,1/3,1/3,0)$ & $(1/2,1/2,0,0)$ & $(2/3,1/6,1/6,0)$ \\
46 & $(1/3,1/3,1/3,0)$ & $(5/9,2/9,1/9,1/9)$ & $(2/3,1/9,1/9,1/9)$ \\
47 & $(1/3,1/3,1/3,0)$ & $(2/3,1/3,0,0)$ & $(2/3,1/3,0,0)$ \\
48 & $(5/14,5/14,1/7,1/7)$ & $(3/7,2/7,2/7,0)$ & $(11/14,1/14,1/14,1/14)$ \\
49 & $(4/11,4/11,3/11,0)$ & $(5/11,2/11,2/11,2/11)$ & $(8/11,1/11,1/11,1/11)$ \\
50 & $(3/8,1/4,1/4,1/8)$ & $(1/2,1/2,0,0)$ & $(5/8,1/8,1/8,1/8)$ \\
51 & $(3/8,3/8,1/4,0)$ & $(5/8,1/8,1/8,1/8)$ & $(5/8,1/8,1/8,1/8)$ \\
52 & $(2/5,1/5,1/5,1/5)$ & $(2/5,2/5,1/5,0)$ & $(4/5,1/5,0,0)$ \\
53 & $(2/5,1/5,1/5,1/5)$ & $(1/2,1/2,0,0)$ & $(3/5,1/5,1/10,1/10)$ \\
54 & $(2/5,3/10,3/10,0)$ & $(2/5,2/5,1/10,1/10)$ & $(4/5,1/10,1/10,0)$ \\
55 & $(2/5,2/5,1/10,1/10)$ & $(3/5,1/5,1/5,0)$ & $(7/10,1/10,1/10,1/10)$ \\
56 & $(5/12,5/12,1/12,1/12)$ & $(1/2,1/4,1/4,0)$ & $(3/4,1/12,1/12,1/12)$ \\
57 & $(3/7,3/7,1/7,0)$ & $(4/7,1/7,1/7,1/7)$ & $(5/7,1/7,1/7,0)$ \\
58 & $(1/2,1/6,1/6,1/6)$ & $(1/2,1/2,0,0)$ & $(2/3,1/6,1/6,0)$ \\
59 & $(1/2,1/4,1/4,0)$ & $(1/2,1/2,0,0)$ & $(5/8,1/8,1/8,1/8)$ \\
60 & $(1/2,1/4,1/4,0)$ & $(1/2,1/2,0,0)$ & $(3/4,1/4,0,0)$ \\
61 & $(1/2,1/2,0,0)$ & $(1/2,1/2,0,0)$ & $(1/2,1/2,0,0)$ \\
62 & $(1/2,1/2,0,0)$ & $(1/2,1/2,0,0)$ & $(1,0,0,0)$ \\
63 & $(1/2,1/2,0,0)$ & $(5/8,1/8,1/8,1/8)$ & $(5/8,1/8,1/8,1/8)$ \\
64 & $(1/2,1/2,0,0)$ & $(2/3,1/6,1/6,0)$ & $(2/3,1/6,1/6,0)$ \\
65 & $(1,0,0,0)$ & $(1,0,0,0)$ & $(1,0,0,0)$ \\
    \bottomrule
  \end{tabular}
  \end{center}
\caption{Extreme rays $(V_A, V_B, V_C)$ of the moment cone for $\CC^4 \otimes \CC^4 \otimes \CC^4$ (with permutations removed).}
\label{tab:fourfourfour rays}
\end{table}

\begin{table}
\begin{center}
  \scriptsize
  \begin{tabular}{rlllrc}
    \toprule
    \multicolumn{1}{c}{$\#$} & \multicolumn{1}{c}{$H_A$} & \multicolumn{1}{c}{$H_B$} & \multicolumn{1}{c}{$H_C$} & \multicolumn{1}{c}{$z$} & \multicolumn{1}{c}{Remarks} \\
\midrule
1 & $(-5,-1,3,3)$ & $(-5,3,3,-1)$ & $(5,1,-3,-3)$ & 5 & $\star$ \\
2 & $(-5,-1,3,3)$ & $(1,-3,-3,5)$ & $(3,3,-1,-5)$ & 5 & - \\
3 & $(-5,3,-1,3)$ & $(-5,3,-1,3)$ & $(5,1,-3,-3)$ & 5 & $\star$ \\
4 & $(-5,3,-1,3)$ & $(-5,3,3,-1)$ & $(5,-3,1,-3)$ & 5 & $\star$ \\
5 & $(-5,3,-1,3)$ & $(-3,1,-3,5)$ & $(3,3,-1,-5)$ & 5 & $\star$ \\
6 & $(-5,3,-1,3)$ & $(-3,5,1,-3)$ & $(3,-5,3,-1)$ & 5 & $\star$ \\
7 & $(-5,3,-1,3)$ & $(1,-3,-3,5)$ & $(3,-1,3,-5)$ & 5 & - \\
8 & $(-5,3,-1,3)$ & $(1,-3,5,-3)$ & $(3,-1,-5,3)$ & 5 & - \\
9 & $(-5,3,3,-1)$ & $(-5,3,3,-1)$ & $(5,-3,-3,1)$ & 5 & $\star$ \\
10 & $(-5,3,3,-1)$ & $(-3,-3,1,5)$ & $(3,3,-1,-5)$ & 5 & $\star$ \\
11 & $(-5,3,3,-1)$ & $(-3,-3,5,1)$ & $(3,3,-5,-1)$ & 5 & $\star$ \\
12 & $(-5,3,3,-1)$ & $(-3,1,-3,5)$ & $(3,-1,3,-5)$ & 5 & $\star$ \\
13 & $(-5,3,3,-1)$ & $(-3,1,5,-3)$ & $(3,-1,-5,3)$ & 5 & $\star$ \\
14 & $(-5,3,3,-1)$ & $(-3,5,-3,1)$ & $(3,-5,3,-1)$ & 5 & $\star$ \\
15 & $(-5,3,3,-1)$ & $(-3,5,1,-3)$ & $(3,-5,-1,3)$ & 5 & $\star$ \\
16 & $(-5,3,3,-1)$ & $(-1,-5,3,3)$ & $(1,5,-3,-3)$ & 5 & $\star$ \\
17 & $(-5,3,3,-1)$ & $(-1,3,-5,3)$ & $(1,-3,5,-3)$ & 5 & $\star$ \\
18 & $(-5,3,3,-1)$ & $(-1,3,3,-5)$ & $(1,-3,-3,5)$ & 5 & $\star$ \\
19 & $(-3,-1,3,1)$ & $(-3,3,1,-1)$ & $(3,1,-1,-3)$ & 3 & $\star$ \\
20 & $(-3,-1,3,1)$ & $(1,-1,-3,3)$ & $(3,1,-1,-3)$ & 3 & - \\
21 & $(-3,1,1,1)$ & $(-3,1,1,1)$ & $(3,-1,-1,-1)$ & 3 & $\star$ \\
22 & $(-3,1,1,1)$ & $(-2,-2,2,2)$ & $(2,2,-2,-2)$ & 3 & $\star$ \\
23 & $(-3,1,1,1)$ & $(-2,2,-2,2)$ & $(2,-2,2,-2)$ & 3 & $\star$ \\
24 & $(-3,1,1,1)$ & $(-2,2,2,-2)$ & $(2,-2,-2,2)$ & 3 & $\star$ \\
25 & $(-3,1,1,1)$ & $(-1,-1,-1,3)$ & $(1,1,1,-3)$ & 3 & $\star$ \\
26 & $(-3,1,1,1)$ & $(-1,-1,3,-1)$ & $(1,1,-3,1)$ & 3 & $\star$ \\
27 & $(-3,1,1,1)$ & $(-1,3,-1,-1)$ & $(1,-3,1,1)$ & 3 & $\star$ \\
28 & $(-3,3,1,-1)$ & $(-3,3,1,-1)$ & $(3,-1,-3,1)$ & 3 & $\star$ \\
29 & $(-3,3,1,-1)$ & $(-1,-3,1,3)$ & $(3,1,-1,-3)$ & 3 & - \\
30 & $(-3,3,1,-1)$ & $(-1,-3,3,1)$ & $(1,3,-1,-3)$ & 3 & $\star$ \\
31 & $(-3,3,1,-1)$ & $(-1,-3,3,1)$ & $(3,1,-3,-1)$ & 3 & - \\
32 & $(-3,3,1,-1)$ & $(-1,3,1,-3)$ & $(1,-1,-3,3)$ & 3 & $\star$ \\
33 & $(-2,-2,2,2)$ & $(-2,2,2,-2)$ & $(1,1,-3,1)$ & 3 & $\star$ \\
34 & $(-2,2,-2,2)$ & $(-2,2,2,-2)$ & $(1,-3,1,1)$ & 3 & $\star$ \\
35 & $(-1,-1,-1,3)$ & $(0,0,0,0)$ & $(0,0,0,0)$ & 1 & $\star$ \\
36 & $(-1,0,0,1)$ & $(-1,1,0,0)$ & $(1,0,0,-1)$ & 1 & $\star$ \\
37 & $(-1,0,0,1)$ & $(0,0,-1,1)$ & $(1,0,0,-1)$ & 1 & - \\
38 & $(-1,0,1,0)$ & $(-1,0,1,0)$ & $(1,0,0,-1)$ & 1 & $\star$ \\
39 & $(-1,0,1,0)$ & $(-1,1,0,0)$ & $(1,0,-1,0)$ & 1 & $\star$ \\
40 & $(-1,0,1,0)$ & $(0,-1,0,1)$ & $(1,0,0,-1)$ & 1 & - \\
41 & $(-1,0,1,0)$ & $(0,-1,1,0)$ & $(0,1,0,-1)$ & 1 & $\star$ \\
42 & $(-1,0,1,0)$ & $(0,-1,1,0)$ & $(1,0,-1,0)$ & 1 & - \\
43 & $(-1,0,1,0)$ & $(0,0,-1,1)$ & $(0,1,0,-1)$ & 1 & $\star$ \\
44 & $(-1,1,0,0)$ & $(-1,1,0,0)$ & $(1,-1,0,0)$ & 1 & $\star$ \\
45 & $(-1,1,0,0)$ & $(0,-1,0,1)$ & $(0,1,0,-1)$ & 1 & $\star$ \\
46 & $(-1,1,0,0)$ & $(0,-1,1,0)$ & $(0,1,-1,0)$ & 1 & $\star$ \\
47 & $(-1,1,0,0)$ & $(0,0,-1,1)$ & $(0,0,1,-1)$ & 1 & $\star$ \\
48 & $(0,0,0,0)$ & $(0,0,0,0)$ & $(0,0,1,-1)$ & 0 & $\dagger$,$\star$ \\
49 & $(0,0,0,0)$ & $(0,0,0,0)$ & $(0,1,-1,0)$ & 0 & $\dagger$,$\star$ \\
50 & $(0,0,0,0)$ & $(0,0,0,0)$ & $(1,-1,0,0)$ & 0 & $\dagger$ \\
    \bottomrule
  \end{tabular}
  \end{center}
  \caption{Normal vectors $(H_A, H_B, H_C, z)$ of the facets of the moment cone for $\CC^4 \otimes \CC^4 \otimes \CC^4$ (with permutations removed). The last column states whether the facet includes the origin ($\dagger$) or the highest weight ($\star$).}
  \label{tab:fourfourfour facets}
\end{table}

We refer to \cite{onlinecode} for a complete list of our computational results.
In particular, we have verified the inequalities proposed previously by Klyachko for $a, b \leq 3$ \cite{Klyachko04}.
For $\calH = \CC^2 \otimes \CC^2 \otimes \CC^3 \otimes \CC^{12}$, which can be treated by a variant of the technique described above, we found that some of the proposed inequalities do \emph{not} correspond to extremal edges and are likely typographic mistakes (e.g, the second block on \cite[p.~43]{Klyachko04}).
This can also be deduced from the fact that they do not agree with Bravyi's inequalities when restricted to $\lambda_D = (1,0,\dots,0)$. 
We remark that the main challenge to obtaining concrete computational results in higher dimension is to find a tractable set of candidates beyond simple admissibility.

There are further variants of the one-body quantum marginal problem, such as for fermionic systems \cite{Coleman63,Ruskai69}, where the facets amount to strengthenings of the classical \emph{Pauli exclusion principle}. Our theory is also applicable to these scenarios, and it would be interesting to undertake a similar analysis of the facets that would allow the computation of the corresponding moment cones beyond what has been possible in the literature \cite{KlyachkoAltunbulak08}.
We refer to \cite[\S{}3]{Walter14} for initial investigations, where a family of fermionic inequalities has been proved for all local dimensions by using our method.
We remark that we have verified numerically that the pure-state fermionic inequalities list in \cite{KlyachkoAltunbulak08} are correct (but not their sufficiency).

\section{Horn Cone and Howe-Lee-Tan-Willenbring Invariants}
\label{sec:horn}

In this section, we consider the representation of $G = \GL(d) \times \GL(d) \times \GL(d)$ on $\calH = \mathfrak{gl}(d) \oplus \mathfrak{gl}(d)$ given by
\[ \Pi(g, h, k) (a,b) = (g a k^{-1}, h b k^{-1}), \] where $\mathfrak{gl}(d)$ is the Hilbert space of complex $d \times d$ matrices equipped with the trace inner product $\braket{a|b} := \tr a^\dagger b$.
We choose $K = \U(d) \times \U(d) \times \U(d)$ and take the maximal torus $T$ to consist of triples of unitary diagonal matrices, so that $i \mathfrak t$ can be identified with $\RR^d \oplus \RR^d \oplus \RR^d$.
We furthermore identify $i \mathfrak k \cong i \mathfrak k^*$ and $i \mathfrak t \cong i \mathfrak t^*$ by using the trace inner product and choose the usual positive roots $\alpha_{i,j}$, $i < j$, of each $\GL(d)$, such that the positive Weyl chamber $i \mathfrak t^*_+$ can be identified with triples of vectors with non-increasing entries each.
Finally, we set $\abs x :=\sum_i x_i$ and $x^* := (-x_d,\ldots, -x_1)$ for $x \in \RR^d$.
Using these conventions, the moment map for the $K$-action can be written as
\[ \mu_K(a, b) = (a a^\dagger, b b^\dagger, -a^\dagger a - b^\dagger b). \]
Any non-negative Hermitian matrix can be written in the form $a^\dagger a$; since the spectra of $a a^\dagger$ and $a^\dagger a$ are equal, the moment cone is equal to
\[ C(d) := \{ (\spec X, \spec Y, \spec Z) : X, Y \geq 0, Z \leq 0, ~ X + Y + Z = 0 \}, \]
where $\spec X$ denotes the eigenvalues of a Hermitian matrix $X$, ordered non-increasingly.
We will call $C(d)$ the \emph{Horn cone} in $d$ dimensions.
As proved in \cite{Klyachko98,KnutsonTao99}, cf.~\cite{KnutsonTao01,Belkale2006}, $C(d)$ is cut out by the Horn inequalities, which we will recall below.

We remark that the Horn cone as defined thus is not maximal-dimensional but rather supported in the linear subspace $\{ (x,y,z) : \abs x + \abs y + \abs z = 0 \}$.
Dually, any two normal vectors that differ by $(\id, \id, \id)$ determine the same facet.
This can be avoided by working with the subgroup $G' = \GL(d) \times \GL(d) \times \SL(d) \subseteq G$, but we will not discuss this point any further.

For any subset $I \subseteq [d] := \{1,\dots,d\}$, we define $\CC^I := \bigoplus_{i \in I} \CC e_i \subseteq \CC^d$.
We define $E_I$ to be the orthogonal projection onto $\CC^I$ 
and abbreviate $E_r := E_{[r]}$ for any $r \leq d$.
It is easy to see (e.g., by using an argument involving restricted roots as in \cref{subsec:qmp candidates}) that apart from positivity ($x_d \geq 0$, etc.), any non-trivial facet of the Horn cone is of the form $(H_{IJK}, -) \geq 0$, where $H_{IJK} := (E_I, E_J, E_K)$ for subsets $I, J, K \subseteq [d]$ with $r := \abs I = \abs J = \abs K < d$. 


\subsection{Trace and Horn Condition}
\label{subsec:trace horn horn}

In the following, we will show that the trace condition and the Horn condition assume their familiar form in the context of Horn's problem (e.g., \cite{KnutsonTao01}), thereby justifying our terminology.
For any $I = \{ i_1 < \dots < i_r \} \subseteq [d]$ of cardinality $r$, let $w_I$ denote the permutation that sends $[r]$ to $I$ and $[r]^c$ to the complement $I^c = \{ i^c_1 < \dots < i^c_{d-r} \}$ while preserving the order of each block.
Then $E_I = w_I \cdot E_r = w_I E_r w_I^{-1}$ (we identify $w_I$ with a permutation matrix).
However, $w_I$ does \emph{not} satisfy the conditions of \cref{cor:trace criterion}.
Instead, we shall write $E_I = \widetilde w_I \cdot E_r$ with $\widetilde w_I := w_{I^c} w_0$, which reverses the order of each block.
Then we find that $H_{IJK}$ satisfies the trace condition if and only if
\[ \ell(w_{I^c}) + \ell(w_{J^c}) + \ell(w_{K^c}) = \dim \calH(H_r < 0), \]
where $H_r := (E_r, E_r, E_r)$.
For the left-hand side, we compute
\begin{align*}
  \ell(w_{I^c})
  &= \#\{ (b,a) : b < a, w_{I^c}(b) > w_{I^c}(a) \}
  = \sum_{a=1}^r \#\{ j \in I^c : i_a < j \} \\
  &= \sum_{a=1}^r \left( (d - r) - (i_a - a) \right)
  = \abs{\lambda_I},
\end{align*}
where we have defined $\lambda_I := (d - r - (i_a - a))_{a=1}^r$, which is a sequence of non-increasing non-negative integers.
For the right-hand side, observe that $\calH(H_r < 0)$ consists of pairs of block matrices of the form $\left( \begin{smallmatrix} 0 & 0 \\ * & 0\end{smallmatrix} \right)$, hence is isomorphic to $M_{d-r,r}^2$, the vector space of pairs of complex $(d-r) \times r$-matrix.
We conclude that the trace condition for $H_{IJK}$ amounts to
\begin{equation}
\label{eq:trace horn}
  \abs{\lambda_I} + \abs{\lambda_J} + \abs{\lambda_K} = 2 (d-r) r.
\end{equation}
It is not hard to see that \eqref{eq:trace horn} is precisely equivalent to Horn's trace condition.


To analyze the Horn condition, we note that the centralizer of $H_{IJK}$ is
$K(H_{IJK}=0) = \left( \U(I) \times \U(I^c) \right) \times \left( \U(J) \times \U(J^c) \right) \times \left( \U(K) \times \U(K^c) \right)$
where we denote by $U(I)$ the subgroup of unitaries that act non-trivially only on $\CC^I \subseteq \CC^d$, etc.
It will be convenient to use the three isomorphisms
\begin{equation}
\label{eq:horn isos}
\begin{aligned}
  &\U(r)^3 \times \U(d-r)^3 \rightarrow K(H_{IJK} = 0), \quad (U,V,W,U',V',W') \mapsto \\
  &\qquad\qquad\qquad (w_I \left( \begin{smallmatrix} U & 0 \\ 0 & U' \end{smallmatrix} \right) w_I^{-1},
   w_J \left( \begin{smallmatrix} V & 0 \\ 0 & V' \end{smallmatrix} \right) w_J^{-1},
   w_K \left( \begin{smallmatrix} W & 0 \\ 0 & W' \end{smallmatrix} \right) w_K^{-1}),\\
  &M_{d-r,r}^2 \rightarrow \calH(H_{IJK} < 0), \quad
  (A, B) \mapsto
  (w_I \left( \begin{smallmatrix} 0 & 0 \\ A & 0 \end{smallmatrix} \right) w_K^{-1},
   w_J \left( \begin{smallmatrix} 0 & 0 \\ B & 0 \end{smallmatrix} \right) w_K^{-1}),\\
  &T_I \times T_J \times T_K \rightarrow \mathfrak n_-(H_{IJK} < 0), \quad (X, Y, Z) \mapsto \\
  &\qquad\qquad\qquad(w_I \left( \begin{smallmatrix} 0 & 0 \\ X & 0 \end{smallmatrix} \right) w_I^{-1},
   w_J \left( \begin{smallmatrix} 0 & 0 \\ Y & 0 \end{smallmatrix} \right) w_J^{-1},
   w_K \left( \begin{smallmatrix} 0 & 0 \\ Z & 0 \end{smallmatrix} \right) w_K^{-1}),
\end{aligned}
\end{equation}
where we have defined
$T_I := \{ X \in M_{d-r,r} : \braket{b | X | a} = 0 \text{ if } i_a > i^c_b \}$
etc.
A short computation then reveals that the element $\kappa_{H_{IJK}}$ of \cref{prp:horn condition} identifies with the highest weight
\begin{equation}
\label{eq:kappa horn}
  \kappa_{IJK} := (\lambda_I, \lambda_J, \lambda_K - 2 (d-r) \chi_r) \oplus (\lambda_{I^c}, \lambda_{J^c}, \lambda_{K^c} - r \chi_{d-r})
\end{equation}
of $\U(r)^3 \times \U(d-r)^3$, where $\chi_r$ denotes the weight of the determinant representation of $\U(r)$.
%
On the other hand, by using the isomorphism
\begin{equation}
\label{eq:horn isos two}
\begin{cases}
  \gl(r)^2 \oplus \gl(d-r)^2 \rightarrow \calH(H_{IJK} = 0) \\
   (a, b, a', b') \mapsto
   (w_I \left( \begin{smallmatrix} a & 0 \\ 0 & a' \end{smallmatrix} \right) w_K^{-1},
    w_J \left( \begin{smallmatrix} b & 0 \\ 0 & b' \end{smallmatrix} \right) w_K^{-1})
\end{cases}
\end{equation}
the moment cone $K(H_{IJK}=0)$ on $\calH(H_{IJK}=0)$ gets likewise identified with the direct product of the Horn cones $C(r) \times C(d-r)$.
Thus the Horn condition (\cref{prp:horn condition}) asserts that
$(\lambda_I, \lambda_J, \lambda_K - 2 (d-r) \chi_r) \in C(r)$
as well as
$(\lambda_{I^c}, \lambda_{J^c}, \lambda_{K^c} - r \chi_{d-r}) \in C(d-r)$.
These two conditions are in fact equivalent by a well-known duality of the Littlewood-Richardson coefficients. 
Therefore, we arrive at a single condition
\begin{equation}
\label{eq:horn horn}
  (\lambda_I, \lambda_J, \lambda_K - 2 (d-r) \chi_r) \in C(r).
\end{equation}
This condition is not only necessary for $H_{IJK}$ to be a facet, but it is also sufficient for $H_{IJK}$ to be a valid inequality, as is well-known (e.g., \cite{KnutsonTao01}). We will show in the next section how in the context of our work this can be deduced from the saturation conjecture and Schubert calculus (in fact, we shall see that \eqref{eq:horn horn} implies that $H_{IJK}$ is a Ressayre element).
In particular, we obtain the familiar recursive definition of Horn's inequalities:
Define $\operatorname{Horn}(d, r)$ to be the set of all triples $I, J, K \subseteq [d]$ of cardinality $r < d$ that satisfy the trace condition \eqref{eq:trace horn} as well as
\[ \sum_{a \in A} i_a + \sum_{b \in B} j_b + \sum_{c \in C} k_c \leq s(d+1) + \frac {s (s+1)} 2 \]
for all $s < r$ and all triples $(A,B,C) \in \operatorname{Horn}(r, s)$.
Then \eqref{eq:horn horn} implies that $H_{IJK}$ satisfies the Horn condition if and only if $(I,J,K) \in \operatorname{Horn}(d, r)$.
Therefore, our trace and Horn conditions are indeed a generalization of the classical conditions due to Horn.

\subsection{The Determinant Polynomial}

In this section, we will show that any $H_{IJK}$ that satisfies the trace and Horn condition is automatically a Ressayre element, i.e., that the determinant polynomial $\delta_{IJK} := \delta_{H_{IJK}}$ is non-zero.
To start, we note that by the saturation property of the Littlewood-Richardson coefficients \cite{KnutsonTao99} (cf.~\cite{Belkale2006}), the Horn condition \eqref{eq:horn horn} implies that the following space of $\GL(r)$-invariants is non-zero:
\begin{equation}
\label{eq:horn invariants nonzero}
  \big( V_{\lambda_I} \otimes V_{\lambda_J} \otimes V_{\lambda_K - 2 (d-r) \chi_r} \big)^{\GL(r)} \neq 0
\end{equation}
A natural candidate is certainly the determinant polynomial itself (from which we had obtained the Horn condition), but it is not obvious that the Horn condition should imply that $\delta_{IJK} \neq 0$.
We will give two alternative arguments that show that this is indeed the case.

The first proof follows an argument of Belkale \cite{Belkale2006}. We start with the observation that
\begin{align*}
  \dim \big( V_{\lambda_I} \otimes V_{\lambda_J} \otimes V_{\lambda_K - 2 (d-r) \chi_r} \big)^{\GL(r)}
  &= \dim \big( V_{\lambda_I} \otimes V_{\lambda_J} \otimes V_{\lambda_K} \big)^{\SL(r)} \\
  &= \dim \big( V_{\lambda_{I_-}} \otimes V_{\lambda_{J_-}} \otimes V_{\lambda_{K_-}} \big)^{\SL(r)}
\end{align*}
where $I_- := \{ d + 1 - i : i \in I \}$, etc.; the second equality follows from $\lambda_{I_-} = (d-r) \chi_r + \lambda_I^*$.
Now consider the Grassmannian $\Gr(r, d)$, which consists of the $r$-dimensional subspaces of $\CC^d$.
It is a homogeneous $\GL(d)$-manifold of real dimension $2 r (d-r)$.
Let $B$ denote the standard (upper-triangular) Borel and denote by $V_I$ the point in $\Gr(r, d)$ corresponding to the subspace $\CC^I \subseteq \CC^d$; set $V_r := V_{[r]}$.
For any $w \in S_d$, we define Schubert cell $\Omega^0(w) := B w \cdot V_r$ and denote by $[\Omega(w)]$ the cohomology class determined by its closure.
We now use the classical result that the Littlewood-Richardson coefficients can be computed as an intersection number of Schubert cells \cite[\S{}9.4]{Fulton97}:
\[ \dim \big( V_{\lambda_{I_-}} \otimes V_{\lambda_{J_-}} \otimes V_{\lambda_{K_-}} \big)^{\SL(r)} = [\Omega(w_{I_-})] \cap [\Omega(w_{J_-})] \cap [\Omega(w_{K_-})]. \]
Using that $\Omega(w_{I_-}) = \Omega(w_0 w_I)$, we recognize that \eqref{eq:horn invariants nonzero} is equivalent to the cohomological condition
\begin{equation}
\label{eq:horn cohomological consequence}
  [\Omega(w_0 w_I)] \cap [\Omega(w_0 w_J)] \cap [\Omega(w_0 w_K)] \neq 0.
\end{equation}
Now consider the ``action map''
\begin{equation}
\label{eq:abstract action map}
N_-(H_{IJK} < 0) \times \calH(H_{IJK} \geq 0) \rightarrow \calH, (g,\phi) \mapsto \Pi(g) \phi,
\end{equation}
a close variant of what we had analyzed in the proof of \cref{prp:partial converse}.
We will use the parametrizations
\begin{align*}
T_I \times T_J \times T_K \rightarrow N_-(H_{IJK} < 0), \quad
&(x,y,z) \mapsto (
w_I \widetilde x w_I^{-1},
w_J \widetilde y w_J^{-1},
w_K \widetilde z w_K^{-1}) \\
(M_{d-r,r}^\perp)^2 \rightarrow \calH(H_{IJK} \geq 0), \quad
&(p, q) \mapsto (w_I p w_K^{-1}, w_J q w_K^{-1}),
\end{align*}
where $\widetilde x := \left( \begin{smallmatrix} \id & 0 \\ x & \id \end{smallmatrix} \right)$ etc.\ and $M_{d-r,r}^\perp := \{ \left( \begin{smallmatrix} * & * \\ 0 & * \end{smallmatrix} \right) \}$. In explicit coordinates $p = \left(\begin{smallmatrix}a & a''\\0 & a'\end{smallmatrix}\right)$ and $q = \left(\begin{smallmatrix}b & b'' \\ 0 & b'\end{smallmatrix}\right)$, the map \eqref{eq:abstract action map} becomes
\begin{equation}
  \label{eq:horn action map}
  R_{IJK} \colon \begin{cases}
  T_I \times T_J \times T_K \times (M_{d-r,r}^\perp)^2 \rightarrow \gl(d)^2 \\
  (x, y, z, p, q) \mapsto (\widetilde x p \widetilde z^{-1}, \widetilde y q \widetilde z^{-1}) =
  (
  \big( \begin{smallmatrix}
    a - a'' z & a'' \\
    x a - a' z - x a'' z \;&\; x a'' + a'
  \end{smallmatrix} \big), \\
  \qquad\qquad\qquad\qquad\qquad\qquad\qquad\quad\big( \begin{smallmatrix}
    b - b'' z & b'' \\
    y b - b' z - y b'' z \;&\; y b'' + b'
  \end{smallmatrix} \big)
  )
  \end{cases}
\end{equation}

\begin{lemma}
\label{lem:dominant}
  The map $R_{IJK}$ is dominant.
\end{lemma}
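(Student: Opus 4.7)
The plan is to interpret dominance of $R_{IJK}$ as the non-vanishing of the determinant polynomial $\delta_{IJK}$ attached to the tangent map \eqref{eq:tangent map}, and then deduce that non-vanishing from the Horn condition via a standard Schubert calculus chain following the argument of Belkale \cite{Belkale2006}.

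First I would perform a dimension count: by the trace condition \eqref{eq:trace horn} (a consequence of the Horn condition), the source and target of $R_{IJK}$ both have complex dimension $2d^2$, so dominance is equivalent to surjectivity of the differential at some point. Computing this differential at a point $(0, 0, 0, p, q)$ with $p = \left(\begin{smallmatrix} a & * \\ 0 & a' \end{smallmatrix}\right)$ and $q = \left(\begin{smallmatrix} b & * \\ 0 & b' \end{smallmatrix}\right)$, I observe that variations in the $(p, q)$ direction already surject onto $(M_{d-r,r}^\perp)^2$, while the remaining lower-left block contribution reduces under the identifications \eqref{eq:horn isos} and \eqref{eq:horn isos two} to the linear map
\[ T_I \times T_J \times T_K \to M_{d-r,r}^2, \quad (x, y, z) \mapsto (xa - a'z, \, yb - b'z). \]
A direct computation shows this is precisely the tangent map \eqref{eq:tangent map} for $H = H_{IJK}$ evaluated at $\psi_{a, a', b, b'} := \left(\left(\begin{smallmatrix} a & 0 \\ 0 & a' \end{smallmatrix}\right), \left(\begin{smallmatrix} b & 0 \\ 0 & b' \end{smallmatrix}\right)\right) \in \calH(H_{IJK}=0)$. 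Since source and target both have dimension $2r(d-r)$ by the trace condition, the map is surjective if and only if it is an isomorphism if and only if $\delta_{IJK}(\psi_{a,a',b,b'}) \neq 0$, so dominance of $R_{IJK}$ is equivalent to $\delta_{IJK} \not\equiv 0$.

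Next I would invoke the chain of implications linking the Horn condition to this non-vanishing. The Horn condition \eqref{eq:horn horn} combined with the Knutson--Tao saturation theorem \cite{KnutsonTao99} implies that the space of $\GL(r)$-invariants $(V_{\lambda_I} \otimes V_{\lambda_J} \otimes V_{\lambda_K - 2(d-r)\chi_r})^{\GL(r)}$ is non-zero, as in \eqref{eq:horn invariants nonzero}. By Fulton's classical formula \cite[\S{}9.4]{Fulton97} the dimension of this invariant space equals the Schubert intersection number $[\Omega(w_0 w_I)] \cap [\Omega(w_0 w_J)] \cap [\Omega(w_0 w_K)]$ in $H^*(\Gr(r, d))$, so this class is non-zero. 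Finally, Kleiman's transversality theorem yields that for generic $(A, B) \in \GL(d)^2$ the intersection $\Omega^0(w_0 w_K) \cap A^{-1} \Omega^0(w_0 w_I) \cap B^{-1} \Omega^0(w_0 w_J) \subseteq \Gr(r, d)$ is non-empty; a subspace $V_K$ in this intersection determines $V_I := A V_K$ and $V_J := B V_K$ in the remaining prescribed open Schubert cells, and these three subspaces correspond under the standard Bruhat parametrization to lower-unipotent matrices $\widetilde x, \widetilde y, \widetilde z$ with $(x, y, z) \in T_I \times T_J \times T_K$. Setting $p := \widetilde x^{-1} A \widetilde z$ and $q := \widetilde y^{-1} B \widetilde z$, the incidence conditions $V_I = A V_K$, $V_J = B V_K$ force $p, q \in M_{d-r, r}^\perp$, so we obtain an explicit preimage of $(A, B)$ under $R_{IJK}$.

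The main obstacle is the last geometric step: verifying that the affine cells $\{\widetilde u \cdot \CC^{[r]} : u \in T_I\}$ coincide with the open Schubert cells $\Omega^0(w_0 w_I)$ (and analogously for $J, K$), so that Kleiman's theorem applies to the correct pieces. This is a bookkeeping exercise with the Bruhat decomposition that uses the identity $\dim T_I = \abs{\lambda_I}$ already implicit in the derivation of the trace condition.
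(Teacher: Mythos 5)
Your argument is essentially the paper's: the core step is identifying the triple Schubert intersection number with the Littlewood--Richardson coefficient via Fulton's formula, then invoking saturation plus Kleiman's transversality theorem to produce an explicit preimage of a generic $(A,B)\in\GL(d)^2$, which is exactly how the paper shows that generic fibers of $R_{IJK}$ are non-empty. Two small remarks: the opening paragraph establishing ``dominance $\Leftrightarrow \delta_{IJK}\neq0$'' via the differential at $(0,0,0,p,q)$ is correct but logically superfluous for this lemma (your Kleiman step already produces points in the fiber, hence dominance directly; the equivalence is rather the content of the corollary following the lemma), and in the step you flag as needing verification, the cell $\{\widetilde u\cdot\CC^{[r]} : u\in T_I\}$ is the translate $(w_0w_I)^{-1}\Omega^0(w_0w_I)$ rather than $\Omega^0(w_0w_I)$ itself, though Kleiman's theorem applies to either.
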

\begin{proof}
  We show that the fibers of $R_{IJK}$ are generically non-empty.
  Thus let $(g, h) \in \GL(d)^2 \subseteq \gl(d)^2$.
  Then $(x, y, z, p, q)$ is an element of the fiber $R_{IJK}^{-1}(g, h)$ if and only if
  \[ g^{-1} \widetilde x p = h^{-1} \widetilde y q = \widetilde z. \]
  Any such $p$ and $q$ is automatically invertible, and therefore a general element in the stabilizer group of $V_r \in \Gr(r, d)$.
  Thus the fiber is non-empty if and only if we can find $(x,y,z)$ such that
  \[ g^{-1} \widetilde x \cdot V_r = h^{-1} \widetilde y \cdot V_r = \widetilde z \cdot V_r. \]
  Since $\{ \widetilde x \cdot V_r : x \in T_I \} = (w_0 w_I)^{-1} \Omega(w_0 w_I)$ etc., this is the case if and only if 
  \[ g^{-1} (w_0 w_I)^{-1} \Omega(w_0 w_I) \cap h^{-1} (w_0 w_J)^{-1} \Omega(w_0 w_J) \cap (w_0 w_I)^{-1} \Omega(w_0 w_I) \neq \emptyset. \]
  By Kleiman's transversality theorem, the cohomological condition \eqref{eq:horn cohomological consequence} ensures that this is the case for generic $(g,h)$.
\end{proof}

Now observe that using the identifications \eqref{eq:horn isos} and \eqref{eq:horn isos two}, the tangent map \eqref{eq:tangent map} at some base point $(a,b,a',b') \in \gl(r)^2 \oplus \gl(d-r)^2$ reads
\begin{equation}
\label{eq:horn tangent map}
  V_{IJK}(a,b,a',b') \colon \begin{cases}
    T_I \oplus T_J \oplus T_K \rightarrow M_{d-r,r}^2 \\
    (X, Y, Z) \mapsto (X a - a' Z, Y b - b' Z)
  \end{cases}
\end{equation}

\begin{corollary}
  The determinant polynomial $\delta_{IJK} = \det V_{IJK}$ is non-zero, i.e., $H_{IJK}$ is a Ressayre element.
\end{corollary}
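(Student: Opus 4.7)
The strategy is to extract the non-vanishing of $\delta_{IJK}$ from the dominance of $R_{IJK}$ established in \cref{lem:dominant}. The crucial dimensional input is the trace condition \eqref{eq:trace horn}, which in the identifications \eqref{eq:horn isos}, \eqref{eq:horn isos two} reads $\dim(T_I \oplus T_J \oplus T_K) = 2r(d-r)$; together with $2\dim M_{d-r,r}^\perp = 2(d^2-r(d-r))$ this forces the source of $R_{IJK}$ and the target $\gl(d)^2$ both to have complex dimension $2d^2$. Hence $R_{IJK}$ being dominant is equivalent to its Jacobian $\det dR_{IJK}$ not being identically zero.

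I will compute $dR_{IJK}$ at a base point $(0,0,0,p_0,q_0)$ with $p_0 = \left(\begin{smallmatrix}a & a''\\0 & a'\end{smallmatrix}\right)$ and $q_0 = \left(\begin{smallmatrix}b & b''\\0 & b'\end{smallmatrix}\right) \in M_{d-r,r}^\perp$ by linearizing \eqref{eq:horn action map}. In direction $(X, Y, Z, \dot p, \dot q)$ the increment is $(\widetilde X p_0 - p_0 \widetilde Z + \dot p,\ \widetilde Y q_0 - q_0 \widetilde Z + \dot q)$ with $\widetilde X = \left(\begin{smallmatrix}0 & 0\\X & 0\end{smallmatrix}\right)$, etc. Splitting the target as $\calH(H_{IJK}<0) \oplus \calH(H_{IJK}\geq 0) = M_{d-r,r}^2 \oplus (M_{d-r,r}^\perp)^2$, the $\calH(H_{IJK}<0)$-components read $(Xa - a'Z,\ Yb - b'Z)$, which is precisely $V_{IJK}(a,b,a',b')(X,Y,Z)$ from \eqref{eq:horn tangent map} and depends only on the block-diagonal data $(a,b,a',b')$. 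The $\calH(H_{IJK}\geq 0)$-components equal $(\dot p,\dot q)$ plus correction terms linear in $(X,Y,Z)$ and involving only $a''$ and $b''$. With source decomposed as $(T_I \oplus T_J \oplus T_K) \oplus (M_{d-r,r}^\perp)^2$, the differential therefore has block-lower-triangular matrix
\[
  dR_{IJK}\big|_{(0,0,0,p_0,q_0)} \;=\; \begin{pmatrix} V_{IJK}(a,b,a',b') & 0 \\ * & \id \end{pmatrix},
\]
whose determinant is $\delta_{IJK}(a,b,a',b')$. Under \eqref{eq:horn isos two} this is $\delta_{IJK}$ evaluated at the image of $(p_0,q_0)$ under the canonical projection $\calH(H_{IJK}\geq 0) \to \calH(H_{IJK}=0)$.

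To finish, I use $N_-(H_{IJK}<0)$-equivariance of $R_{IJK}$: since $R_{IJK}(g_1 g,\phi) = \Pi(g_1)\, R_{IJK}(g,\phi)$ and since $N_-(H_{IJK}<0)$ is unipotent (so left translation has unit Jacobian), $\det dR_{IJK}|_{(g,\phi)}$ and $\det dR_{IJK}|_{(e,\phi)}$ agree up to a nonzero scalar depending only on $g$. Combining this with the preceding paragraph, the vanishing locus of $\det dR_{IJK}$ is exactly $N_-(H_{IJK}<0) \times \{\phi : \delta_{IJK}(\phi_0)=0\}$, where $\phi_0 \in \calH(H_{IJK}=0)$ is the $\calH(H_{IJK}=0)$-component of $\phi$. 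Dominance (\cref{lem:dominant}) forces this locus to be a proper subvariety, so $\delta_{IJK}$ cannot vanish identically on $\calH(H_{IJK}=0)$, proving that $H_{IJK}$ is a Ressayre element. The main obstacle is the block bookkeeping in the middle paragraph --- specifically, verifying that the $(1,2)$-block vanishes (because $\dot p,\dot q \in M_{d-r,r}^\perp$ have zero lower-left block) and that the $(2,2)$-block is the identity; once these are in hand, the conclusion is mechanical.
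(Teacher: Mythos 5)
Your argument is correct and is essentially the paper's own proof, just carried out in full detail: both arguments rest on the block structure of $dR_{IJK}$ relative to the splitting $\gl(d)^2 = \calH(H_{IJK}<0)\oplus\calH(H_{IJK}\ge 0)$, which the paper sketches (``it is not hard to see'') and you write out. Your two refinements are worth noting but do not constitute a different route: (i) you use the trace condition to equate source and target dimensions, so that dominance of $R_{IJK}$ is \emph{equivalent} to non-vanishing of the Jacobian determinant, rather than invoking Sard's theorem to produce a point of surjectivity; and (ii) you compute $dR_{IJK}$ only at base points $(0,0,0,p_0,q_0)$ and use left-$N_-(H_{IJK}<0)$-equivariance of the action map (with unipotence giving trivial Jacobian factors) to extend the identity $\det dR_{IJK} = \delta_{IJK}\circ(\text{projection to }\calH(H_{IJK}=0))$ to all of the source, whereas the paper works at an arbitrary point. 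Both buy you a cleaner, more checkable proof; the block bookkeeping you flag as ``the main obstacle'' is indeed the only place where something has to be verified, and your computation of it is correct.
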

\begin{proof}
  By Sard's theorem and \cref{lem:dominant}, there exists a point where the differential of $R_{IJK}$ is surjective.
  By writing the differential of \eqref{eq:horn action map} in coordinates and comparing with \eqref{eq:horn tangent map}, it is not hard to see that surjectivity of the former at some point $(x,y,z,p,q)$ implies surjectivity of the tangent map at $(a,b,a',b')$, where $p = \left( \begin{smallmatrix} a & a'' \\ 0 & a' \end{smallmatrix} \right)$ and $q = \left( \begin{smallmatrix} b & b'' \\ 0 & b' \end{smallmatrix} \right).$
  Thus $H_{IJK}$ is a Ressayre element.
\end{proof}

\begin{example}
\label{ex:invariant}
Let $d=6$, $r=3$, and $I = J = K = \{ 1 < 3 < 5 \}$, so that $I^c = J^c = K^c = \{ 2 < 4 < 6 \}$.
Then $\lambda_I = \lambda_J = \lambda_K = (3,2,1)$ and $\lambda_{I^c} = \lambda_{J^c} = \lambda_{K^c} = (2,1,0)$, and the associated Littlewood-Richardson coefficients are given by
\begin{align*}
&\quad\dim \big( V_{(3,2,1)} \otimes V_{(3,2,1)} \otimes V_{(-3,-4,-5)} \big)^{\GL(3)} \\
&= \dim \big( V_{(2,1,0)} \otimes V_{(2,1,0)} \otimes V_{(-1,-2,-3)} \big)^{\GL(3)} = 2.
\end{align*}
In \cref{tab:covariants} we list a set of homogeneous generators of the algebra of lowest weight vectors in $R(\gl(3)^2)$.
Note that $F_1(x, y) := f_1(x) f_2(y) h(y, x)$ and $F_2(y, x) := F_1(x, y)$ span the two-dimensional subspace of weight $((-2,-1,0)$, $(-2,-1,0)$, $(1,2,3))$.
An explicit calculation best left to a computer algebra system \cite{sage} verifies that
\[ \delta_{IJK}(a,b,a',b') = -f_3(a) f_3(b) \left( F_1(a, b) F_2(a', b') - F_2(a, b) F_1(a', b') \right). \]
Therefore, in agreement with \eqref{eq:kappa horn} $\delta_{IJK}$ is indeed a lowest weight vector of weight
\begin{align*}
  -\kappa_{IJK} =~&((-3,-2,-1), (-3,-2,-1), (3,4,5)) \\
  \oplus~&((-2,-1,0),(-2,-1,0),(1,2,3)).
\end{align*}
Note that if we consider $a'$ and $b'$ as coefficients rather than indeterminates, $\delta_{IJK}(-,a',b')$ spans the two-dimensional subspace of lowest weight vectors of weight $((-3,-2,-1),(-3,-2,-1),(3,4,5))$ as we vary $a'$ and $b'$ over $\mathfrak{gl}(3)$.
Likewise, $\delta_{IJK}(a,b,-)$ spans the subspace of lowest weight vectors of weight $((-2,-1,0),(-2,-1,0),(1,2,3))$ if we instead vary $a$ and $b$.
\end{example}

\begin{table}
\begin{center}
\begin{tabular}{ll}
  \toprule
  \multicolumn{1}{c}{Generators} & \multicolumn{1}{c}{Weight} \\
  \midrule
  $f_1(x) = x_{1,3}$ & $((-1,0,0), (0,0,0), (0,0,1))$ \\
  $f_2(x) = \det \left(\begin{smallmatrix}x_{1,2} & x_{2,3} \\ x_{2,2} & x_{2,3}\end{smallmatrix}\right)$ & $((-1,-1,0), (0,0,0), (0,1,1))$ \\
  $f_3(x) = \det x$ & $((-1,-1,-1), (0,0,0), (1,1,1))$ \\
  $g(x,y) = x_{1,2} y_{1,3} - y_{1,2} x_{1,3}$ & $((-1,0,0), (-1,0,0), (0,1,1))$ \\
  $h(x,y) = x_{1,1} \det \left(\begin{smallmatrix} y_{1,2} & y_{1,3} \\ y_{2,2} & y_{2,3} \end{smallmatrix}\right)$ & $((-1,0,0), (-1,-1,0), (1,1,1))$ \\
          $\quad- x_{1,2} \det \left(\begin{smallmatrix} y_{1,1} & y_{1,3} \\ y_{2,1} & y_{2,3} \end{smallmatrix}\right)
          + x_{1,3} \det \left(\begin{smallmatrix} y_{1,1} & y_{1,2} \\ y_{2,1} & y_{2,2} \end{smallmatrix}\right)$ & $((-1,0,0), (-1,-1,0), (1,1,1))$ \\
\bottomrule
\end{tabular}
\end{center}
\caption{A set of homogeneous generators of the algebra of $N_-(3)^3$-invariant polynomials on $\mathfrak{gl}(3)^2$ together with their weights (up to permutation $x \leftrightarrow y$).}
\label{tab:covariants}
\end{table}

The phenomenon just observed in \cref{ex:invariant} is in fact of a general nature.
In \cite{HoweTanWillenbring2005,HoweLee2007} for any triple of Young diagrams $(D,E,F)$ a matrix determinant $\Delta_{(D,E,F),(A,B)}(X,Y)$  had been constructed that likewise depends on two pairs of matrices (cf.~\cite{HoweLee2012,Lee13}).
For each choice of $A$ and $B$, this determinant is a highest weight vector of weight $(D^T, E^T, F^T)$, where the superscript denotes the transpose diagrams.
Moreover, as one varies $A$ and $B$, the $\Delta_{(D,E,F),(A,B)}$ span the corresponding subspace of highest weight vectors in the \emph{tensor product algebra} of $\GL(r)$, whose dimension is equal to the corresponding Littlewood-Richardson coefficient
$\dim ( V_{D^T} \otimes V_{E^T} \otimes V_{F^T} )^{\GL(r)}$.
In fact, their construction is related by a trivial transform to our determinant polynomial:
\begin{equation}
\label{eq:howe mapping}
\begin{aligned}
  \Delta_{(D,E,F),(A,B)}(X, Y) =~&\delta_{IJK}(w_0 X^{-1}, w_0 Y^{-1}, -A^T w_0, -B^T w_0) \\
  &(\det XY)^{d-r}
\end{aligned}
\end{equation}
where $(D,E,F) = (\lambda_{I^c}, \lambda_{J^c}, \lambda_{K^c}^* + r \chi_{d-r})$.
Equation \eqref{eq:howe mapping} can be shown by manual inspection, relating the matrix elements of the tangent map \eqref{eq:tangent map} with the matrix constructed by Howe et al.
This gives a geometric interpretation of the invariants constructed in \cite{HoweTanWillenbring2005} -- namely, as the determinant of the tangent map \eqref{eq:horn tangent map} associated with a Ressayre element --, and it also serves as an alternative, second proof that $\delta_{IJK} \neq 0$ is implied by the Horn condition via \eqref{eq:horn invariants nonzero}.

\bigskip

\noindent\textbf{Acknowledgments.} We would like to thank Velleda Baldoni, Soo Teck Lee, Nicolas Ressayre, and Jonathan Skowera for pleasant discussions.
We acknowledge the National Technological University and the Institute for Mathematical Sciences at the National University of Singapore for their hospitality during the program on Inverse Moment Problems, where this work had been initiated.
MW acknowledges financial support by the Swiss National Science Foundation (grants PP00P2-128455, 20CH21-138799 (CHIST-ERA project CQC)),
the Swiss National Center of Competence in Research `Quantum Science and Technology (QSIT)',
the Swiss State Secretariat for Education and Research supporting COST action MP1006,
the European Research Council under the European Union's Seventh Framework Programme (FP/2007--2013)/ERC Grant Agreement no.~337603,
the Simons Foundation,
and FQXi.

\appendix
\section{On the Dimension of the Kronecker Cone}
\label{appendix}

\begin{lemma}
  Let $1 < a \leq b \leq c \leq ab$. Then there exists an operator $M \colon \CC^a \otimes \CC^b \rightarrow \CC^c$ such that the following is true:
  For any $U_A \in \U(a)$, $U_B \in \U(b)$, and $U_C \in \U(c)$,
  \begin{equation}
  \label{eq:stabilization}
     U_C M (U_A \otimes U_B) = M
  \end{equation}
  implies that $U_A$, $U_B$, and $U_C$ are scalars (i.e., proportional to the identity matrix).
\end{lemma}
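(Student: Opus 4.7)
I plan to identify the operator $M \colon \CC^a \otimes \CC^b \to \CC^c$ with a tensor $\psi_M$ in $\CC^a \otimes \CC^b \otimes \CC^c$ using the inner-product dualities on each factor. The set of $(U_A, U_B, U_C)$ satisfying $U_C M(U_A \otimes U_B) = M$ forms a closed subgroup of $K := \U(a) \times \U(b) \times \U(c)$ and always contains the 2-torus
\[ Z := \{(\zeta_A\id,\zeta_B\id,\zeta_C\id) : \zeta_A\zeta_B\zeta_C = 1\}, \]
so the lemma amounts to producing $M$ where this subgroup equals $Z$ exactly.

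I would construct $\psi_M$ through the singular value decomposition along the $AB\,|\,C$ bipartition. Fix distinct positive reals $\sigma_1 > \dots > \sigma_c > 0$, let $g_1,\dots,g_c$ be the standard basis of $\CC^c$, and let $\phi_1,\dots,\phi_c$ be an orthonormal family in $\CC^a \otimes \CC^b$ (to be specified below); the operator $M$ then sends $v \mapsto \sum_k \sigma_k\langle\phi_k, v\rangle\, g_k$. Because the $\sigma_k$ are distinct, uniqueness of SVD forces any stabilizer element $(U_A, U_B, U_C)$ to satisfy $U_C g_k = \lambda_k g_k$ and $(U_A \otimes U_B)\phi_k = \overline{\lambda_k}\phi_k$ for some unit complex numbers $\lambda_k$. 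Writing each $\phi_k$ as an $a\times b$ matrix $C_k$, the eigenvector condition reads $U_A C_k U_B^T = \overline{\lambda_k}\, C_k$, and a short manipulation using unitarity yields that $U_A$ commutes with $C_k C_k^\dagger$ and $\overline{U_B}$ commutes with $C_k^\dagger C_k$ for every $k$.

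The key step is then to choose the orthonormal family $\{C_k\}$ so that $\{C_k C_k^\dagger\}_k$ has only scalar commutant in $\mathfrak{gl}(a)$ and $\{C_k^\dagger C_k\}_k$ has only scalar commutant in $\mathfrak{gl}(b)$. Both conditions together force $U_A = \zeta_A\id$ and $U_B = \zeta_B\id$, whereupon the eigenvector relation makes $\lambda_k = \overline{\zeta_A\zeta_B}$ independent of $k$, so $U_C = (\zeta_A\zeta_B)^{-1}\id$ is scalar and $\zeta_A\zeta_B\zeta_C = 1$. The hypothesis $1 < a \leq b \leq c \leq ab$ enters precisely here: $c \leq ab$ ensures that $c$ orthonormal matrices exist in the space of $a \times b$ matrices, while $a, b \leq c$ ensures enough matrices to satisfy the two commutant conditions (a generic pair of Hermitian matrices in $\mathfrak{gl}(n)$ already has trivial joint commutant for $n \geq 2$). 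Both commutant conditions are Zariski-open and nonempty on the Stiefel manifold of orthonormal $c$-tuples, so their intersection is nonempty.

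The principal obstacle I anticipate is verifying the joint realizability of the two commutant conditions explicitly. A concrete route is to first select $C_1, C_2$ so that both pairs $\{C_1 C_1^\dagger, C_2 C_2^\dagger\}$ and $\{C_1^\dagger C_1, C_2^\dagger C_2\}$ have trivial joint commutants (a generic condition on Hermitian pairs, amounting to the absence of a common invariant subspace), and then complete to an orthonormal family of size $c$ in the space of $a \times b$ matrices. Once such a family is in place, the analysis above shows the stabilizer of $\psi_M$, and hence the set of $(U_A, U_B, U_C)$ satisfying $U_C M(U_A \otimes U_B) = M$, is exactly $Z$, giving the conclusion of the lemma.
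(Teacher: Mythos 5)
Your reduction via the singular value decomposition is sound: if $M$ has distinct nonzero singular values, then any $(U_A,U_B,U_C)$ satisfying \eqref{eq:stabilization} must preserve the singular decomposition, and the resulting phase relations plus the commutation of $U_A$ (resp.\ $\overline{U_B}$) with the family $\{C_kC_k^\dagger\}$ (resp.\ $\{C_k^\dagger C_k\}$) are correct. This is essentially the paper's own observation that $M^\dagger M$ commutes with $U_A\otimes U_B$, phrased through the SVD. The step where the argument breaks down is exactly where you flagged the ``principal obstacle'': verifying that an orthonormal family with both trivial joint commutants actually exists.

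The ``concrete route'' you offer---picking $C_1,C_2$ so that \emph{both} pairs $\{C_1C_1^\dagger,C_2C_2^\dagger\}$ and $\{C_1^\dagger C_1,C_2^\dagger C_2\}$ have trivial joint commutant---cannot work in general. Each $C_k^\dagger C_k$ is a $b\times b$ positive semidefinite matrix of rank at most $a$, hence has kernel of dimension at least $b-a$. When $b>2a$ (an allowed case, e.g.\ $a=2,\ b=5,\ c=5$), a dimension count gives $\dim\bigl(\ker C_1^\dagger C_1\cap\ker C_2^\dagger C_2\bigr)\ge b-2a>0$. Both $C_1^\dagger C_1$ and $C_2^\dagger C_2$ are block diagonal with respect to this common kernel $W$ and its orthogonal complement, vanishing on $W$; therefore any operator acting as a nontrivial scalar on $W$ and as the identity on $W^\perp$ lies in the joint commutant, which is thus strictly larger than $\CC\id$. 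So the assertion that this is ``a generic condition on Hermitian pairs'' is false here, precisely because $C_k^\dagger C_k$ is never a generic $b\times b$ Hermitian matrix when $a<b$. The remaining fallback---that each commutant condition is Zariski-open and nonempty on the Stiefel manifold of orthonormal $c$-tuples---is plausible but left unproved; nonemptiness is the entire content of the lemma, so an appeal to genericity without a witness is circular.

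By contrast, the paper produces an explicit $M$: it chooses $b$ orthonormal weight-like vectors $v_1,\dots,v_b\in\CC^a\otimes\CC^b$ (most of them product vectors $e_i\otimes e_j$), sets $Mv_j=\alpha_j e_j$ with $|\alpha_j|^2$ pairwise distinct and distinct from the eigenvalues of $M^\dagger M$ on $V^\perp=\Span\{v_j\}^\perp$, and maps $V^\perp$ onto $\Span\{e_{b+1},\dots,e_c\}$, using $c\le ab$ for surjectivity. Commutation of $U_A\otimes U_B$ with $M^\dagger M$ then forces each $\CC v_j$ to be invariant, and because the first $b-1$ of the $v_j$ are product vectors, $U_A$ and $U_B$ are diagonal; the last vector $v_b$ then forces both to be scalar, and surjectivity of $M$ handles $U_C$. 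If you want to keep your SVD framing, you would need to replace the genericity claim with an analogous explicit witness --- e.g.\ by engineering the right singular vectors to include enough product vectors, which is essentially what the paper does.
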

\begin{proof}
  Let $e_i$ denote the standard basis vectors in any $\CC^d$. We define the following $b$ orthonormal vectors in $\CC^a \otimes \CC^b$:
  \begin{align*}
    v_1 &:= e_1 \otimes e_1, \dots, v_{a-1} := e_{a-1} \otimes e_{a-1}, \\
    v_a &:= e_{a-1} \otimes e_a, \dots, v_{b-1} := e_{a-1} \otimes e_{b-1}, \\
    v_b &:= \sum_{\text{all other}} e_i \otimes e_j.
  \end{align*}
  Set $V := \Span_\CC \{ v_j \}$.
  Define an operator $M$ with $M v_j = \alpha_j e_j$ for some non-zero coefficients $\alpha_j$ to be determined later and which sends $V^\perp$ surjectively to $\Span \{ e_{b+1}, \dots, e_c \}$.
  This is always possible since $\dim V^\perp = ab - b \geq c - b$, and the resulting operator $M$ is surjective.
  Note that $M^\dagger M$ is the direct sum of two Hermitian submatrices; the first acts on $V$, sending $v_j \mapsto \abs{\alpha_j}^2 v_j$, the second acts on $V^\perp$ and is independent of our choice of $\alpha_j$. Thus we may choose the $\alpha_j$ such that their absolute values squared are pairwise distinct and also distinct from the eigenvalues of the second submatrix.

  Now suppose that $U_A$, $U_B$, and $U_C$ are unitaries such that \eqref{eq:stabilization} holds. Then,
  \[ M^\dagger M = (U_A \otimes U_B)^\dagger M^\dagger M (U_A \otimes U_B), \]
  i.e., $M^\dagger M$ commutes with $U_A \otimes U_B$.
  But then our choice of $\alpha_j$ implies that $U_A \otimes U_B$ leaves each of the eigenspaces $\CC v_1, \dots, \CC v_b$ stable.
  From the first $b-1$ eigenvectors, which are tensor products, it follows that $U_A e_1 \in \CC e_1, \dots, U_A e_{a-1} \in \CC e_{a-1}$ as well as $U_B e_1 \in \CC e_1, \dots, U_B e_{b-1} \in \CC e_{b-1}$ (this requires $a > 1$).
  This implies that $U_A$ and $U_B$ are diagonal, since they are unitaries.
  Thus the fact that $v_b$ is also an eigenvector shows that $U_A$ and $U_B$ in fact act by scalars.
  Finally, observe that this in turn implies that $U_C M \in \CC M$.
  Since $M$ is surjective, we conclude that also $U_C$ acts by a scalar.
\end{proof}

\begin{corollary}
\label{cor:maxdim}
  Let $1 < a \leq b \leq c \leq ab$. Then the Kronecker cone $C(a, b, c)$ is maximal-dimensional.
\end{corollary}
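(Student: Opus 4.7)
The strategy is to deduce the corollary from the preceding lemma by invoking the criterion stated at the end of \cref{sec:notation}: the moment cone $C_K$ has maximal dimension if and only if there exists a vector in $\calH$ with finite $K$-stabilizer. Hence it suffices to exhibit a tensor in $\calH = \CC^a \otimes \CC^b \otimes \CC^c$ whose stabilizer under $K = \SU(a) \times \SU(b) \times \SU(c) \times \U(1)$ is finite.

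The first step is to reinterpret the operator $M \colon \CC^a \otimes \CC^b \rightarrow \CC^c$ produced by the lemma as a vector in $\calH$. Using the canonical isomorphism between $\CC^a \otimes \CC^b \otimes \CC^c$ and the space of linear maps $\CC^a \otimes \CC^b \rightarrow \CC^c$ (with the first two factors dualized by the Hermitian inner products), we obtain a tensor $v_M \in \calH$ such that the tensor product action $(U_A, U_B, U_C) \cdot v_M$ corresponds to the operator $U_C\, M\, (U_A^{-1} \otimes U_B^{-1})$.

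Next, suppose that $(U_A, U_B, U_C, e^{i\theta}) \in K$ stabilizes $v_M$. Translated back to operators, this amounts to $e^{i\theta}\, U_C\, M\, (U_A^{-1} \otimes U_B^{-1}) = M$, or equivalently $(e^{i\theta} U_C)\, M\, (U_A^{-1} \otimes U_B^{-1}) = M$. Setting $U_A' := U_A^{-1}$, $U_B' := U_B^{-1}$, and $U_C' := e^{i\theta} U_C$ (which are unitaries — membership in $\SU$ is not required by the lemma), we are in the hypothesis of the preceding lemma, so each of $U_A', U_B', U_C'$ must be a scalar matrix. In particular $U_A$ and $U_B$ are scalar matrices in $\SU(a)$ and $\SU(b)$, hence $a$-th and $b$-th roots of unity times the identity, respectively, and analogously $U_C$ together with $e^{i\theta}$ is constrained.

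Finally, the resulting set of quadruples $(\omega_a I, \omega_b I, \omega_c I, e^{i\theta})$ with $\omega_a^a = \omega_b^b = \omega_c^c = 1$ is already finite, and the additional condition that the total scalar acting on $v_M$ equals $1$ only further restricts this finite set. Therefore $v_M$ has finite stabilizer in $K$, and the corollary follows from the criterion of \cref{sec:notation}. The substantive content lies entirely in the preceding lemma; this corollary is just the translation through the stabilizer/dimension criterion, with the only bookkeeping being the passage between tensors and operators and the handling of the $\U(1)$ phase factor.
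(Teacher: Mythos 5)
Your proposal is correct and carries out the proof the paper clearly intends (it is not written out explicitly): exhibit a vector with finite $K$-stabilizer, which suffices by the criterion at the end of \cref{sec:notation}, and obtain such a vector from the operator $M$ of the preceding lemma. The translation between tensors in $\calH$ and operators $\CC^a \otimes \CC^b \to \CC^c$ is the right move, and the lemma applies because one only needs the three matrices fed into it to be unitary, not special unitary.

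One small slip in the bookkeeping at the end: the set of quadruples $(\omega_a \id, \omega_b \id, \omega_c \id, e^{i\theta})$ with $\omega_a^a = \omega_b^b = \omega_c^c = 1$ is \emph{not} yet finite, since $e^{i\theta}$ is still free to range over all of $\U(1)$. What makes the stabilizer finite is exactly the ``total scalar equals $1$'' equation (which is the stabilizer condition itself, not an additional constraint): given the roots of unity $\omega_a, \omega_b, \omega_c$, that equation determines $e^{i\theta}$ uniquely, and so the stabilizer has at most $abc$ elements. The ingredient is already present in your write-up, but the phrase ``only further restricts this finite set'' gets the logic backwards. This does not affect the validity of the argument.
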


In the case where $a=b=c$ the following lemma strengthens \cref{cor:maxdim}:

\begin{lemma}
\label{lem:origin ball}
  Let $\tau_d = \id/d \in \RR^d$.
  The only facets of $C(d,d,d)$ that contain the point $(\tau_d,\tau_d,\tau_d)$ are the trivial facets.
\end{lemma}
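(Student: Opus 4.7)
The plan is to establish the contrapositive: I will show that every non-trivial facet $(H,\lambda) \geq 0$ of $C(d,d,d)$ has $H = (H_A,H_B,H_C,z)$ with $z \neq 0$, and therefore misses $(\tau_d,\tau_d,\tau_d)$. Since each $H_X \in i\mathfrak{t}(d)$ is tracefree and $|(\tau_d,\tau_d,\tau_d)| = 1$, we have $(H_X,\tau_d) = 0$ and
\[
(H,(\tau_d,\tau_d,\tau_d)) = (H_A,\tau_d) + (H_B,\tau_d) + (H_C,\tau_d) + z \cdot 1 = z.
\]
Hence the facet contains $(\tau_d,\tau_d,\tau_d)$ iff $z = 0$, and it suffices to show that every $H = (H_A,H_B,H_C,0)$ giving a valid inequality on $C(d,d,d)$ is already a non-negative combination of the trivial coroot normals $(H_\alpha,0,0,0)$, $(0,H_\alpha,0,0)$, $(0,0,H_\alpha,0)$ for positive roots $\alpha$ of $\SU(d)$: any such $H$ defines a redundant inequality and hence cannot be the normal of a non-trivial facet.

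I will prove this by contradiction: suppose $H = (H_A,H_B,H_C,0)$ is valid on $C(d,d,d)$ but, say, $H_A$ does not lie in the cone spanned by the positive coroots of $\SU(d)^A$. Since this cone is dual to the dominant chamber, a small dominant perturbation of $\tau_d$ yields a dominant probability distribution $\lambda^0_A$ (so $|\lambda^0_A| = 1$ with all entries positive) satisfying $(H_A,\lambda^0_A) < 0$. The key step will then be to exhibit a state $\psi \in \CC^d \otimes \CC^d \otimes \CC^d$ whose one-body marginals are exactly $(\lambda^0_A,\tau_d,\tau_d)$; this places $(\lambda^0_A,\tau_d,\tau_d) \in C(d,d,d)$ and yields
\[
(H,(\lambda^0_A,\tau_d,\tau_d)) = (H_A,\lambda^0_A) + 0 + 0 + 0 = (H_A,\lambda^0_A) < 0,
\]
contradicting validity. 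The cases in which $H_B$ or $H_C$ fails to lie in its coroot cone are handled analogously by permutation symmetry.

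I expect the main technical obstacle to be this explicit construction of $\psi$; my plan is to combine a Schmidt-type decomposition in the $A \mid BC$ cut with an orthonormal basis of maximally entangled states in the $B \mid C$ cut. Concretely, fix such a basis $\phi_1,\ldots,\phi_d$ of $\CC^d \otimes \CC^d$ (for instance the generalized Bell basis $\phi_i = \tfrac{1}{\sqrt d} \sum_k \omega^{ik}\, e_k \otimes e_k$ with $\omega$ a primitive $d$th root of unity) and set
\[
\psi := \sum_{i=1}^d \sqrt{\lambda^0_{A,i}}\; e_i \otimes \phi_i.
\]
A direct computation will then confirm that $\lVert\psi\rVert = 1$; that $\psi$ has a Schmidt decomposition in the $A\mid BC$ cut with squared coefficients $\lambda^0_{A,i}$, forcing $\rho_A = \operatorname{diag}(\lambda^0_A)$; and that the maximal entanglement of each $\phi_i$ yields $\rho_B = \rho_C = \sum_i \lambda^0_{A,i}\,\tau_d = \tau_d$, as required.
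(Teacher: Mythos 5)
Your proposal is correct and takes essentially the same approach as the paper: both hinge on the fact that for any dominant probability vector $\lambda^0_A$ close to $\tau_d$ there is a pure state on $\CC^d\otimes\CC^d\otimes\CC^d$ with marginal spectra $(\lambda^0_A,\tau_d,\tau_d)$, and both then invoke permutation symmetry to conclude that $C(d,d,d)$ locally coincides with the Weyl chamber near $(\tau_d,\tau_d,\tau_d)$. The paper cites a reference for the existence of such a state and then phrases the conclusion as a direct local-neighborhood argument combined with convexity, whereas you make the construction self-contained via the generalized Bell basis (your verification that $\rho_A=\operatorname{diag}(\lambda^0_A)$ and $\rho_B=\rho_C=\tau_d$ is correct) and repackage the last step as a contradiction using the duality between the positive coroot cone and the dominant chamber; these are cosmetic rather than substantive differences.
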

\begin{proof}
  For any $\lambda_A \in i \mathfrak t^*$ with $\lambda_d \geq 0$ and $\abs\lambda = 1$, there exists a unit vector $\psi \in \CC^d \otimes \CC^d \otimes \CC^d$ whose one-body reduced density matrices have spectra $(\lambda_A,\tau_d,\tau_d)$ \cite{rectangular11}.
  It follows that
  $(\tau_d + \varepsilon_A, \tau_d, \tau_d) \in C(d,d,d)$
  for any perturbation $\varepsilon_A \in i \mathfrak t(d)^*_+$ whose components are small enough in absolutely value.
  By permutation symmetry and convexity, we obtain that
  \[ (\tau_d + \varepsilon_A, \tau_d + \varepsilon_B, \tau_d + \varepsilon_C) \in C(d,d,d) \]
  for any triple of small perturbations $\varepsilon_A, \varepsilon_B, \varepsilon_C$.
  Therefore, the only constraints in the vicinity of $(\tau_d,\tau_d,\tau_d)$ are the Weyl chamber inequalities.
\end{proof}

\bibliographystyle{unsrtnat}
\bibliography{momentcones}

\address{Mich\'ele Vergne: Universit\'e Paris 7 Diderot, Institut Math\'{e}matique de Jussieu, Sophie Germain, Case 75205, Paris Cedex 13, France \\
\email{michele.vergne@imj-prg.fr}}

\address{Michael Walter: Institute for Theoretical Physics, Stanford University, Stanford, CA 94305, USA;
Institute for Theoretical Physics, ETH Zurich, Wolfgang-Pauli-Str.~27, 8093 Z\"urich, Switzerland\\
\email{michael.walter@stanford.edu}}

\end{document}